\def\eqref#1{equation~\ref{#1}}
\def\1{\bm{1}}
\DeclareMathAlphabet{\mathsfit}{\encodingdefault}{\sfdefault}{m}{sl}
\SetMathAlphabet{\mathsfit}{bold}{\encodingdefault}{\sfdefault}{bx}{n}
\DeclareMathOperator*{\argmax}{arg\,max}
\DeclareMathOperator*{\argmin}{arg\,min}
\newcommand{\dd}{{\rm d}}
\renewcommand{\algorithmiccomment}[1]{\bgroup//~#1\egroup}
\newcommand{\newreptheorem}[2]{
	\newtheorem*{rep@#1}{\rep@title}
	\newenvironment{rep#1}[1]
		{\def\rep@title{#2 \ref*{##1}}\begin{rep@#1}}
		{\end{rep@#1}}
}
\newtheorem{theorem}{Theorem}[section]
\newtheorem{lemma}[theorem]{Lemma}
\newtheorem{fact}[theorem]{Fact}
\theoremstyle{definition}
\newtheorem{definition}[theorem]{Definition}
\newtheorem{assumption}[theorem]{Assumption}
\theoremstyle{remark}
\title{Towards Global Optimality in Cooperative MARL with the Transformation And Distillation Framework}
\author{Jianing Ye}
\affiliation{
  \institution{Washington University in St. Louis}
  \city{St. Louis}
  \country{the United States}}
\email{jianing.ye@wustl.edu}
\author{Chenghao Li}
\affiliation{
  \institution{Institute for Interdisciplinary Information Sciences, Tsinghua University}
  \city{Beijing}
  \country{China}}
\email{lich18@mails.tsinghua.edu.cn}
\author{Yongqiang Dou}
\affiliation{
  \institution{Department of Computer Science and Technology, Tsinghua University}
  \city{Beijing}
  \country{China}}
\email{dyq21@mails.tsinghua.edu.cn}
\author{Jianhao Wang}
\affiliation{
  \institution{Institute for Interdisciplinary Information Sciences, Tsinghua University}
  \city{Beijing}
  \country{China}}
\email{wjh19@mails.tsinghua.edu.cn}
\author{Guangwen Yang}
\affiliation{
  \institution{Department of Computer Science and Technology, Tsinghua University}
  \city{Beijing}
  \country{China}
}
\email{ygw@tsinghua.edu.cn}
\author{Chongjie Zhang}
\affiliation{
  \institution{Washington University in St. Louis}
  \city{St. Louis}
  \country{the United States}}
\email{chongjie@wustl.edu}
\begin{abstract}
  Decentralized execution is  one core demand in multi-agent reinforcement learning (MARL). 
  Recently, most popular MARL algorithms have adopted decentralized policies to enable decentralized execution, and use gradient descent as the optimizer.
  However, there is hardly any theoretical analysis of these algorithms taking the optimization method into consideration, and we find that various popular MARL algorithms with decentralized policies are suboptimal in toy tasks when gradient descent is chosen as their optimization method.
  In this paper, we theoretically analyze two common classes of algorithms with decentralized policies -- multi-agent policy gradient methods and value-decomposition methods, and prove their suboptimality when gradient descent is used.
  To address the suboptimality issue, we propose the Transformation And Distillation (TAD) framework, which reformulates a multi-agent MDP as a special single-agent MDP with a sequential structure and enables decentralized execution by distilling the learned policy on the derived "single-agent" MDP.
  The approach is a two-stage learning paradigm that addresses the optimization problem in cooperative MARL, providing optimality guarantee with decent execution performance.
  Empirically, we implement TAD-PPO based on PPO, which can theoretically perform optimal policy learning in the finite multi-agent MDPs and shows significant outperformance on a large set of cooperative multi-agent tasks, from matrix game, hallway task, to StarCraft II, and football game. 
\end{abstract}
\keywords{Multi-agent reinforcement learning, Centralized training with decentralized execution, Transformation and distillation, PPO, TAD-PPO}
\newcommand{\BibTeX}{\rm B\kern-.05em{\sc i\kern-.025em b}\kern-.08em\TeX}
\begin{document}


\pagestyle{fancy}
\fancyhead{}


\maketitle
\let\thefootnote\relax\footnotetext{Preprint.} 


\section{Introduction}
\label{sec:intro}
Cooperative multi-agent reinforcement learning (MARL) is a promising approach to a variety of real-world applications, such as sensor networks \citep{zhang2011coordinated}, traffic light control \citep{van2016coordinated}, and multi-robot formation \citep{alonso2017multi}.
One major challenge in MARL is the scalability issue when the number of agents grows.
To address this challenge, the paradigm of centralized training with decentralized execution (CTDE) \citep{Frans2011Decen,Kraemer2016MultiagentRL} addresses this issue by permitting agents the access of global information during training, while executing in a decentralized manner without communication.
On the other hand, gradient descent and its variants are broadly used in deep MARL, usually as the default optimizer to optimize the loss function.

Recently, many CTDE algorithms have been proposed to enable decentralized execution by directly learning decentralized policies with gradient descent as the optimizer.
COMA \citep{Foerster2018CounterfactualMP} and MAPPO \citep{yu2021surprising} factorize the joint policy as the direct product of decentralized policies. Then they use their critic networks to estimate the policy gradient and update the policy parameters.
VDN \citep{Sunehag2018ValueDecompositionNF}, QMIX \citep{Rashid2018QMIXMV}, and QPLEX \citep{Wang2021QPLEXDD} factorize the joint Q-function into individual Q-functions with the \textit{Individual-Global-Max} (IGM) constraint \citep{Son2019QTRANLT} to enable standard TD-learning, where the decentralized policies are the greedy policies w.r.t. the individual Q-functions.
However, there is barely any theoretical analysis of CTDE algorithms taking the optimization method into consideration, and we find that many of them would get stuck in some local minima and thus lose their optimality guarantee in toy tabular tasks.

To characterize this problem, we theoretically analyze two sub-classes of state-of-the-art cooperative MARL: (1) multi-agent policy gradient algorithms (MA-PG)  \citet{Foerster2018CounterfactualMP,yu2021surprising}; and (2) value-decomposition algorithms (VD) \citet{Sunehag2018ValueDecompositionNF, Rashid2018QMIXMV, Wang2021QPLEXDD}.
We prove that these two classes of algorithms are suboptimal with mild assumptions.
Intuitively, this is because directly learning decentralized policies by MA-PG or VD restricts the parameter space to have a ``decentralized structure", which would create numerous local optima, while gradient descent, as a local search in the parameter space, is hard to escape from a local optimum.

To further address this optimality problem, we present a Transformation And Distillation (TAD) framework. The TAD framework is two-stage. The first stage of TAD reformulates a multi-agent MDP as an equivalent single-agent MDP with a sequential decision-making structure.
With this transformation, any off-the-shelf single-agent reinforcement learning (SARL) algorithm can be readily adapted to efficiently learn a coordination policy in the cooperative multi-agent task by solving the transformed single-agent task.
In the second stage, decentralized policies are derived through distillation from the learned coordination policy of the transformed single-agent task, enabling decentralized execution.
We formally prove that the TAD framework maintains the performance guarantee of its chosen single-agent RL algorithm so that TAD can guarantee optimality with the ability of decentralized execution.

Empirically, we implement TAD-PPO based on PPO \citep{Schulman2017ProximalPO}, the single agent algorithm.
Building upon the TAD framework, TAD-PPO can theoretically perform optimal policy learning in finite multi-agent MDPs (\cref{thm:TAD-PPO-optimality}) and empirically shows significant improvements on a large set of cooperative multi-agent tasks, from one-step matrix game, multi-step hallway task, to SMAC \citep{Samvelyan2019TheSM}, SMAC v2 \citep{ellis2022smacv2}, and GRF \citep{kurach2019google}. We plan to open-source the full algorithm code, and evaluation tasks, to ensure reproducibility and facilitate future works from broader community. We attach supplementary material of proofs and experimental details as Appendix to this paper.

\section{Related Work}
\label{sec:relatedwork}

\paragraph{Multi-Agent
Policy Gradient algorithms (MA-PG).} Many CTDE-based algorithms are proposed to enable decentralized execution in cooperative MARL.
Among them, policy-based algorithms try to optimize the policy by maximizing its value.
Typically, the joint policy is factorized as a direct product of decentralized policies, and is end-to-end updated by the policy gradient \citep{Foerster2018CounterfactualMP,Wang2021DOPOM,lowe2017multi,yu2021surprising}.
Further, there are also algorithms using a heterogeneous way to update the decentralized policies \citep{kuba2021trust,hetero2kuba}, which alleviates the non-stationarity in policy updates.

\paragraph{Value-Decomposition algorithms
(VD) and IGM.} Value-based algorithms try to optimize the Bellman error of the joint value function, which is usually decomposed into a function of local Q-functions. To enable TD-learning, it is crucial to follow the IGM principle \citep{Son2019QTRANLT}.
Various works \citep{Sunehag2018ValueDecompositionNF,Rashid2018QMIXMV} are made to satisfy the IGM principle by giving up part of the completeness of the Q-function class.
Since an incomplete Q-function class would constitute inherent Bellman error \citep{Munos2008FiniteTimeBF} and harm the performance, \citet{Son2019QTRANLT,WQMIX} use soft regularizers to approximate the IGM constraint, and \citet{Wang2021QPLEXDD} is the first work using a duplex dueling structure to make the IGM constraint and the completeness constraint both satisfied.

\paragraph{Theoretical understandings on single-agent and multi-agent RL algorithms.} There are numerous theoretical works that analyze the optimality of popular single-agent algorithms taking the optimization method into consideration \citep{Konda1999ActorCriticA,Off-PAC,singletimescaleActorCritic}.
However, in MARL, most of the analysis considers the optimizer as a black box and assumes it can perfectly solve the corresponding optimization problem \citep{Wang2021QPLEXDD,wang2021towards}.
TAD is the first work that theoretically analyzes the suboptimality of two large classes of MARL algorithms when using gradient descent as the optimizer.
And TAD-PPO is the first MARL algorithm with decentralized execution that not only achieves remarkable performance but also has a theoretical guarantee on its optimality when gradient descent is used.

\section{Preliminaries}
\label{sec:preliminary}
\subsection{RL Models}

In MARL, we model a fully cooperative multi-agent task as a Dec-POMDP \citep{oliehoek2016concise} defined by a tuple $\langle\mathcal{S},\mathcal{A},P, \mathcal{\varOmega}, O, r, \gamma,n \rangle$, where $\mathcal{N}=\{1,\cdots,n\}$ is a set of agents and $\mathcal{S}$ is the global state space, $\mathcal{A}$ is the individual action space, and $\gamma$ is a discount factor. At each time step, agent $i \in \mathcal{N}$ has access to the observation $o_i\in\mathcal{\varOmega}$, drawn from the observation function $O(s,i)$. Each agent has an action-observation history $\tau_i\in\mathcal{\varOmega}\times\left(\mathcal{A}\times\mathcal{\varOmega}\right)^*$ and constructs its individual policy $\pi(a_i|\tau_i)$. With each agent $i$ selecting an action $a_i\in\mathcal{A}$, the joint action $\bm{a} \equiv [a_i]_{i=1}^n$ leads to a shared reward $r(s,\bm{a})$ and the next state $s'$ according to the transition distribution $P(s'|s,\bm{a})$. The formal objective of MARL agents is to find a joint policy $\bm{\pi}=\langle \pi_1, \dots, \pi_n\rangle$ conditioned on the joint trajectories $\bm{\tau}\equiv [\tau_i]_{i=1}^n$ that maximizes a joint value function $V^{\bm{\pi}}(s)=\mathbb{E}\left[\sum_{t=0}^{\infty} \gamma^t r_t|s_0=s,\bm{\pi}\right]$. Another quantity in policy search is the joint action-value function $Q^{\bm{\pi}}(s,\bm{a})=r(s,\bm{a})+\gamma \mathbb{E}_{s'}[V^{\bm{\pi}}(s')]$. 

Besides, we also introduce the Multi-agent MDPs (MMDP) \citep{boutilier1996planning} and Markov Decision Process (MDP) \citep{sutton2018reinforcement} as special cases of Dec-POMDP, where an MMDP is a Dec-POMDP with full-observation, and an MDP is an MMDP with only one agent. The formal definition is deferred to \cref{app:RLmodels}.

\subsection{Multi-agent Policy Gradient and Value-Decomposition}

We introduce two popular classes of cooperative multi-agent learning algorithms for our analysis: multi-agent policy gradient (MA-PG) and value-decomposition (VD).
To simplify our theoretical analysis, our discussion based on the assumption that the environment is an MMDP.

\paragraph{Multi-agent Policy Gradient} MA-PG factorizes the joint policy as a direct product of the decentralized policies with parameter $\Theta\in\text{dom}(\Theta)$:

\begin{align}\label{eq:pi-factor}
    \forall s\in \mathcal S:\quad \bm{\pi}(\bm{a}|s;\Theta) = \prod_{i=1}^n \pi_i(a_i|s;\Theta)
\end{align}

Denoting $\mathcal J(\bm \pi)$ as the expected return of policy $\bm\pi$, the policy is updated by the policy gradient \citep{sutton2018reinforcement} with a small learning rate $\alpha$: $\Theta^{(t+1)}\gets \Theta^{(t)} + \alpha \nabla_\Theta \mathcal J(\bm\pi_\Theta)$, which is equivalent to minimize the following loss function by gradient descent. 

\begin{align}\label{eq:mapg-loss}
    \mathcal L(\Theta) = -\mathcal J(\bm \pi_\Theta)
\end{align}

\paragraph{Value-decomposition} VD maintains a joint Q-value function $Q(s,\bm a;\Theta)$ and $n$ local Q-value functions $Q_1(s,a_1;\Theta),\cdots,Q_n(s,a_n;\Theta)$ parameterized by some $\Theta\in \text{dom} (\Theta)$.
Standard TD-learning is then applied with a small learning rate $\alpha$ to update the parameter $\Theta$, which is equivalent to minimizing the following loss function by gradient descent \citep{lowrankapp2021Rozada}:

\begin{align}\label{eq:vd-loss}
\mathcal L(\Theta) = \frac 1 2\mathbb E_{s,\bm a}\left[
     Q(s, \bm a;\Theta) - (\mathcal T Q_{\Theta})(s,\bm a)
\right]^2
\end{align}

where $\mathcal T$ is the Bellman operator.

Meanwhile, the IGM principle \citep{Son2019QTRANLT} is enforced to realize effective TD-learning, which asserts the consistency between joint and local greedy action selections in the joint action-value $Q(s, \bm{a})$ and local action-values $\left[Q_i(s,a_i)\right]_{i=1}^n$, respectively:

\begin{align}\label{eq:igm}
 \forall s:\quad \mathop{\arg\max}_{\bm{a}\in\bm{\mathcal{A}}} Q(s, \bm{a}) \supseteq \bigtimes_{i=1}^n\mathop{\arg\max}_{a_i\in\mathcal{A}} Q_i(s, a_i)
\end{align}

VD enables IGM condition by representing the joint Q-value function as a specific function $f$ of local Q-functions:

\begin{align}\label{eq:value-decomp}
 Q(s, \bm a;\Theta)  = f(Q_1(s,\cdot), \cdots,Q_n(s,\cdot),s ,\bm a;\Theta)
\end{align}

For example, \citet{Sunehag2018ValueDecompositionNF} represents $Q$ as a summation of $[Q_i]_{i=1}^n$; and \citet{Rashid2018QMIXMV} can represent $Q$ as any monotonic function of $[Q_i]_{i=1}^n$ by adjusting the parameter $\Theta$.

In this way, the joint Q-function is ``decomposed" into local Q-functions, and the decentralized policies of VD can be obtained by calculating the greedy actions w.r.t. the local Q-functions.

\section{Suboptimality of MA-PG and VD with Gradient Descent}
\label{sec:motivation}
In this section, we will formally analyze the suboptimality of MA-PG and VD when using gradient descent as the optimizer.
As mentioned before, gradient descent can be viewed as a local search in the parameter space, which is hard to escape from local optima.
Yet, the loss functions in MA-PG and VD adopt different ways to evaluate how good the current decentralized policies are (\cref{eq:mapg-loss,eq:vd-loss}).
We will show that these approaches create numerous local optima in the loss function regardless of the parameterization of the functions.
Consequently, the convergence results of MA-PG and VD depend heavily on the initialization, which contrasts with the results of many classical SARL algorithms, e.g., Value Iteration.

\subsection{Suboptimality of Multi-agent Policy Gradient}

We first present the suboptimality theorem for MA-PG as follows.

\begin{table}[htb]
    \caption{Matrix Game with locally optimal policies: in this case, $(0,0)$, $(1,1)$, and $(2,2)$ are three locally optimal policies in the production of the decentralized policies' spaces.}
    \centering
    \begin{tabular}{|l|l|l|}
    \hline
    10   & -30 & -30  \\ \hline
    -30  & 5   & -30   \\ \hline
    -30  & -30   & 1  \\ \hline

    \end{tabular}

    \label{matgame1}
\end{table}
\begin{theorem}
\label{thm:ma-pg-fails}

There are tasks such that, when the parameter $\Theta$ is initialized in certain region $\Omega$ with positive volume, MA-PG (\cref{eq:pi-factor,eq:mapg-loss}) converges to a suboptimal policy with a small enough learning rate $\alpha$.
\end{theorem}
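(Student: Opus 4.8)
The plan is to take the task to be a single-state MMDP encoding the $3\times 3$ common-payoff matrix game $M$ of \cref{matgame1}: the two agents simultaneously pick $a_1,a_2\in\{0,1,2\}$, receive reward $M_{a_1a_2}$, and the episode ends, so $\mathcal J(\bm\pi)=\sum_{a_1,a_2}\pi_1(a_1)\,\pi_2(a_2)\,M_{a_1a_2}$. The unique optimum is the deterministic joint policy $(0,0)$ with return $10$, while $(1,1)$ has return $5$ and $(2,2)$ has return $1$. I would show that $(1,1)$ is a strict local maximum of $\mathcal J$ over the product of action simplices, that an explicit open neighborhood of it is forward-invariant under gradient ascent with small step $\alpha$, and that $\mathcal J$ stays strictly below $10$ on that neighborhood; pulling this back through the parameterization of \cref{eq:pi-factor} yields the region $\Omega$, and minimizing \cref{eq:mapg-loss} by gradient descent is gradient ascent on $\mathcal J$.

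The core computation is as follows. Fix $\delta\in(0,\tfrac13)$ and set $\mathcal U=\{\bm\pi:\pi_1(1)>1-\delta,\ \pi_2(1)>1-\delta\}$. For agent $1$, writing $q_1(a)=\sum_b\pi_2(b)M_{ab}$ for the action values induced by $\pi_2$, a direct evaluation gives $q_1(1)-q_1(0)\ge 35-75\delta$ and $q_1(1)-q_1(2)\ge 35-66\delta$ whenever $\bm\pi\in\mathcal U$, so action $1$ strictly dominates actions $0$ and $2$ by a uniform margin $c(\delta)>0$ --- the off-diagonal entry $-30$ kills the would-be gain $10$ from coordinating on $0$ --- and symmetrically for agent $2$. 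Hence for the standard tabular parameterizations (softmax, or direct-simplex with projected ascent) one update changes $\pi_i(1)$ by $\alpha\,\pi_i(1)\big(q_i(1)-\bar q_i\big)+O(\alpha^2)\ge 0$, strictly positively unless $\pi_i$ is already concentrated on action $1$; for $\alpha$ small the $O(\alpha^2)$ discretization error is dominated by the first-order inward drift, so $\mathcal U$ is forward-invariant. Moreover on $\mathcal U$ we have $\mathcal J\le 10\,\pi_1(0)\pi_2(0)+5\,\pi_1(1)\pi_2(1)+\pi_1(2)\pi_2(2)\le 5+11\delta^2<10$, so the whole trajectory is value-suboptimal; since $\mathcal J$ also increases monotonically along gradient ascent with small $\alpha$ and the only fixed point of the update inside $\mathcal U$ is $(1,1)$, the iterates converge to a suboptimal policy (indeed to $(1,1)$).

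To obtain the statement about $\Theta$, note that $\Theta\mapsto\bm\pi_\Theta$ is continuous and, for these parameterizations, has image containing policies arbitrarily close to $(1,1)$, so $\Omega:=\{\Theta:\bm\pi_\Theta\in\mathcal U\}$ is open and nonempty, hence of positive Lebesgue volume; forward-invariance of $\mathcal U$ transfers to $\Omega$ because the gradient-ascent step on $\mathcal L(\Theta)=-\mathcal J(\bm\pi_\Theta)$ is, to first order in $\alpha$, the policy-space ascent step precomposed with the parameterization Jacobian, with error again $O(\alpha^2)$ uniformly over a neighborhood of $\overline\Omega$. (For a general, possibly restrictive parameterization one runs the same argument at whichever of $(1,1)$, $(2,2)$ lies in the closure of the image of $\Theta\mapsto\bm\pi_\Theta$.)

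I expect the main obstacle to be turning the gradient-flow picture into a rigorous claim about the \emph{discrete} iteration: one must fix a single $\delta$ for which $\mathcal U$ is simultaneously a strict-ascent trap and value-suboptimal, and then show the step-$\alpha$ update stays in $\mathcal U$, which requires a uniform lower bound on the first-order inward drift together with a uniform upper bound on the second-order term of $\mathcal J\circ\bm\pi_\Theta$ (a local Lipschitz-gradient bound) over a neighborhood of $\overline\Omega$. The secondary subtlety is making $\Omega$ genuinely full-dimensional and parameterization-robust; the cleanest route is to carry out all estimates for the tabular softmax parameterization, where the drift and Hessian bounds are explicit, and remark that nothing essential changes for other reasonable parameterizations.
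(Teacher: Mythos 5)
Your proposal is correct in substance and follows the same overall strategy as the paper's proof --- a one-step matrix game, identification of a suboptimal deterministic diagonal policy as a strict local optimum of $\mathcal J$ over the product of simplices, a forward-invariant neighborhood for the \emph{discrete} gradient iteration, and a convergence argument (the paper uses the diagonal payoff $\mathrm{diag}(1,\dots,|\mathcal A|)$ rather than \cref{matgame1}, which is cosmetic, and your dominance margins $35-75\delta$, $35-66\delta$ and the bound $\mathcal J\le 5+11\delta^2$ check out). Where you genuinely diverge is the trapping mechanism. You trap by tracking the coordinates $\pi_i(1)$ directly: strict dominance of action $1$ on $\mathcal U$ gives a non-negative first-order inward drift, and the boundary-layer analysis you flag as the main obstacle closes the argument. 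The paper instead traps via sublevel sets of the loss: local $H$-smoothness under \cref{assump:mapg-param-boundedgrad} makes one GD step non-increasing in $\mathcal L$ (\cref{app-lemma:MAPG-onestep-L-monotone}), and a separate geometric lemma (\cref{app-lemma:MAPG-local-step}) shows the level set $S_\delta$ near the local optimum is contained in a strictly smaller ball, so non-increase plus a per-step displacement bound yields invariance of $p^{-1}(S_\delta)$. Your route is more elementary and explicit, but it is tied to the softmax (or projected-simplex) parameterization: for a general parameterization the first-order change of $\pi_i(1)$ under a $\Theta$-space step is $\alpha\, e^\top J J^\top \nabla_{\bm\pi}\mathcal J$ with $J$ the Jacobian of $\Theta\mapsto\bm\pi$, and since $JJ^\top$ need not be (a multiple of) the identity, this coordinate can \emph{decrease} even though $\nabla_{\bm\pi}\mathcal J$ points inward --- so your parenthetical claim that nothing essential changes for other parameterizations is not actually delivered by the drift computation. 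The paper's level-set argument sidesteps this because monotone decrease of $\mathcal L$ holds for any smooth parameterization and the level-set geometry lives in policy space, which is what lets the theorem be proved under only the smoothness, bounded-gradient, and completeness assumptions. If the goal is merely to exhibit some instantiation of MA-PG that fails, your softmax-specific proof suffices for the theorem as literally stated; to recover the paper's generality, replace the coordinate drift with the level-set containment.
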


To serve intuitions about why \cref{thm:ma-pg-fails} holds, we introduce a matrix game here as a didactic example and defer the proof to \cref{app:proofs-mapg}.
The matrix game described in \cref{matgame1} is a $1$-step cooperative game with two players.
These two players (one selects a row and one selects a column simultaneously, $\mathcal A = \{0,1,2\}$) coordinate to select an entry of the matrix representing the joint payoff.
We will show that the corresponding $\Theta$s of the suboptimal policies $(1,1)$ and $(2,2)$ are stationary points (with zero gradients) of the loss function $\mathcal L$ in this case.


We prove it by contradiction, if the gradient of $\mathcal L$ in MA-PG (\cref{eq:mapg-loss}) is non-zero at some $\Theta$, then by the definition of gradient, we can move $\Theta$ a little bit along the negative gradient to obtain a strictly greater expected payoff $\mathcal J$.
However, this is impossible when the current policy is the deterministic policy $(1,1)$ since one cannot increase the probability of $(0,0)$ (a better payoff, $10$) without increasing the probability of $(0,1)$ or $(1,0)$ (an even larger penalization, $-30$) when the policy space is a product space of decentralized policies' spaces.
A similar situation also occurs when the current policy is $(2,2)$.

This example shows that the decentralized structure in policy restricts the structure of parameter space and may create local optima independent of the parameterization if we do not design the loss function carefully.
This makes the optimality guarantee lost when gradient descent is used.

\subsection{Suboptimality of Value-Decomposition}
For VD, the situation is much more complex.
IGM (\cref{eq:igm}) is a vital precondition to enable efficient TD-learning.
Early work of VD uses specific structures to satisfy a sufficient condition of IGM, such as linear structure \citep{Sunehag2018ValueDecompositionNF} or monotonic structure \citep{Rashid2018QMIXMV}.
These approaches sacrifice the completeness of the Q-function class, which would incur inherent Bellman error and deviate the learning process to unexpected behavior \citep{AgarwalNanJiang2019ReinforcementLT}.
Later, \citet{Wang2021QPLEXDD} addresses the incompleteness issue under the IGM constraint by introducing a duplex dueling structure.

However, all these algorithms fail to guarantee optimality again because of the structure of the parameter space they have.
Since VD with an incomplete function class already suffers from inherent Bellman error as mentioned above, we consider VD with a complete function class.
To characterize the suboptimality of VD with a complete function class, we present \cref{thm:vd-fails}.

\begin{theorem}
\label{thm:vd-fails}
For any VD (\cref{eq:value-decomp,eq:vd-loss,eq:igm}) with a complete function class, there are tasks such that, when the parameter $\Theta$ is initialized in certain set $\Omega$, it converges to a suboptimal policy.
\end{theorem}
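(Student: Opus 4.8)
The plan is to inherit the ``miscoordination trap'' behind \cref{thm:ma-pg-fails}, using the fact that the IGM principle pins down the \emph{executed} policy of any VD scheme. First I would record the structural fact that, under \cref{eq:igm,eq:value-decomp}, the greedy joint policy $\bm{\pi}_\Theta$ (the policy executed at test time) is exactly the product $\prod_{i=1}^n\pi_i^\Theta$ of the individual greedy policies $\pi_i^\Theta(s)=\argmax_{a_i}Q_i(s,a_i;\Theta)$. Hence $\Theta\mapsto\bm{\pi}_\Theta$ takes values only in deterministic product policies, and which one is realized is locally constant in $\Theta$ away from the set where some $Q_i(s,\cdot;\Theta)$ has a tie.

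For the task I would use the (one-step) game of \cref{matgame1}, whose relevant features are: a unique optimum $(0,0)$ with payoff $10$; a strict decentralized local optimum $(1,1)$ with payoff $5$, from which no unilateral deviation helps; and large penalties $-30$ on every ``miscoordination'' entry $(i,j)$ with $i\neq j$. As the episode has a single step, the target $\mathcal{T}Q_\Theta$ in \cref{eq:vd-loss} equals the fixed reward table, so $\mathcal{L}$ is a fixed full-support weighted least-squares regression of $Q_\Theta$ onto $r$; completeness makes $\min_\Theta\mathcal{L}=0$, attained by $Q_\Theta=r$, whose greedy policy is the optimal $(0,0)$. The theorem asserts that gradient descent need not find this global minimum, and the rest of the proof localizes it in the wrong basin.

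The engine of the argument is a \emph{loss-barrier} estimate: there is a universal $c>0$ such that $\mathcal{L}(\Theta)\geq c$ whenever the induced joint $Q_\Theta(s,\cdot)$ has two maximizers that disagree in some agent's coordinate. Indeed, by \cref{eq:value-decomp,eq:igm} such a coincidence occurs exactly when some $Q_i(s,\cdot;\Theta)$ is tied across two actions, and then the joint maximizers include a non-miscoordination entry (value $5$ or $10$) and a miscoordination entry (value $-30$) forced to carry the \emph{same} $Q_\Theta$-value; no common value fits $\{5\text{ or }10\}$ and $-30$ to regression error below a constant determined by the sampling weights. A companion estimate shows $\mathcal{L}\geq c'$ whenever $\bm{\pi}_\Theta$ is itself a miscoordination entry (its $Q_\Theta$-max would have to be $\approx 10$ \emph{and} $\approx -30$ at once). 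Consequently, inside $\{\mathcal{L}<\min(c,c')\}$ the region where $\bm{\pi}_\Theta=(1,1)$ is both open and separated from the region where $\bm{\pi}_\Theta=(0,0)$. I would take $\Omega$ to be a nonempty open subset of the former region (e.g.\ a neighbourhood of the best $Q_\Theta$ fitting $r$ subject to $(1,1)$ being the strict argmax, which has small loss), and conclude: along gradient descent with a small enough step size, $\mathcal{L}$ is non-increasing, the iterate never leaves $\{\mathcal{L}<\min(c,c')\}$, hence the greedy policy stays $(1,1)$ for all time, so it ``converges'' to the suboptimal policy $(1,1)$.

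The step I expect to be the main obstacle is making the barrier estimate hold for \emph{every} complete VD instantiation --- including the duplex-dueling class of QPLEX --- since the proof may use only the abstract interface (\cref{eq:value-decomp}, the IGM identity \cref{eq:igm}, completeness) and not any particular network. One must show that ``an individual argmax flips'' genuinely forces two conflicting joint maximizers of $Q_\Theta$ \emph{regardless} of the extra latent coordinates (mixing weights, per-agent baselines) that a complete class necessarily carries, and handle boundary parameter configurations where the parameter-to-function Jacobian degenerates, as well as the tie-breaking bookkeeping that makes the ``$(1,1)$-component is separated'' claim precise. A secondary task is to check that $\Omega$ has positive volume for the parameterization at hand and, if one wants the result for genuinely sequential MMDPs rather than the one-step lift, to propagate the same ``no common value'' lower bound through the now parameter-dependent bootstrap target $\mathcal{T}Q_\Theta$.
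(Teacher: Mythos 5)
Your proposal is essentially correct, but it takes a genuinely different route from the paper's. The paper constructs exact local minima of $\mathcal L$: it projects the payoff onto the convex set $\mathcal R(\bm a^*)=\{f:\bm a^*\in\argmax f\}$, shows the projection $f_{\bm a^*}$ is unique, and then must fight the fact that a merely complete class gives no control over which greedy action a decomposition of $f_{\bm a^*}$ selects when $f_{\bm a^*}$ has tied maximizers --- this is resolved via a special ``positive diagonal'' payoff and the adaptive recursive construction of \cref{algdef:construct-T-local-minima}; the resulting $\Omega$ is a finite set of stationary points at which GD simply does not move. You instead build an open basin: a loss barrier splits the sublevel set $\{\mathcal L<\kappa\}$ into pieces labelled by the (necessarily unique, necessarily diagonal) product of individual argmaxes, and monotone descent traps GD in the $(1,1)$ piece. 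Your perturbation trick --- initializing near a decomposition of an $f_\epsilon$ whose argmax is \emph{strictly} $(1,1)$, so that IGM alone pins down every local argmax --- neatly sidesteps the tie-breaking problem that forces the paper's recursive construction, and it buys a stronger conclusion: $\Omega$ has positive volume, and a single task works uniformly over all complete VD instances (the paper obtains such uniformity only under IGM-completeness or a compactness condition). The price is that your argument needs smoothness and bounded gradients of $\Theta\mapsto Q(\Theta)$ to get $\mathcal L(\Theta_{t+1})\le\mathcal L(\Theta_t)$, assumptions the paper imposes only in its MA-PG analysis; the stationary-point argument needs only differentiability.

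Two patchable gaps. First, your barrier lemma as stated fails in one case: if $Q_1$ is tied between $a\neq a'$ while agent $2$'s argmax is some $b\notin\{a,a'\}$, the two forced joint maximizers $(a,b)$ and $(a',b)$ are \emph{both} miscoordination entries with the same target $-30$, so no ``conflicting targets'' contradiction arises. The fix is your own companion estimate, which is in fact the only barrier you need: whenever $\bigtimes_i\argmax Q_i$ is not a singleton diagonal entry it contains an off-diagonal entry $(a,b)$, IGM forces $Q(a,b)=\max Q\geq Q(0,0)$, and no value of $\max Q$ fits both the target $-30$ at $(a,b)$ and the target $10$ at $(0,0)$ with loss below a fixed constant. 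Second, ``the iterate never leaves the sublevel set, hence the greedy policy stays $(1,1)$'' does not follow from local constancy alone: GD is discrete and could in principle land in the $(0,0)$ component in one step. You need to start with a loss margin (say $\mathcal L<\kappa/2$) and take $\alpha$ small enough that $\mathcal L<\kappa$ holds along the entire segment $[\Theta_t,\Theta_{t+1}]$, exactly as the paper does for MA-PG in \cref{app-lemma:MAPG-local-step}.
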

The proof and a finer characterization are presented in \cref{app:proofs-vd}.

Roughly speaking, \cref{thm:vd-fails} is a consequence of the IGM constraint over decentralized and joint policies.
Because of it, the process of GD would be a local search over the complete value function class keeping the IGM constraint satisfied all along.
Therefore such a searching process over the space looks ``crooked" and ``obstructed" everywhere, making it easy to get stuck into some local optimum.

\section{Transformation and Distillation Framework}
\label{sec:transformation}

\begin{figure*}[tb]
\centering
\includegraphics[width=.9\linewidth]{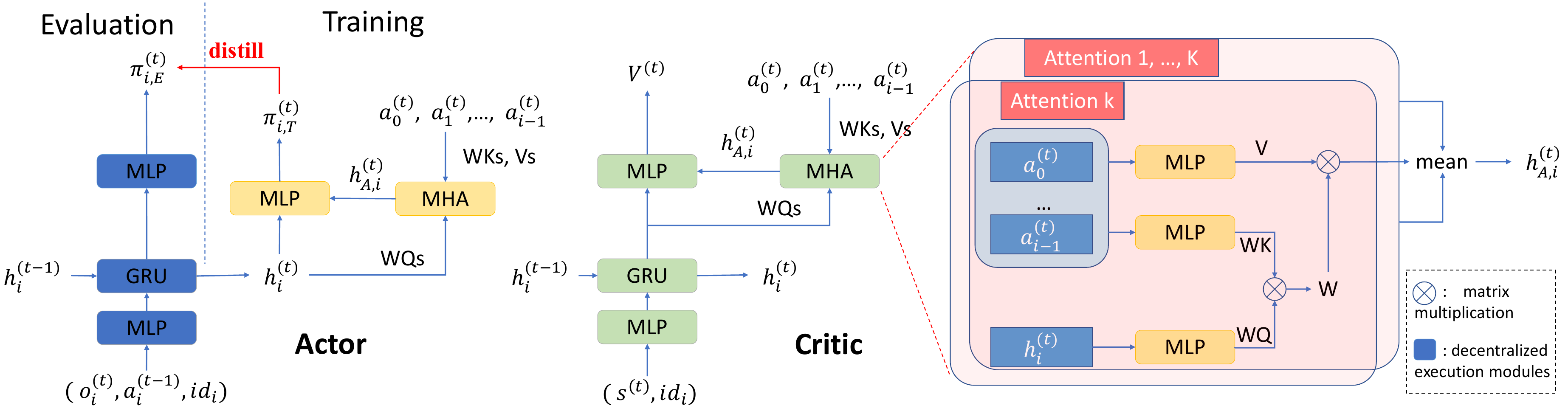}
\caption{
    The architecture of transformation and distillation framework with PPO (TAD-PPO).
    WK, WQ, and V are key matrix, query matrix, and value matrix respectively in the MHA module \citep{attentionisallyouneed}.
}
\label{fig:TAD-PPO_arch}
\end{figure*}

As suggested by \cref{thm:ma-pg-fails,thm:vd-fails}, MA-PG and VD suffer from the cratered loss function induced by the decentralized structure, which makes their convergence results depend heavily on the initialization.
However, the PG with a softmax parameterization and the tabular Q-learning, which are the counterparts of MA-PG and VD in SARL, both guarantee the global convergence to the optimum \citep{softmaxpgglobalconverge,Watkins1992Qlearning} due to their nice global properties of optimization objective.
This observation motivates us to convert the original multi-agent learning problem into a single-agent one to enjoy a possibly better structure in the optimization problem.

Inspired by this observation and a series of prior works on sequence modeling \citep{2019DimitriRollout,Angermueller2020Model-based,2022JainSequenceGFlowNets,2017Molecular,2018MolecularYou}, we find that breaking joint learning into sequential decision-making directly transfers properties of single-agent algorithms into multi-agent ones.
To formalize such ``reduction" of algorithms from the perspective of performance guarantee, we propose the Transformation And Distillation (TAD) framework, which is a two-stage framework under the paradigm of centralized training with decentralized execution.

\subsection{TAD on MMDP}

To simplify our analysis, we assume the underlying task to be an MMDP here.

In the centralized training phase of CTDE, all $n$ agents will coordinate to improve their joint policy.
In particular, when a joint action needs to be inferred, they coordinate to infer $\bm a = (a_1,\cdots,a_n)$ jointly.
If we give agents a virtual order and imagine they infer their individual actions one after another, that is, when agent $i$ infers its action $a_i$, all ``previously inferred" actions $a_j (j<i)$ are known to $i$, then we can view this procedure of multi-agent decision-making to a ``single-agent" one.
From a theoretical perspective, this states a transformation from MMDP into MDP in essence.

We formally define the Sequential Transformation $\Gamma$ as \cref{algdef:transformation}.

\begin{algorithm}[htb]
  \caption{The Sequential Transformation $\Gamma$}
  \label{algdef:transformation}

\begin{algorithmic}[1]
  \STATE {\bfseries Input:} An MMDP $\mathcal M = (\mathcal S,\mathcal A,P,r,\gamma, n)$.
  \STATE {\bfseries Output:} An MDP $\Gamma(\mathcal M) = (\mathcal S', \mathcal A', P',r',\gamma')$.

  \STATE $\mathcal S' \gets \bigcup_{k=0}^{n-1} \mathcal S\times \mathcal A^k$.
  \STATE $\mathcal A' \gets \mathcal A$.
  \STATE $\gamma' = \gamma^{1/n}$. 
  \STATE Initialize $P'(s'|s,a) = 0$ for all $s,s'\in \mathcal S', a\in \mathcal A'$.
  \STATE Initialize $r'(s,a) = 0$ for all $s\in \mathcal S', a\in\mathcal A'$.
  \FORALL{$s'\in \mathcal S'$}
  \IF{$s' = (s,a_1,\cdots,a_k)$ \AND $k<n-1$}
    \STATE\COMMENT{If this is not the last action:}
    \FORALL{$a_{k+1}\in \mathcal A'$}
    \STATE $u \gets (s,a_1,\cdots, a_k, a_{k+1})$.\quad\COMMENT{Create a new virtual state by appending $a_{k+1}$ to $s'$.}
    \STATE $P'(u|s',a_{k+1}) \gets 1$.\quad\COMMENT{Set a deterministic transition from $s'$ to $u$.}
    \ENDFOR
  \ELSE
    \STATE\COMMENT{If this is the last action:}
    \FORALL{$a_n\in \mathcal A'$}
        \STATE $r'(s',a_n) \gets r(s,a_1,\cdots,a_n)$.
        \STATE\COMMENT{Set the reward according to $\mathcal M$.}
        \FORALL{$u\in \mathcal S$}
            \STATE $P'(u|s',a_n) \gets P(u|s,a_1,\cdots,a_n)$.
            \STATE\COMMENT{Set the transition according to $\mathcal M$.}
        \ENDFOR
    \ENDFOR
  \ENDIF
  \ENDFOR

  \STATE {\bfseries Return} $\Gamma(\mathcal M) = (\mathcal S', \mathcal A', P',r',\gamma')$.
\end{algorithmic}
\end{algorithm}

We can then prove the equivalence of the MMDP $\mathcal M$ and the MDP $\Gamma (\mathcal M)$ after transformation in the perspective of policy value by the following theorem.
The proof is presented in \cref{app:transformation}.

\begin{theorem}
\label{thm:convert-policy-keeps-value}
Let $C = \gamma^{(1-n)/n}$ be a constant. For any learned policy $\pi$ on the transformed MDP $\Gamma(\mathcal M)$, we can convert $\pi$ into a coordination policy $\pi_{\text{c}}$ on the original MMDP $\mathcal M$, and vice versa, such that $\mathcal J_{\Gamma(\mathcal M)}(\pi) = C \mathcal J_{\mathcal M} (\pi_c)$.
\end{theorem}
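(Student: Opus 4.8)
The plan is to exhibit an explicit, value-preserving bijection between stationary policies $\pi$ on the transformed MDP $\Gamma(\mathcal M)$ and coordination (autoregressive) policies $\pi_c$ on the MMDP $\mathcal M$. The structural fact I would exploit is that $\Gamma(\mathcal M)$ is organized into length-$n$ ``macro-steps'': from a real state $s\in\mathcal S\subseteq\mathcal S'$, the next $n-1$ transitions are deterministic and reward-free (they merely append the chosen actions to the virtual state), and only the $n$-th transition carries a reward $r(s,\bm a)$ and moves to a genuine next state drawn from $P(\cdot\mid s,\bm a)$. So sampling $\Gamma(\mathcal M)$ once every $n$ steps should reproduce $\mathcal M$ exactly, and the value identity should then reduce to pure discount bookkeeping, with $C$ the residual power of $\gamma'=\gamma^{1/n}$.

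First I would set up the policy correspondence. Reading a virtual state $(s,a_1,\dots,a_{i-1})\in\mathcal S\times\mathcal A^{i-1}$ as ``global state $s$, with agents $1,\dots,i-1$ already committed to $a_1,\dots,a_{i-1}$'', a policy $\pi$ on $\Gamma(\mathcal M)$ defines a coordination policy $\pi_c$ on $\mathcal M$ in which the (homogeneous, fully observing) agents act in the fixed order $1,\dots,n$, agent $i$ sampling $a_i\sim\pi(\cdot\mid(s,a_1,\dots,a_{i-1}))$; conversely $\pi_c$ unpacks into one conditional distribution per virtual state, so the map is a bijection onto the autoregressive joint policies --- and every Markov joint policy on $\mathcal M$ is autoregressive for this agent order, so no policies are lost. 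Under this correspondence $\bm\pi_c(\bm a\mid s)=\prod_{i=1}^{n}\pi\!\left(a_i\mid(s,a_1,\dots,a_{i-1})\right)$.

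Next I would prove the per-step kernel identity: started at a real state $s$, because the within-block transitions of $\Gamma(\mathcal M)$ are deterministic, the probability that $\pi$ produces actions $a_1,\dots,a_n$ over the next $n$ steps is exactly $\prod_i\pi(a_i\mid(s,a_1,\dots,a_{i-1}))=\bm\pi_c(\bm a\mid s)$, the reward accumulated over the block is $r((s,a_1,\dots,a_{n-1}),a_n)=r(s,\bm a)$, and the next real state $u$ is reached with probability $P(u\mid s,\bm a)$. Hence the map ``real state $\mapsto$ (joint action, reward, next real state)'' under $\pi$ in $\Gamma(\mathcal M)$ is identical to the transition of $\mathcal M$ under $\bm\pi_c$, and a short induction on $t$ using the Markov property shows the macro-chain $(s_0,\bm a_0,s_1,\bm a_1,\dots)$ of $\Gamma(\mathcal M)$ has the same law as the trajectory of $\mathcal M$ under $\bm\pi_c$; in particular the reward sequences agree in distribution. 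Then I would finish by discounting: the MMDP reward $r_t$ is paid in $\Gamma(\mathcal M)$ at a micro-step of the form $tn+c$ with $c$ a fixed offset read off from the block structure, so $\mathcal J_{\Gamma(\mathcal M)}(\pi)=\mathbb E\!\left[\sum_{t\ge0}(\gamma')^{tn+c}r_t\right]=(\gamma')^{c}\,\mathbb E\!\left[\sum_{t\ge0}\gamma^{t}r_t\right]=C\,\mathcal J_{\mathcal M}(\pi_c)$ with $C=(\gamma')^{c}=\gamma^{(1-n)/n}$; the ``vice versa'' direction is the inverse of the first step plus the identical computation.

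I expect the main obstacle to be nailing the two routine-but-finicky points rather than any deep difficulty: (i) checking that the constructed $\pi_c$ is genuinely admissible under the CTDE/MMDP semantics and that $\pi\leftrightarrow\pi_c$ is a true bijection (this uses full observability and the fixed agent ordering, and deserves to be spelled out), and (ii) the discount/index bookkeeping --- pinning the offset $c$, and hence the exponent in $C$, exactly right, and passing an initial-state distribution through by linearity. Once the per-step kernel identity is in place, the trajectory coupling is a standard induction.
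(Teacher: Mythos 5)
Your proposal is correct and follows essentially the same route as the paper's proof: the same autoregressive bijection $\pi_k(a_k\mid s,a_{<k})=\pi\bigl(a_k\mid(s,a_1,\dots,a_{k-1})\bigr)$, the same observation that the $n-1$ deterministic, reward-free within-block transitions make the macro-chain of $\Gamma(\mathcal M)$ reproduce the law of $\mathcal M$ under $\pi_c$, and the same discount bookkeeping. One caveat on that bookkeeping, which you yourself flag as the finicky point: the reward of macro-step $t$ is paid at micro-step $nt+n-1$, so the offset is $c=n-1$ and the computation yields $\mathcal J_{\Gamma(\mathcal M)}(\pi)=(\gamma')^{n-1}\,\mathcal J_{\mathcal M}(\pi_c)$, i.e.\ $\mathcal J_{\mathcal M}(\pi_c)=\gamma^{(1-n)/n}\,\mathcal J_{\Gamma(\mathcal M)}(\pi)$; your final line instead asserts $(\gamma')^{c}=\gamma^{(1-n)/n}$, which would force $c=1-n<0$. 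The identity you should land on is the one the paper's appendix actually proves, $\mathcal J_{\mathcal M}(\pi_c)=C\,\mathcal J_{\Gamma(\mathcal M)}(\pi)$, which places the constant on the opposite side from the main-text statement --- an internal inconsistency of the paper rather than a flaw in your approach.
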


\cref{thm:convert-policy-keeps-value} says that there is an easy-to-compute bijection between the policy space of $\mathcal M$ and that of $\Gamma (\mathcal M)$, such that the relative magnitude of policy value retains.

\paragraph{The Transformation Stage}
In the transformation stage of TAD, we use some SARL algorithm $A$, wrap the interface of the multi-agent environment to make $A$ run as if it is accessible to the interactive environment of $\Gamma\left(\mathcal M\right)$.
Then we convert the policy learned by $A$ to a coordination policy on $\mathcal M$ without loss of performance by \cref{thm:convert-policy-keeps-value}.

\paragraph{The Distillation Stage}
In the distillation stage of TAD, a distillation process is applied to the coordination policy to obtain decentralized policies.
This is easy to be done on MMDP, since we have full observations and therefore can completely know what allies did and will do, which means we calculate the coordination policy locally.
To be succinct, the details are presented in \cref{app:transformation}.

We present the whole TAD framework as \cref{alg:TAD-framework}, which enjoys the following theorem.

\begin{algorithm}[htb]
  \caption{The Transformation And Distillation (TAD) framework}
  \label{alg:TAD-framework}
  
\begin{algorithmic}[1]
  \STATE {\bfseries Input:} An MMDP $\mathcal M$ of $n$ agents, a single agent algorithm $A$.
  \STATE {\bfseries Output:} Decentralized policies $\pi_1,\cdots,\pi_n$.

  \STATE {\bfseries The Transformation Stage (for centralized training):}

  \STATE $\mathcal N\gets \Gamma(\mathcal M)$.\quad\quad\COMMENT{Transform $\mathcal M$ into an MDP $\mathcal N$.}
  \STATE Run $A$ on $\mathcal N$ to obtain a policy $\pi$.
  \STATE Convert $\pi$ to a coordination policy $\pi^c$.

  \STATE {\bfseries The Distillation Stage (for decentralized execution):}
  \STATE Distill $\pi^c$ into decentralized policies $\pi_1,\cdots,\pi_n$.

  \STATE {\bfseries Return} $\pi_1,\cdots,\pi_n$.
\end{algorithmic}
\end{algorithm}

\begin{theorem}
\label{thm:TAD-keeps-optimality}
For any SARL algorithm $A$ with an optimality guarantee on MDP, TAD-$A$, the MARL algorithm using the TAD framework and based on $A$, will have an optimality guarantee on any MMDP.
\end{theorem}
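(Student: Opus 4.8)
The plan is to chain together the two results already established---\cref{thm:convert-policy-keeps-value} and the design of \cref{alg:TAD-framework}---to show that an optimality guarantee transfers from the single-agent algorithm $A$ on $\Gamma(\mathcal M)$ to TAD-$A$ on $\mathcal M$. First I would make precise what ``optimality guarantee on MDP'' means: $A$ run on any finite MDP $\mathcal N$ returns (in the limit, or with high probability, depending on the flavor of guarantee) a policy $\pi$ with $\mathcal J_{\mathcal N}(\pi) = \max_{\pi'} \mathcal J_{\mathcal N}(\pi')$, i.e.\ $\pi$ is optimal for $\mathcal N$. I would first verify that $\Gamma(\mathcal M)$ is a legitimate finite MDP to which this guarantee applies: \cref{algdef:transformation} produces a finite state space $\mathcal S' = \bigcup_{k=0}^{n-1}\mathcal S\times\mathcal A^k$, a finite action space $\mathcal A' = \mathcal A$, valid transition kernels (deterministic ``append'' moves for the first $n-1$ sub-steps, then a copy of $P$), a bounded reward, and discount $\gamma' = \gamma^{1/n} \in (0,1)$. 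Hence $A$ applied to $\mathcal N = \Gamma(\mathcal M)$ returns an optimal policy $\pi^\star$ of $\mathcal N$.

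Next I would use the bijection of \cref{thm:convert-policy-keeps-value}. That theorem gives, for the constant $C = \gamma^{(1-n)/n} > 0$, a bijection $\pi \mapsto \pi_c$ between policies on $\Gamma(\mathcal M)$ and coordination policies on $\mathcal M$ with $\mathcal J_{\Gamma(\mathcal M)}(\pi) = C\,\mathcal J_{\mathcal M}(\pi_c)$ for every $\pi$. Since $C > 0$ this is an order-isomorphism on policy values, so $\pi^\star$ maximizing $\mathcal J_{\Gamma(\mathcal M)}$ implies its image $\pi^{c,\star}$ maximizes $\mathcal J_{\mathcal M}$ over all coordination policies. I would then argue that the maximum of $\mathcal J_{\mathcal M}$ over coordination policies equals the maximum over \emph{all} joint policies of $\mathcal M$: a coordination policy is at least as expressive as a Markov joint policy (it may additionally condition $a_i$ on $a_{<i}$), and on a finite MMDP there exists a deterministic Markov joint policy that is globally optimal, which is trivially a coordination policy. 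Therefore $\pi^{c,\star}$ is a globally optimal policy for the MMDP $\mathcal M$. This is exactly the policy produced at line 6 of \cref{alg:TAD-framework}.

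Finally I would handle the distillation stage (lines 7--8). The claim to nail down is that distilling $\pi^{c,\star}$ into decentralized policies $\pi_1,\dots,\pi_n$ preserves its value on $\mathcal M$. On an MMDP every agent observes the full state $s$, so when it is agent $i$'s turn the ``previously inferred'' actions $a_1,\dots,a_{i-1}$ are a deterministic function of $s$ and of the earlier agents' own (decentralized) policies---each agent can locally recompute what its predecessors did, because it has the same information they had. Hence the decentralized execution of $(\pi_1,\dots,\pi_n)$ reproduces exactly the joint action distribution of $\pi^{c,\star}$ at every state, so $V^{(\pi_1,\dots,\pi_n)}_{\mathcal M} = V^{\pi^{c,\star}}_{\mathcal M}$ and the returned policies are globally optimal on $\mathcal M$. (If $A$'s guarantee is probabilistic or asymptotic rather than exact, the same chain of equalities carries the ``$\varepsilon$-optimal with probability $1-\delta$'' or ``converges to the optimum'' statement through verbatim, since every step is an exact identity or a monotone bijection of values.)

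The main obstacle I anticipate is not the chaining---that is essentially bookkeeping with $C>0$---but pinning down the distillation step rigorously: one must show that ``recompute the predecessors' actions locally'' is well-defined, i.e.\ that the construction does not secretly require an agent to know something it cannot observe, and that it induces precisely the coordination policy's action law without introducing extra stochasticity or tie-breaking ambiguity. This is exactly why the theorem is restricted to MMDPs (full observability), and I would lean on the detailed distillation procedure deferred to \cref{app:transformation} to discharge it.
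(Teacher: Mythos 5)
Your overall architecture matches the paper's: run $A$ on the finite MDP $\Gamma(\mathcal M)$, transfer optimality back to $\mathcal M$ via the positive constant $C=\gamma^{(1-n)/n}$ in \cref{thm:convert-policy-keeps-value} (noting that coordination policies subsume joint policies and that a globally optimal deterministic joint policy exists), and then decentralize. The first two stages are handled correctly and even slightly more carefully than the paper (the order-isomorphism remark and the ``max over coordination policies $=$ max over joint policies'' step are both worth making explicit).

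The gap is in the distillation step. You claim that ``the decentralized execution of $(\pi_1,\dots,\pi_n)$ reproduces exactly the joint action distribution of $\pi^{c,\star}$ at every state.'' This is false when the coordination policy is stochastic, and the single-agent algorithm $A$ (e.g.\ PPO) may well return a stochastic policy. A coordination policy samples $a_i\sim\pi_i(\cdot\mid s,a_{<i})$ conditioned on the \emph{realized} actions of the predecessors; independent decentralized agents have no access to those realizations and can at best reproduce the marginals, not the correlations. (Concretely: if agent $1$ randomizes uniformly and agent $2$'s coordination policy is ``copy $a_1$,'' no product of decentralized policies matches that joint law, and the value can drop strictly.) Your own ``each agent can locally recompute what its predecessors did'' only works if the predecessors act \emph{deterministically} given $s$. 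The paper closes exactly this hole by inserting a determinization step before decentralizing: \cref{algdef:distillation} first replaces each $\pi_k$ by a deterministic selection $\mu_k(s,a_{<k})\in\arg\max_{a_k}\pi_k(a_k\mid s,a_{<k})$, argues that $\mu^c$ remains optimal (any action in the support of an optimal policy is itself optimal at that virtual state of $\Gamma(\mathcal M)$), and only then defines $\eta_k(s)=\mu_k(s,\eta_1(s),\dots,\eta_{k-1}(s))$, which is now a well-defined decentralized policy with $\mu^c=\bigtimes_i\eta_i$. You flagged this as the anticipated obstacle but deferred its resolution to the very appendix whose content you are being asked to supply, so as written the proof is incomplete; adding the determinization argument fixes it.
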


\cref{thm:TAD-keeps-optimality} shows that the TAD framework successfully bridges the gap to optimal decentralized policy learning for CTDE algorithms. The proof is presented in \cref{app:transformation}.

\subsection{Extending TAD to Dec-POMDP}
\label{sec:TAD-PPO-structure}

In practice, many MARL benchmarks are partially observable and require each agent to execute independently.
To apply our algorithms to partial-observable environments with a decentralized execution manner, we need to extend the TAD framework to the setting of Dec-POMDP.

In this paper, we select proximal policy optimization (PPO, \citet{Schulman2017ProximalPO}) as the basic SARL algorithm because a suitable implementation of PPO enjoys the optimality guarantee in MDP \citep{ppooptimal, ppo-clip-optimal}, and its counterparts in the multi-agent setting, MAPPO \citep{yu2021surprising} and IPPO \citep{de2020independent}, have already achieved great success in challenging benchmarks. 

The Actor-Critic structure is shown in \cref{fig:TAD-PPO_arch} with GRU modules to represent partial observations. All agents share parameters of actor and critic for enjoying learning efficiency enhanced by parameter sharing \citep{yu2021surprising, li2021celebrating}. To provide representation diversity across agents, we further add each agent's identity as a part of input~\cite{Rashid2018QMIXMV} as shown in \cref{fig:TAD-PPO_arch}. 

Following the TAD framework, we introduce previous agents' actions to each agent's actor and critic modules, so that a sequential decision-making process is enabled in the centralized training phase.
However, this sequential transformed information increases with respect to the number of agents.
To achieve scalability, we equip agents' actor and critic with a multi-head attention (MHA) module\citep{attentionisallyouneed,iqbal2021randomized} respectively, which contains queries WQs, keys WKs, and values Vs. WQs are outputs of an MLP network with $h_{i}^{(t)}$ (output of GRU) as the input. WKs and Vs are outputs of MLP networks with all previous actions $a_{<i}^{({t})}$ as the input.

For the distillation part that distills the  coordination policy into decentralized policies, we optimize the KL divergence between the decentralized policies and the joint coordination policy, which is known as behavior cloning. Denote the joint policy $\pi_{\text{jt}}(\bm a|\bm \tau) = \prod_i \pi_{i,T}(a_i|\tau_i, a_{<i})$, we have
\begin{align}
&\text{KL}\left(\pi_{\text{jt}}(\cdot|\bm\tau)\middle\|\prod_{i=1}^n\pi_{i,E}(\cdot|\tau_i;\phi)\right) \nonumber\\
=&
\underbrace{
    \mathop{\mathbb E}
        _{\boldsymbol a\sim \pi_{\text{jt}}(\cdot|\bm \tau)}
    \left[-\sum_{i=1}^n\log \pi_{i,E}(a_i|\tau_i;\phi) \right]
}_{
    \text{distillation loss (cross entropy) } \mathcal L(\phi)
}
-\underbrace{
    H( \pi_{\text{jt}}(\cdot|\bm \tau))
}_{
    \text{entropy (a constant)}
}
\end{align}

where $\pi_{i,T}$ and $\pi_{i,E}$ are the coordination policy and the decentralized policy of agent $i$.

Here we share GRU and the representation layer, whose inputs do not contain other agents' information. During the evaluation, only blue modules in Figure \ref{fig:TAD-PPO_arch} are activated for decentralized execution. To distinguish policies in different stages, we name the policy in the transformation stage as TAD-PPO-T and the policy during the distillation stage as TAD-PPO.




\section{Experiments}
\label{exp}

We design experiments to answer the following questions: (1) Can the TAD framework achieve the {globally} optimal policy on MMDP? (2) Can TAD-PPO improve learning efficiency for policy-based MARL algorithms? (3) {Does the performance of the decentralized executing policies by the distillation stage degenerate compared with the joint policy for the transformed single-agent MDP?}

We first use {two didactic examples} to demonstrate the global optimality of our coordination policy, {and the suboptimality of the most common MARL algorithms including} multi-agent value-based methods, i.e., VDN \citep{Sunehag2018ValueDecompositionNF}, QMIX \citep{Rashid2018QMIXMV}, QTRAN \citep{Son2019QTRANLT}, and QPLEX \citep{Wang2021QPLEXDD}, and policy-based methods, i.e., MAPPO \citep{yu2021surprising} and HAPPO \citep{kuba2021trust}. To approach complex {environments}, we conduct experiments on the StarCraft II micromanagement (SMAC) benchmark \citep{Samvelyan2019TheSM}, SMAC v2 \citep{ellis2022smacv2} and Google Research Football (GRF) benchmark \citep{kurach2019google}. Evaluation results are averaged over three random seeds and use 95\% bootstrapped confidence interval to present the variance. Readers can refer to \cref{app:experimental_details} for details about experiments.

\subsection{Didactic Examples}

\label{sec:msmatgame}

\begin{figure}[h]
  \centering
    \includegraphics[width=.8\linewidth]{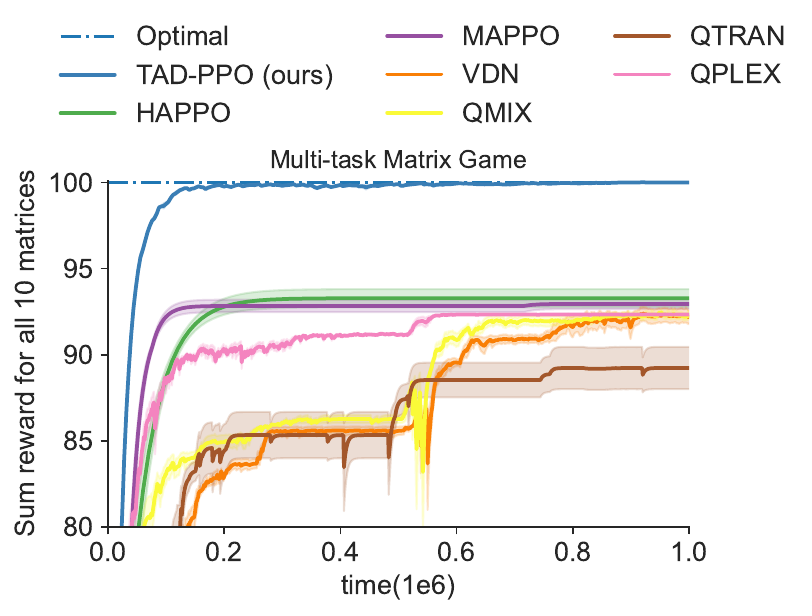}
  \caption{Learning curves in the didactic example.}
  \label{fig:matrix_results}
\end{figure}

We first conduct a didactic example to investigate the optimality {property} of MARL learning algorithms. This didactic example is a stacked matrix game, which has 10 one-step matrices with two players. Details are presented to Appendix \ref{matrices}. Figure~\ref{fig:matrix_results} shows that TAD-PPO achieves the optimal performance and baselines cannot solve these 10 matrices simultaneously. This result suggests that,  our \textit{Transformation And Distillation} framework can {find} the global optimal strategy, with {the optimizer of gradient descent}. As our theory indicates, the MARL policy gradient methods (i.e., {MAPPO and HAPPO}) {converge} to a locally optimal policy due to bad parameter initialization. In MARL value-decomposition methods, VDN and QMIX suffers from incomplete function classes, QPLEX may get stuck in local optima due to its mixing structure, and QTRAN cannot optimize well by its discontinuous loss function. In this didactic example, TAD-PPO achieves state-of-the-art performance with a theoretical guarantee of optimality {under gradient descent}.


\begin{figure*}[t]
  \centering
    \includegraphics[width=.9\linewidth]{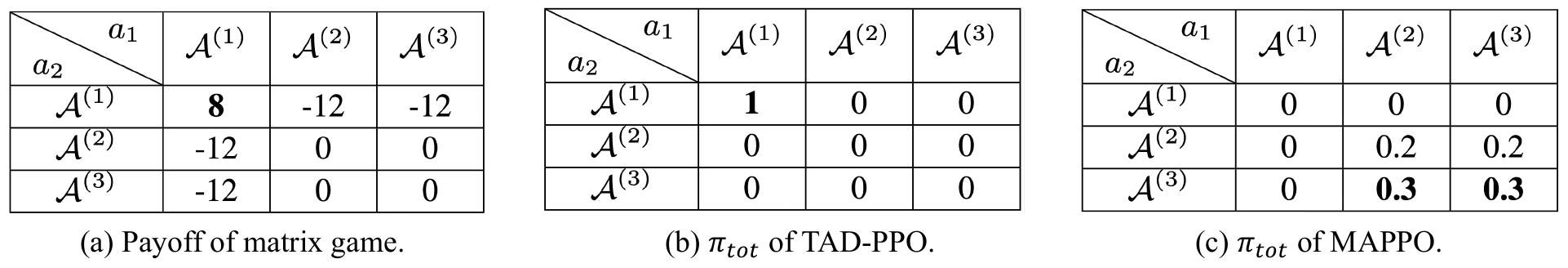}
  \caption{Payoff of one matrix game with local optimal points and learned joint policies based on our approach TAD-PPO and MAPPO.}
  \label{fig:matrix_per}
\end{figure*}

To understand the policy learning in the didactic example, we {show} a commonly-used matrix proposed by QTRAN \citep{Son2019QTRANLT} (\cref{fig:matrix_per} {(a)), and we} visualize the joint policies of TAD-PPO and MAPPO in \cref{fig:matrix_per} (b)(c). As show by the payoff matrix \cref{fig:matrix_per}(a), there are one global optimum point and local optima in this matrix game. Through the \textit{Transformation And Distillation} framework, TAD-PPO can learn the optimal policy (\cref{fig:matrix_per}(b)). However, with random initialization of the neural networks of the policies, MAPPO, one of the most common MARL policy gradient methods, converges to the bad local optima (\cref{fig:matrix_per} (c)).

 \paragraph{Another didactic example \textit{Hallway}, as proofs of concept of the optimality in multi-step scenarios with two-agent cooperation, can be found in \cref{app:hallway_didactic}.}
\subsection{StarCraft II}
\label{smac}

\begin{figure*}[!htbp]
  \centering
    \includegraphics[width=.95\linewidth]{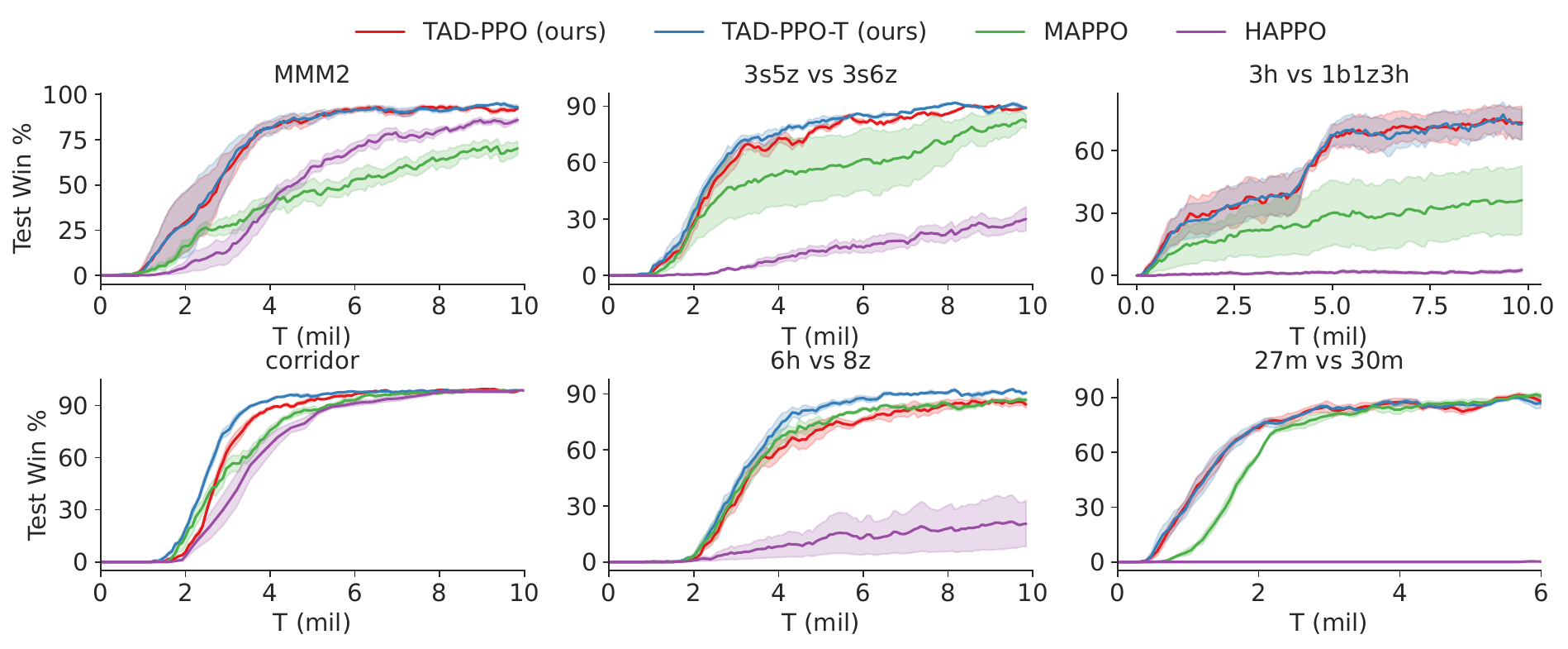}
  \caption{Performance comparison on SMAC benchmark.}
  \label{fig:smac_results}
\end{figure*}


\begin{figure*}[!h]
  \centering
    \includegraphics[width=.95\linewidth]{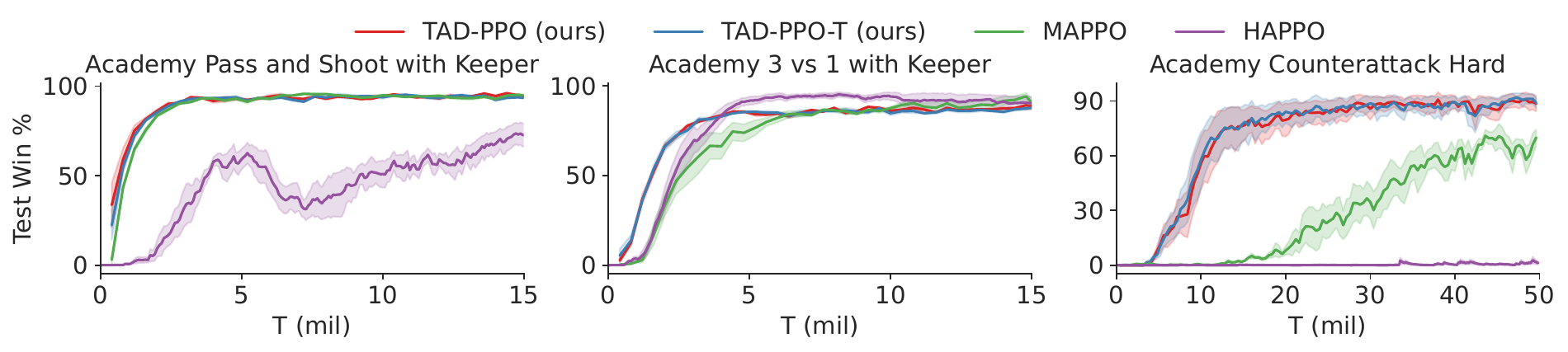}
  \caption{Performance comparison on GRF benchmark.}
  \label{fig:grf_per}
\end{figure*}

We benchmark our method on all 14 maps which are classified as easy, hard, and super hard {on the StarCraft Multi-Agent Challenge (SMAC) benchmark}. {Besides}, we design a novel map {$\mathtt{3h\_vs\_1b1z3h}$} with one local optimal point (game winning strategy) and one global optimal point {in it.}  {We compare our TAD-PPO with MA-PG baselines in this map.} In this section, we illustrate the comparison of performance on five super hard maps  {provided in SMAC benchmark}, $\mathtt{MMM2}$, $\mathtt{3s5z\_vs\_3s6z}$, $\mathtt{corridor}$, $\mathtt{6h\_vs\_8z}$, $\mathtt{27m\_vs\_30m}$, and {a new map ($\mathtt{3h\_vs\_1b1z3h}$) designed by this work} based on the SMAC benchmark. The comparison of performance {for other SMAC maps} is {presented} in \cref{sec:smac_all}. We also benchmark our method on SMAC v2, an updated MARL testbed on the mostly used SMAC benchmark with the inclusion of stochasticity. Details  can be found in \cref{app:exp_smacv2}.

\begin{figure}[H]
  \centering
    \includegraphics[width=.9\linewidth]{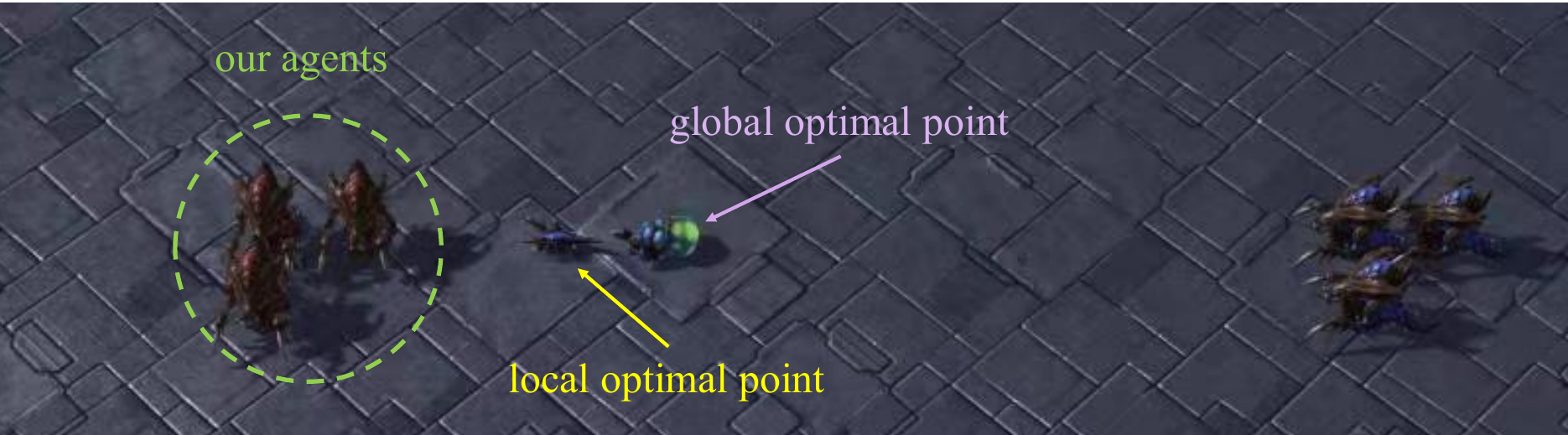}
  \caption{Demonstration of the global optimal and suboptimal solutions in map $\mathtt{3h\_vs\_1b1z3h}$.}
  \label{fig:3h_vs_1b1z3h}
\end{figure}

{} \cref{fig:smac_results} shows that in these super hard SMAC maps, our {TAD-PPO-T (joint policy at the transformation stage), and TAD-PPO (decentralized policies at the distillation stage)} algorithm, based on the TAD framework, demonstrate significant outperformance compared with MAPPO in half maps and outperforms HAPPO in five out of six maps. Super hard maps are typically hard-exploration tasks. {T}aking benefit of {TAD's} global optimality guarantee, our approach can exploit better {under the use of} the same exploration strategy as MAPPO ({which explores} based on the entropy of learned policies).

\cref{fig:smac_results} {validates} our transformation and distillation framework. With {sequentially} transformed information, TAD-PPO-T achieves superior performance { and enjoys the merits of global optimality of its corresponding single-agent PPO algorithm).} {By distillation, TAD-PPO, the counterpart of TAD-PPO-T, can execute in a decentralized manner with comparable performance.}

We further discuss {the} learning performance in our {newly-}designed map $\mathtt{3h\_vs\_1b1z3h}$ {In this task,} we control three Hydralisks, while {the} opponent controls three Hydralisks, one low-damage Zergling, and one Baneling with high area {explosive} damage. The initial positions of our controlled agents and opposing agents are shown in Figure~\ref{fig:3h_vs_1b1z3h}, where our agents are initialized on the left side of the map. The {explosive damage} of the Baneling can only be avoided if all agents gather fire to it instead of {firing against} the {nearer neighbor} Zergling {in the first place
}. {G}athering fire to the Zergling could be a suboptimal equilibrium -- any agent's fire transfer from Zergling to Baneling {does} not influence the explosion of Baneling but reduce the efficiency of killing Zergling. In this case, {TAD} can jump out of the local optimal point after 4M time steps, while the state-of-the-art policy-based MARL algorithms like MAPPO can only fall into the local optimal point {on firing against the Zergling}. This phenomenon emphasizes the superior advantages of our algorithm in the environment with local optimal points.

We conduct further experiments to compare TAD-PPO with more latest MARL algorithms that also consider agents' dependency in their design,  including DFOP\citep{wang2022dfop}, and ACE\citep{li2023ace}. Details are elaborated in \cref{app:exp_smacv2}.

\subsection{Google Research Football}
\label{grf}


In this section, we test our approach against state-of-the-art policy-based MARL baselines on another multi-agent benchmark {} \textit{Google Research Football} (GRF). In GRF academic scenarios, {we control the agents on the left side, and attempt to score against the opposing team's defenders and goalkeeper.} Following {the evaluation settings of} MAPPO \citep{yu2021surprising}, we use sparse rewards with both SCORING and CHECKPOINT for {the evaluation of } our approach and all baselines, {and we use} the official \textit{simple115v2} ~\citep{kurach2019google}, an 115-dimensional vector  as the observation. Each agent has a discrete action space of size 19, including moving in eight directions, sliding, shooting, and passing.

We benchmark our approach on three different GRF scenarios with 2, 3, 4 agents respectively (\cref{fig:grf_per}). In GRF scenarios, agents  {are required to coordinate timing and positions, orchestrating an offense to capitalize fleeting opportunities.} The {coordination} between agents is difficult {} because of the sparsity of {feedback on} agents' crucial movements. {} TAD framework {requires agents to assimilate sequential information from their preceding agents' actions during training, which fosters enhanced coordination for sophisticated cooperation.}  Compared with the state-of-the-art policy-based {baselines of MAPPO and HAPPO}, our approach achieves superior performance in the most challenging scenario \texttt{Academy\allowbreak\_Counterattack\allowbreak\_Hard}
on GRF and achieves comparable performance in other scenarios, {highlighting the generality} of the TAD framework.




\section{Conclusion}
Prior efforts on cooperative MARL algorithms are able to learn decentralized policies with significant performance and theoretical properties. However, achieving optimality in practice remains as a challenge. In this work, we identify the pivotal role of optimization in bridging the gap between theory and practice. This study represents the first theoretical characterization of the suboptimality issue that arises when using gradient descent for optimization in multi-agent tasks. We formally show that MARL algorithms' decentralized structure constitutes a constrained optimization procedure, causing gradient descent to asymptotically get stuck in local optima. To mitigate this, we introduce the \textit{Transformation And Distillation} framework, enabling the use of standard, off-the-shelf single-agent RL methods to address cooperative multi-agent tasks while retaining their global optimality. Building upon the TAD framework, TAD-PPO sets a new benchmark with optimality guarantees, surpassing existing baselines across various tasks.

This paper offers two key insights to the community: Firstly, the empirical design of MARL algorithms should incorporate specific optimization techniques; and secondly, the TAD framework acts as a bridge, facilitating the smooth application of single-agent algorithms to tackle multi-agent problems with well-established guarantees.
Limitations and future avenues of this work are discussed in detail at \cref{app:limitations_and_discussions}.



\bibliographystyle{ACM-Reference-Format} 
\bibliography{references}


\newpage
\appendix
\onecolumn
\section{Omitted Definitions in Section  \ref{sec:preliminary}}

\subsection{RL Models}
\label{app:RLmodels}
In single-agent RL (SARL), an agent interacts with a Markov Decision Process (MDP) to maximize its cumulative reward \citep{sutton2018reinforcement}. An MDP is defined as a tuple $(\mathcal S,\mathcal A,P,r,\gamma)$, where $\mathcal S$ and $\mathcal A$ denote the state space and action space. At each time step $t$, the agent observes the state $s_t$ and chooses an action $a_t\in \mathcal A$, where $a_t\sim \pi(s_t)$ depends on $s_t$ and its policy $\pi$. After that, the agent will gain an instant reward $r_t=r(s_t,a_t)$, and transit to the next state $s_{t+1}\sim P(\cdot|s_t,a_t)$. $\gamma$ is the discount factor. The goal of an SARL agent is to optimize a policy $\pi$ that maximizes the expected cumulative reward, i.e., $\mathcal{J}(\pi)=\mathbb{E}_{s_{t+1}\sim P(\cdot{|s_t,\pi(s_t)})}\left[\sum_{t=0}^\infty\gamma^tr(s_t,\pi(s_t))\right]$.

we also present a framework of Multi-agent MDPs (MMDP) \citep{boutilier1996planning}, a special case of Dec-POMDP, to model cooperative multi-agent decision-making tasks with full observations. MMDP is defined as a tuple $\langle \mathcal{S}, \mathcal{A}, P, r, \gamma,n\rangle$, where  $\mathcal{S}$, $\mathcal{A}$, $P$, $r$, $\gamma$ and $n$ are state space, individual action space, transition function, reward function, discount factor, and the number of agents in Dec-POMDP, respectively. Due to the full observations, at each time step, the current state $s$ is observable to each agent. For each agent $i$, an individual policy $\hat{\pi}_i(a|s)$ represents a distribution over actions conditioned on the state $s$. Agents aim to find a joint policy $\hat{\bm{\pi}}=\langle \hat{\pi}_1, \dots, \hat{\pi}_n\rangle$ that maximizes a joint value function $\widehat{V}^{\hat{\bm{\pi}}}(s)$, where denoting $\widehat{V}^{\hat{\bm{\pi}}}(s)=\mathbb{E}\left[\sum_{t=0}^{\infty} \gamma^t r_t|s_0=s,\hat{\bm{\pi}}\right]$.

\section{Omitted Proofs in Section  \ref{sec:motivation}}
\label{app:motivation}

\subsection{Some Lemmas for Gradient Descent and Local minima}

Since we are going to analyze the behavior of gradient descent (GD), we introduce some preliminaries and lemmas related to GD here. 

GD in this paper is chosen to minimize some real-valued function $f(x)$. It starts from some initial point $x_0$, and updates the points as $x_{t+1} = x_{t}  -\alpha\nabla f(x_t)$, where $\alpha>0$ is a sufficiently small constant learning rate.
Classical analysis of GD requires the objective function $f(x)$ to be smooth and $\alpha$ to be a sufficiently small (or diminishing) learning rate to guarantee convergence to a local optimum.

The $L$-smoothness assumption is usually adopted, which says the derivative function is $L$-Lipschitz.

\begin{definition}[$L$-smoothness]

A function $f(x)$ is said to be $L$-smooth, if its derivative function is $L$-Lipschitz, i.e., $\forall x,x':\|\nabla f(x)-\nabla f(x')\|\le L\|x-x'\|$.
\end{definition}

We then present several lemmas for $L$-smooth functions (\cref{app-lemma:lsmooth1,app-lemma:smooth-gd-monotone,app-lemma:smooth-gd-converge}), which are classical results in the analysis of gradient descent.

\begin{lemma}
\label{app-lemma:lsmooth1}
  Suppose $f$ is $L$-smooth, then $\forall x,x_0:f(x)\le f(x_0)+\langle\nabla f(x_0),x-x_0\rangle+\frac L 2\|x-x_0\|^2$.

\end{lemma}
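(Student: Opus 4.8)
The plan is to establish this quadratic upper bound — the classical \emph{descent lemma} — by integrating the gradient along the line segment joining $x_0$ to $x$ and controlling the integrand with the Lipschitz hypothesis. Concretely, fix $x,x_0$ and define the scalar function $g:[0,1]\to\R$ by $g(t)=f\big(x_0+t(x-x_0)\big)$. Since $f$ is differentiable with a Lipschitz (hence continuous) gradient, $g$ is continuously differentiable and the chain rule gives $g'(t)=\langle\nabla f(x_0+t(x-x_0)),\,x-x_0\rangle$. By the fundamental theorem of calculus, $f(x)-f(x_0)=g(1)-g(0)=\int_0^1 g'(t)\,\dd t$.

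Next I would subtract off the linear term. Writing $\langle\nabla f(x_0),x-x_0\rangle=\int_0^1\langle\nabla f(x_0),x-x_0\rangle\,\dd t$, we obtain
\[
f(x)-f(x_0)-\langle\nabla f(x_0),x-x_0\rangle=\int_0^1\big\langle\nabla f(x_0+t(x-x_0))-\nabla f(x_0),\,x-x_0\big\rangle\,\dd t.
\]
Then I apply the Cauchy--Schwarz inequality to the integrand pointwise, followed by the $L$-smoothness assumption, which gives $\|\nabla f(x_0+t(x-x_0))-\nabla f(x_0)\|\le L\,\|t(x-x_0)\|=Lt\,\|x-x_0\|$. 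This bounds the right-hand side by $\int_0^1 Lt\,\|x-x_0\|^2\,\dd t=\frac{L}{2}\|x-x_0\|^2$, which is exactly the claimed inequality.

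There is essentially no hard step here; the only points that merit a line of care are: (i) justifying that $t\mapsto\nabla f(x_0+t(x-x_0))$ is integrable so the fundamental theorem of calculus applies, which follows because $L$-Lipschitzness forces $\nabla f$ to be continuous; and (ii) taking the absolute value bound before integrating, since the scalar integrand can change sign — so I bound $|\langle\cdot,\cdot\rangle|$ pointwise and only then integrate. I would fold both into one-sentence remarks rather than spell them out as separate lemmas.
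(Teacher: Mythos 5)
Your proposal is correct and follows essentially the same route as the paper's proof: both define $g(t)=f(x_0+t(x-x_0))$, apply the fundamental theorem of calculus, subtract the linear term, and bound the remaining integrand via Cauchy--Schwarz and the Lipschitz hypothesis to get $\int_0^1 Lt\|x-x_0\|^2\,\dd t=\frac{L}{2}\|x-x_0\|^2$. Your added remarks on continuity of the gradient and on taking absolute values before integrating are fine points the paper glosses over, but nothing of substance differs.
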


\begin{proof}
  Let $g(t) = f(x_0+t(x-x_0))$, then $g(1) = f(x), g(0) = f(x_0)$.
  \begin{align*}
    f(x)-f(x_0)& = g(1)-g(0) = \int_0^1 g'(t) \dd t\\
    & = \int_0^1\langle \nabla f(x_0+t(x-x_0)),x-x_0\rangle\dd t\\
    & = \int_0^1\langle \nabla f(x_0+t(x-x_0))-\nabla f(x_0),x-x_0\rangle \dd t + \langle \nabla f(x_0),x-x_0\rangle\\
    & \le \int_0^1\|\nabla f(x_0+t(x-x_0))-\nabla f(x_0)\|\cdot \|x-x_0\|\dd t + \langle  \nabla f(x_0),x-x_0\rangle \\
    & \le \int_0^1 Lt\|x-x_0\|\cdot\|x-x_0\|\dd t+\langle \nabla f(x_0),x-x_0\rangle\\
    & = \frac L 2\|x-x_0\|^2+\langle \nabla f(x_0),x-x_0\rangle
  \end{align*}
\end{proof}

Then we have the following lemma for GD.

\begin{lemma}
\label{app-lemma:smooth-gd-monotone}
  Under the above conditions, if $\alpha\in \left(0,\frac{2}{L}\right)$, then $f(x_{t+1})\le f(x_t) - c\|\nabla f(x_t)\|^2$ for some constant $c > 0$.
\end{lemma}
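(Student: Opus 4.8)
The plan is to read off the claim directly from the descent inequality in \cref{app-lemma:lsmooth1}, specialized to the gradient step. First I would set $x_0 = x_t$ and $x = x_{t+1} = x_t - \alpha\nabla f(x_t)$, so that $x - x_0 = -\alpha\nabla f(x_t)$. Plugging these into \cref{app-lemma:lsmooth1} gives
\begin{align*}
f(x_{t+1}) &\le f(x_t) + \langle \nabla f(x_t), -\alpha\nabla f(x_t)\rangle + \frac{L}{2}\|{-\alpha\nabla f(x_t)}\|^2 \\
&= f(x_t) - \alpha\|\nabla f(x_t)\|^2 + \frac{L\alpha^2}{2}\|\nabla f(x_t)\|^2 \\
&= f(x_t) - \alpha\Bigl(1 - \frac{L\alpha}{2}\Bigr)\|\nabla f(x_t)\|^2 .
\end{align*}

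Then I would define $c = \alpha\bigl(1 - \tfrac{L\alpha}{2}\bigr)$ and check positivity: since $\alpha \in (0, 2/L)$, we have $0 < L\alpha < 2$, hence $1 - \tfrac{L\alpha}{2} > 0$, and combined with $\alpha > 0$ this yields $c > 0$. This gives exactly $f(x_{t+1}) \le f(x_t) - c\|\nabla f(x_t)\|^2$, as desired, and the constant $c$ depends only on $\alpha$ and $L$ (not on $t$ or the trajectory).

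There is essentially no hard step here: the lemma is the standard one-step descent bound for gradient descent on an $L$-smooth objective, and the only thing to verify beyond the algebra is that the admissible range $\alpha \in (0, 2/L)$ is precisely what makes the coefficient $\alpha(1 - L\alpha/2)$ strictly positive (it is maximized at $\alpha = 1/L$, giving $c = 1/(2L)$, but any value in the open interval suffices). If a cleaner statement is preferred one could also record the sharper constant $c = \min\{\alpha/2, \alpha(1-L\alpha/2)\}$ or simply $c = \alpha/2$ under the slightly stronger assumption $\alpha \le 1/L$, but the version as stated follows immediately from the computation above.
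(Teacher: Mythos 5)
Your proposal is correct and follows exactly the same route as the paper: apply \cref{app-lemma:lsmooth1} with $x_0 = x_t$ and $x = x_{t+1} = x_t - \alpha\nabla f(x_t)$, simplify to $f(x_{t+1}) \le f(x_t) - \alpha\bigl(1 - \tfrac{L\alpha}{2}\bigr)\|\nabla f(x_t)\|^2$, and take $c = \alpha\bigl(1 - \tfrac{L\alpha}{2}\bigr) > 0$ for $\alpha \in (0, 2/L)$. Nothing further is needed.
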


\begin{proof}
    \begin{align*}
      f(x_{t+1})&\leq f(x_{t})+\langle \nabla f(x_{t}),x_{t+1}-x_{t} \rangle +\frac{L}{2}\|x_{t+1}-x_{t}\|^{2}  \tag{By \cref{app-lemma:lsmooth1}}\\
        &= f(x_{t})+\langle \nabla f(x_{t}),-\alpha\nabla f(x_t) \rangle +\frac{L}{2}\alpha^{2}\|\nabla f(x_t)\|^{2} \\
        &\leq f(x_{t})-\alpha \|\nabla f(x_{t})\|^{2} + \frac{L}{2} \alpha^2 \|\nabla f(x_{t})\|^{2} \\
        &=f(x_{t})-\alpha\left(1-\frac{L}{2} \alpha\right) \|\nabla f(x_{t})\|^{2}
    \end{align*}

    Letting $c = \alpha\left(1-\frac{L}{2} \alpha\right) > 0$ completes the proof.

\end{proof}

\begin{lemma}
\label{app-lemma:smooth-gd-converge}
    Under the above conditions, if $\alpha\in \left(0,\frac{2}{L}\right)$, then either $f(x)$ decreases to $-\infty$ or $f(x)$ converges and $\lim_{t\rightarrow +\infty} \nabla f(x_t) = 0$.
    If $\{x_t\}$ converges, the limiting point is a stationary point of $f$.
\end{lemma}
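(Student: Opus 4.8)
The plan is to build directly on \cref{app-lemma:smooth-gd-monotone}, which already supplies the per-step descent inequality $f(x_{t+1}) \le f(x_t) - c\|\nabla f(x_t)\|^2$ with $c = \alpha\left(1-\tfrac{L}{2}\alpha\right) > 0$ whenever $\alpha \in \left(0, \tfrac{2}{L}\right)$. First I would note that this inequality makes $\{f(x_t)\}_{t\ge 0}$ a non-increasing sequence of real numbers, so by the monotone convergence theorem it either converges to some finite limit $f^\star$ or decreases to $-\infty$. This is exactly the dichotomy asserted in the first sentence of the lemma.

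Next, in the convergent branch, I would telescope the descent inequality: for every $T$,
\[
c\sum_{t=0}^{T}\|\nabla f(x_t)\|^2 \;\le\; \sum_{t=0}^{T}\bigl(f(x_t)-f(x_{t+1})\bigr) \;=\; f(x_0)-f(x_{T+1}) \;\le\; f(x_0)-f^\star .
\]
Letting $T\to\infty$ yields $\sum_{t=0}^{\infty}\|\nabla f(x_t)\|^2 \le (f(x_0)-f^\star)/c < \infty$, so the summand must tend to zero, i.e. $\lim_{t\to\infty}\nabla f(x_t)=0$, which is the second half of the first sentence.

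For the last claim, suppose $\{x_t\}$ converges to some $\bar x$. Since an $L$-smooth function is in particular continuous, $f(x_t)\to f(\bar x)$, so $\{f(x_t)\}$ converges to the finite value $f(\bar x)$ and we are in the convergent branch; hence $\nabla f(x_t)\to 0$. On the other hand, $\nabla f$ is $L$-Lipschitz and therefore continuous, so $\nabla f(x_t)\to \nabla f(\bar x)$. Uniqueness of limits gives $\nabla f(\bar x)=0$, so $\bar x$ is a stationary point. There is no serious obstacle in any of this — every step is a standard manipulation; the only point that warrants a moment's care is the logical bookkeeping of the dichotomy, namely observing that convergence of the \emph{iterates} forces convergence of $\{f(x_t)\}$ to a finite limit (via continuity of $f$), so that the telescoping bound is legitimately available before one invokes continuity of $\nabla f$.
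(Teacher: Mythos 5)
Your proof is correct and follows essentially the same route as the paper: the descent inequality from \cref{app-lemma:smooth-gd-monotone} plus the monotone convergence theorem gives the dichotomy, and telescoping gives $\nabla f(x_t)\to 0$. For the final claim you argue directly via continuity of $\nabla f$ (which the $L$-Lipschitz hypothesis supplies) rather than by the paper's contradiction with a one-step update near the limit point; your version is, if anything, the cleaner of the two, and you correctly flag the one subtle point — that convergence of the iterates places you in the finite-limit branch before the telescoping bound is invoked.
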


\begin{proof}
    The first part of this lemma is a corollary of \cref{app-lemma:smooth-gd-monotone} and Monotone Convergence Theorem.
    The second part can be proved by contradiction. Supposing the limit point $x^*$ is not a stationary point, then $\nabla f(x^*)$ is non-zero due to the Lipschitz continuity, which derives a contradiction by taking one-step gradient update over a sufficiently close $x_t$ near $x^*$. 
\end{proof}

Besides, the following lemma for local minima is simple but crucial in our analysis.

\begin{lemma}
    \label{app-lemma:local-mini-continuous-local}
    If $g$ is a continuous function, and $y_0 = g(x_0)$ is a local minimum of the function $f(y)$, then $x_0$ is a local minimum of $f\circ g$.
\end{lemma}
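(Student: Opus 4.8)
The plan is to prove that local minimality is preserved under composition with a continuous inner function, which is \cref{app-lemma:local-mini-continuous-local}. I would work directly from the definition of a local minimum: $y_0 = g(x_0)$ being a local minimum of $f$ means there is a neighborhood $U$ of $y_0$ such that $f(y) \ge f(y_0)$ for all $y \in U$. The goal is to produce a neighborhood $W$ of $x_0$ on which $(f\circ g)(x) \ge (f\circ g)(x_0)$.

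\textbf{Key steps.} First I would fix the neighborhood $U$ of $y_0$ witnessing that $y_0$ is a local minimum of $f$, so that $f(y)\ge f(y_0)$ for every $y\in U$. Second, I would invoke continuity of $g$ at the point $x_0$: since $g(x_0) = y_0 \in U$ and $U$ is open (or contains an open neighborhood of $y_0$), the preimage $g^{-1}(U)$ contains an open neighborhood $W$ of $x_0$; concretely, given any $\epsilon$-ball $B(y_0,\epsilon)\subseteq U$, continuity yields a $\delta$-ball $W = B(x_0,\delta)$ with $g(W)\subseteq B(y_0,\epsilon)\subseteq U$. Third, for any $x\in W$ we have $g(x)\in U$, hence by the defining inequality for $U$, $f(g(x)) \ge f(y_0) = f(g(x_0))$. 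That is exactly the statement that $x_0$ is a local minimum of $f\circ g$, completing the proof.

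\textbf{Main obstacle.} Honestly there is no serious obstacle here — this is a routine topological/analytic fact, and the entire content is the careful bookkeeping of translating ``$g$ continuous at $x_0$'' into ``$g$ pulls back a neighborhood of $g(x_0)$ to a neighborhood of $x_0$.'' The only point that warrants a moment's care is making sure the notion of neighborhood is used consistently (open set containing the point, or an $\epsilon$-ball — either works, and the paper's ambient setting is a Euclidean parameter space, so balls are fine). If one wanted to be maximally careful one would also note that the lemma does not require $g$ to be continuous everywhere, only at $x_0$, though stating it with global continuity (as the paper does) is a harmless strengthening of the hypothesis. This lemma is then the bridge that will later let us transfer a local-minimum structure from the value/policy space to the parameter space regardless of the parameterization map, which is the crux of \cref{thm:ma-pg-fails} and \cref{thm:vd-fails}.
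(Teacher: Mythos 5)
Your proposal is correct and follows essentially the same argument as the paper: pull back the neighborhood witnessing that $y_0$ is a local minimum of $f$ through the continuous map $g$ to obtain a neighborhood of $x_0$ on which $f\circ g$ is bounded below by $f(g(x_0))$. The only cosmetic difference is that the paper phrases the pullback via openness of $g^{-1}(O)$ while you also offer the equivalent $\epsilon$--$\delta$ formulation; both are the same routine argument.
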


\begin{proof}
    Since $y_0$ is a local minimum of $f(y)$, there exists a small open set $O\ni y_0$, such that $\forall y\in O : f(y)\ge f(y_0)$.

    Due to the continuity of $g$, $O_1 = g^{-1}(O)$ is also open. So that we can find a small neighborhood $O_2$ of $x_0$, such that $x_0\in O_2\subseteq O_1$, which implies $g(O_2)\subseteq O$.
    Then we have $\forall x\in O_2: f(g(x))\ge f(y_0)=f(g(x_0))$.
\end{proof}


\subsection{The Suboptimality Theorem for Multi-agent Policy Gradient Algorithm (MA-PG)}
\label{app:proofs-mapg}

Here we present a proof of \cref{thm:ma-pg-fails}.

Recall the policy factorization and the loss function of MA-PG:
\begin{align*}
&\forall s\in \mathcal S:\quad \bm{\pi}(\bm{a}|s;\Theta) = \prod_{i=1}^n \pi_i(a_i|s;\Theta)     \tag{\ref{eq:pi-factor}}\\
&\mathcal L(\Theta) = -\mathcal J(\bm \pi_\Theta)   \tag{\ref{eq:mapg-loss}}
\end{align*}

To simplify our analysis, we abuse the notation a little bit to rewrite the policy factorization as
\begin{align}
\forall s\in \mathcal S:\quad \bm{\pi}(s;\Theta) = \bigotimes_{i=1}^n \pi_i(s;\Theta)
\end{align}

where $\pi_i(s;\Theta)$ is now a probability vector in the probability simplex $\Delta(\mathcal A)$, and $\bm\pi(s;\Theta)$ is the tensor product of these vectors.

We also use $L(\bm\pi) = -\mathcal J(\bm \pi)$ or $L(\pi_1,\cdots,\pi_n) = -\mathcal J(\bigotimes\pi_i) $ to represent the loss of a policy, which is essentially the same as $\mathcal L$ except for its domain. 

Our analysis is based on the implicit assumptions:

\begin{assumption}[smoothness]
\label{assump:mapg-param-smooth}
For all $s\in \mathcal S$, the function $\bm\pi(s;\Theta)$ is twice differentiable.
\end{assumption}

\begin{assumption}[bounded gradient]
\label{assump:mapg-param-boundedgrad}
For all $s \in \mathcal S,i\in\{1,\cdots,n\}$, $\|\nabla \pi_i(s;\Theta)\|$ and $\|\nabla^2 \pi_i(s;\Theta)\|$ are bounded.
\end{assumption}

\begin{assumption}[completeness]
\label{assump:mapg-param-complete}
For all $s\in \mathcal S$, all probability distribution in the product space $\bigtimes_{i=1}^n\Delta(\mathcal A)$ is representable by $\bm \pi(s;\Theta)$.
\end{assumption}

The smoothness assumption is common. The bounded gradient assumption is also wildly used in analyzing GD and non-convex optimization problems\citep{Recht2011HogwildALboundedgrad, Nemirovski2008RobustSAboundedgrad, ShalevShwartz2007PegasosPEboundedgrad}. The completeness assumption (or regularity assumption) is made to prevent degenerated cases such as the function being a constant ($\Theta$ does not affect the policy), which is of little significance to analyze.

\begin{reptheorem}{thm:ma-pg-fails}
There are tasks such that, when the parameter $\Theta$ is initialized in certain region $S$ with positive volume, MA-PG (\cref{eq:pi-factor,eq:mapg-loss}) converges to a suboptimal policy with a small enough learning rate $\alpha$.
\end{reptheorem}

\begin{proof}
    We prove the theorem in three steps.
    First, we construct a matrix game just like \cref{matgame1}, which is a 1-step MMDP.
    Second, we prove the loss function in this task contains $\Omega(|\mathcal A|)$ suboptimal local minima.
    Finally, we prove that for each local minima we construct, there is a neighborhood $S$, such that the gradient descent process will never escape from it with a small enough learning rate.
    Moreover, the policy will converge.

    We construct a 2-agent matrix game here with the payoff matrix $M$:
    
    $$M = \left(\begin{array}{cccc}
        1 & 0 & \cdots & 0  \\
         0 & 2 &\cdots & 0\\
         \vdots& \vdots&\ddots &\vdots\\
         0 & 0 & \cdots & |\mathcal A|
    \end{array}\right)$$
    
    Since there is only one state, we omit the state $s$ in our policy.
    
    We claim that for any deterministic policy selecting the entry of the diagonal of $M$, the corresponding $\Theta$ is a local minimum of $\mathcal L$, where the existence of $\Theta$ is guaranteed by \cref{assump:mapg-param-complete}.
    
    That is, we have $\Theta_1,\cdots, \Theta_{|\mathcal A|}$ such that $\bm\pi(\Theta_i)$ is the deterministic policy selecting the entry $M_{i,i}$. And if the claim is true, $\Theta_i$ will be a suboptimal local minimum of $\mathcal L$ for $i=1,\cdots,|\mathcal A|-1$.
    
    We now prove that for any $i\in\{1,\cdots, |\mathcal A|\}$, $\Theta_i$ is a local minimum.
    
    Fix some $i$. Denote $\bm p=\bm q=\delta_i$, where $\delta_i\in \Delta(|\mathcal A|)$ is the one-hot probability vector indicating the deterministic policy of selecting the $i$-th action. By definition of $\Theta_i$, we have $\bm\pi(\Theta_i) = \bm p\otimes \bm q$.
    
    The payoff of joint policy $\bm\pi(\Theta_i)$ is thus $\mathcal J(\bm\pi(\Theta_i)) = \bm p^\top M \bm q$. 
    Now we investigate some policy $\bm p'\otimes \bm q'$ in a sufficiently small neighborhood of $\bm p\otimes \bm q$ in the product space. 
     
    Denote $\Delta \bm p = (\bm p - \bm p'), \Delta\bm q = (\bm q-\bm q'), \epsilon_1 = \min\{\frac 1 2\|\Delta\bm p\|_1, \frac 1 2\|\Delta \bm q\|_1\}, \epsilon_2 = \max\{\frac 1 2\|\Delta\bm p\|_1, \frac 1 2\|\Delta \bm q\|_1\}$.
    We have $\epsilon_2\ge \Delta\bm p(i), \Delta\bm q(i)\ge \epsilon_1, -\epsilon_1\ge \Delta\bm p(j),\Delta\bm q(j)\ge -\epsilon_2$, for $j\neq i$.
    
    Therefore,
    \begin{align}
      \bm p'^\top M\bm q' &=  (\bm p-\Delta \bm p)^\top  M(\bm q-\Delta \bm q)   
      \nonumber\\
      &= \mathcal J(\bm p, \bm q) - \Delta \bm p^\top M\bm q - \bm p^\top M \Delta \bm q + \Delta \bm p^\top M\Delta \bm q 
      \nonumber\\
      &= \mathcal J(\bm p, \bm q) - \delta_i^\top M (\Delta\bm p + \Delta\bm q) + \Delta \bm p^\top M\Delta \bm q 
      \nonumber\\
      &= \mathcal J(\bm p, \bm q) - i (\Delta \bm p(i) + \Delta \bm q(i)) + \sum_{k=1}^{|\mathcal A|} k \Delta \bm p(k)\Delta \bm q(k)
      \nonumber\\
      &\le \mathcal J(\bm p, \bm q) - (\epsilon_1+\epsilon_2)  + \epsilon_2^2   |\mathcal A|^2
      \nonumber\\
      &= \mathcal J(\bm p, \bm q) - \Omega(\epsilon_2) + O(\epsilon_2^2) 
      \\
      &< \mathcal J(\bm p, \bm q) \tag*{Let $\epsilon_2$ be sufficiently small}
      \nonumber
    \end{align}
    
    This inequality suggests that $\delta_i\otimes \delta_i$ is a local minimum of the payoff in the space $\Delta(|A|)^2$.
    
    Then due to the continuity of $\bm\pi(\Theta)$ (\cref{assump:mapg-param-smooth}), $\Theta_i$ would be a local minimum of $\mathcal L(\Theta)$ in the space $\text{dom}(\Theta)$ by \cref{app-lemma:local-mini-continuous-local}.
    
    We now prove the for all $i=1,\cdots,|\mathcal A|-1$, there is a neighborhood $S$ of $\Theta_i$, such that for all $\Theta\in S$, a gradient update $\Theta'\gets \Theta-\alpha\nabla \mathcal L(\Theta)$ still lies in $S$ for small enough $\alpha$.
    
    In general, the local optimality does not imply local convexity\footnote{Consider a $C^\infty$ function $f(x) = \exp(-1/x^2)(\sin(1/x)+1)$ at $x=0$.}, and therefore no local convergence result is guaranteed for GD by initializing near a local optimum.
    However in this case, our function has a higher-level structure in the space $\Delta(|\mathcal A|)^2$, which is essential for us to establish our result.
    
    We fix some $i\in \{1,\cdots,|\mathcal A|-1\}$.
    
    Denote $p(\Theta) = ( \pi_1(\Theta) ,\pi_2(\Theta)) \in \Delta(|\mathcal A|)^2$, and let $x = p(\Theta_i)$.\footnote{Note that the range of $p$ is $\Delta(|\mathcal A|)^2 = \Delta(|\mathcal A|)\times \Delta(|\mathcal A|)$, while the range of $\bm \pi$ is $\Delta (|\mathcal A|^2) $.}
    Let $\epsilon>0$ be a sufficiently small constant to be determined later.
    In the space $\Delta(|\mathcal A|)^2$, we consider an open ball $B_{\epsilon}(x)$ centered at $x$ with radius $\epsilon$, where the distance is the total variation $\|\cdot\|_{TV}$, i.e., half-$L^1$-norm $\frac 1 2 \|\cdot\|_1$.
    The selection of the norm is not crucial, since all norms on a finite-dimensional linear space are equivalent\footnote{This is a classical result in finite linear space: $\forall \|\cdot\|_1,\|\cdot\|_2\exists c_1,c_2>0:c_1\|\cdot\|_1\le \|\cdot\|_2\le c_2\|\cdot\|_1$.}.

    We let $R = p^{-1}(B_{\epsilon}(x)) \ni \Theta_i$.
    And for any $\theta\in R$, denote the gradient update at $\theta$ as $s_\theta = -\alpha\nabla\mathcal L(\theta)$. 
    Adopting the notations above, we prove the following lemma.

    \begin{lemma}
    \label{app-lemma:MAPG-onestep-L-monotone}
        $\mathcal L(\theta + s_\theta)\le \mathcal L(\theta)$ for sufficiently small $\alpha$.
    \end{lemma}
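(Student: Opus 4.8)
The plan is to read off \cref{app-lemma:MAPG-onestep-L-monotone} from the global $L$-smoothness of $\mathcal L$ together with the one-step descent inequality already proved in \cref{app-lemma:smooth-gd-monotone}. The restriction $\theta\in R$ is not actually used for the monotonicity — it is inherited from the surrounding construction, where $R$ will later be shown to be forward-invariant — so I would in fact prove the inequality for every $\theta\in\text{dom}(\Theta)$.

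\textbf{Step 1: $\mathcal L$ is $L$-smooth.} On the matrix game we have $\mathcal L(\Theta)=-\pi_1(\Theta)^\top M\,\pi_2(\Theta)$, a bilinear function of $(\pi_1,\pi_2)$ composed with the twice-differentiable map $\Theta\mapsto(\pi_1(\Theta),\pi_2(\Theta))$ (by \cref{assump:mapg-param-smooth}; the marginals $\pi_i$ are linear images of $\bm\pi$, hence twice differentiable). Differentiating twice in the parameter coordinates $k,l$,
\begin{align*}
\partial_k\partial_l\mathcal L
&= -(\partial_k\partial_l\pi_1)^\top M\pi_2 - (\partial_k\pi_1)^\top M(\partial_l\pi_2)\\
&\quad - (\partial_l\pi_1)^\top M(\partial_k\pi_2) - \pi_1^\top M(\partial_k\partial_l\pi_2).
\end{align*}
Every factor here is uniformly bounded: $\pi_1,\pi_2$ lie in the probability simplex, $M$ is a fixed matrix, and $\|\nabla\pi_i\|,\|\nabla^2\pi_i\|$ are bounded by \cref{assump:mapg-param-boundedgrad}. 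Hence $\|\nabla^2\mathcal L(\Theta)\|\le L$ for some constant $L$, so $\nabla\mathcal L$ is $L$-Lipschitz and $\mathcal L$ is $L$-smooth.

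\textbf{Step 2: apply the descent lemma.} Taking ``sufficiently small $\alpha$'' to mean $\alpha\in(0,2/L)$ and recalling $s_\theta=-\alpha\nabla\mathcal L(\theta)$, so that $\theta+s_\theta$ is exactly one gradient-descent step on $\mathcal L$, \cref{app-lemma:smooth-gd-monotone} gives
\[
\mathcal L(\theta+s_\theta)\le\mathcal L(\theta)-c\,\|\nabla\mathcal L(\theta)\|^2,\qquad c=\alpha\Bigl(1-\tfrac{L}{2}\alpha\Bigr)>0,
\]
and dropping the nonnegative term yields $\mathcal L(\theta+s_\theta)\le\mathcal L(\theta)$, as claimed.

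There is essentially no obstacle in this lemma itself: it is the textbook fact that gradient descent on a smooth function does not increase the objective, the only care being the derivation of the uniform Hessian bound from \cref{assump:mapg-param-smooth,assump:mapg-param-boundedgrad}. The real work is deferred to the companion lemmas that build on it — showing that $R$ (equivalently $S$) is forward-invariant and that the iterates converge — which will have to exploit the bilinear structure of $\mathcal J$ on $\Delta(|\mathcal A|)^2$ near the diagonal minimum $x=(\delta_i,\delta_i)$, not just generic smoothness.
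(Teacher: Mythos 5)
Your proof is correct and follows essentially the same route as the paper: both establish a uniform bound on $\|\nabla^2\mathcal L\|$ by combining the bilinear structure of $L(y)=-\bm p^\top M\bm q$ with the bounds on $\|\nabla\pi_i\|$ and $\|\nabla^2\pi_i\|$ from \cref{assump:mapg-param-boundedgrad}, and then invoke the descent inequality of \cref{app-lemma:smooth-gd-monotone} for $\alpha\in(0,2/L)$. Your observation that the restriction $\theta\in R$ plays no role here is also consistent with the paper, whose smoothness bound is likewise stated for arbitrary $\Theta$.
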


    \begin{proof}
        According to \cref{assump:mapg-param-boundedgrad}, we have $\|\nabla p (\Theta)\|$ and $\|\nabla^2 p (\Theta)\|$ are bounded by some constant $G$.
        Then for any $\Theta$, by applying the chain rule and the product rule of gradient, we have

        \begin{align*}
        \|\nabla^2 \mathcal L(\Theta)\| &= \|\nabla^2 p(\Theta) \nabla L(x) + \nabla p(\Theta) \nabla^2 L(x)\| \\
        &\le \|\nabla ^2 p(\Theta)\|\|\nabla L(x)\| + \|\nabla p(\Theta)\|\|\nabla ^2 L(x)\|\\
        &\le G |\mathcal A| + G\sqrt {|\mathcal A|}
        \end{align*}

        Let $H = G |\mathcal A| + G\sqrt {|\mathcal A|}$, then $\nabla \mathcal L$ is $H$-Lipschitz by the mean-value theorem, which means $\mathcal L$ is $H$-smooth. 
        
        Then by \cref{app-lemma:smooth-gd-monotone}, $\mathcal L(\theta + s_\theta)$ is not increasing by making $\alpha<2/H$.

    \end{proof}

    We further denote $S_\delta = \{y\in \Delta(|\mathcal A|)^2: L(y)\le L(x) + \delta\} \cap B_{\epsilon}(x)$.
    Here comes the key lemma.

    \begin{lemma}
        \label{app-lemma:MAPG-local-step}
        There exist $\delta,\epsilon',\epsilon$, such that
        \begin{enumerate}
            \item $\delta>0$ and $0<\epsilon'<\epsilon$.
            \item $S_\delta \subset B_{\epsilon'}(x)\subset B_\epsilon (x)$.
        \end{enumerate}
    \end{lemma}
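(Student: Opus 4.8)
The plan is to establish \cref{app-lemma:MAPG-local-step} by exploiting the explicit structure of the loss $L$ on the product simplex $\Delta(|\mathcal A|)^2$ near the deterministic corner $x = \delta_i\otimes\delta_i$. The crucial ingredient already in hand is the computation
\[
L(y) \;=\; -\mathcal J(\bm p, \bm q) \;\ge\; L(x) + \Omega(\epsilon_2) - O(\epsilon_2^2),
\]
derived in the proof of \cref{thm:ma-pg-fails}, where for $y = (\bm p,\bm q)$ we write $\epsilon_2 = \tfrac12\max\{\|\bm p - \delta_i\|_1, \|\bm q-\delta_i\|_1\}$. Observe that $\epsilon_2$ is comparable to the total-variation distance $\|y - x\|_{TV}$ of $y$ from the corner (up to universal constants, since $\|y-x\|_{TV}$ is the max of the two half-$L^1$ distances): there are constants $c_1,c_2>0$ with $c_1\|y-x\|_{TV} \le \epsilon_2 \le c_2\|y-x\|_{TV}$. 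Hence there are universal constants $A,B>0$ and a radius $\epsilon_0>0$ such that, for all $y \in B_{\epsilon_0}(x)$,
\[
L(y) - L(x) \;\ge\; A\,\|y-x\|_{TV} \;-\; B\,\|y-x\|_{TV}^2 \;\ge\; \tfrac{A}{2}\,\|y-x\|_{TV}
\]
once we further shrink $\epsilon_0$ so that $B\epsilon_0 \le A/2$.

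With this quadratic-growth-type lower bound secured, the rest is bookkeeping. First I would fix $\epsilon := \epsilon_0$ (or anything smaller), so that $B_\epsilon(x)$ lies in the region where the inequality above holds. Next, choose any $\epsilon'$ with $0 < \epsilon' < \epsilon$, say $\epsilon' = \epsilon/2$. Finally, set $\delta := \tfrac{A}{2}\epsilon'$, which is strictly positive. Then for any $y \in S_\delta$, by definition $L(y) \le L(x) + \delta$ and $y \in B_\epsilon(x)$, so the growth bound gives $\tfrac{A}{2}\|y-x\|_{TV} \le L(y) - L(x) \le \delta = \tfrac{A}{2}\epsilon'$, hence $\|y-x\|_{TV} \le \epsilon' < \epsilon$, i.e. $y \in B_{\epsilon'}(x) \subset B_\epsilon(x)$. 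This is exactly conclusion (2), and conclusion (1) holds by construction.

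I expect the only subtlety to be the passage from the ``$\Omega$/$O$'' notation in the earlier display to honest universal constants $A,B$ and an explicit threshold radius $\epsilon_0$ — i.e.\ checking that the linear term genuinely dominates in a fixed neighbourhood of $x$ independent of the later choices, and that the constants hidden in $\Omega(\cdot)$ and $O(\cdot)$ depend only on $|\mathcal A|$ and $i$ (which are fixed) and not on $y$. One must also be slightly careful that $\epsilon_2$ and $\|y-x\|_{TV}$ really are equivalent: $\epsilon_2$ is the maximum of the two marginal TV distances while $\|y-x\|_{TV}$ on the product space is a sum (or max, depending on convention) of the marginal distances, but both are equivalent to $\max\{\|\bm p-\delta_i\|_1,\|\bm q-\delta_i\|_1\}$ up to a factor of $2$, so no genuine difficulty arises. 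Everything else is a one-line deduction from the definitions of $S_\delta$, $B_{\epsilon'}(x)$, and $B_\epsilon(x)$. The lemma will then feed into the final argument that a gradient step from any $\theta \in R = p^{-1}(B_\epsilon(x))$ stays inside $R$: \cref{app-lemma:MAPG-onestep-L-monotone} keeps $L$ from increasing, so the iterate cannot leave the sublevel-set-intersected-ball $S_\delta \subset B_{\epsilon'}(x)$ once it enters it, and a continuity/small-step argument handles the approach to $S_\delta$.
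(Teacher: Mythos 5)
Your proof is correct and is essentially the paper's own argument: both exploit the exact bilinear expansion of the payoff around the corner $x=(\delta_i,\delta_i)$ so that the linear term $\Omega(\epsilon_2)$ dominates the quadratic term $O(\epsilon_2^2)$ in a small ball, forcing the sublevel set $S_\delta$ into a strictly smaller ball (the paper solves $L(y)\le L(x)+\delta$ directly for $\|x-y\|$, while you first package the same inequality as a linear growth bound $L(y)-L(x)\ge \tfrac{A}{2}\|y-x\|$ and then invert it). The only nitpick is that your choice $\delta=\tfrac{A}{2}\epsilon'$ yields containment in the \emph{closed} ball of radius $\epsilon'$; taking $\delta$ slightly smaller, or an intermediate radius $\epsilon'<\epsilon''<\epsilon$, fixes this trivially.
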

    \begin{proof}
        Recall that $x = p(\Theta_i) = (\delta_i, \delta_i)$. Let $y=(\bm p, \bm q)\in B_\epsilon(x)$, solve the inequality $L(y) = -  \bm p^\top M \bm q  \le L(x)+\delta $, we have,\footnote{$\delta$ here is a small constant, while $\delta_i$ is the one-hot vector defined before.}
        \begin{align}
        - \left(i  (1- \|\delta_i - \bm p\|) (1- \|\delta_i - \bm q\|) + \sum_{(u,v)\neq (i,i)}M(u,v) \bm p(u)\bm q(v) \right) & \le - i + \delta 
        \nonumber\\
        i \|x-y\| - i \|\delta_i - \bm p\| \|\delta_i - \bm q\| -\sum_{(u,v)\neq (i,i)}M(u,v) \bm p(u)\bm q(v) &\le  \delta 
        \nonumber
        \end{align}
        \begin{align}
        \|x-y\|&\le \frac 1 i\left(\delta + |\mathcal A|^3 \|x-y\|^2\right)
        \nonumber \\
        \|x-y\|&\le \frac 1 i\left(\delta + \epsilon^2 |\mathcal A|^3\right)
        \label{ineq:mapg-ball-ball-levelset}
        \end{align}
        
        We choose $\epsilon$ and $\delta$ small enough such that $0< \delta < i\epsilon/3-\epsilon^2|\mathcal A|^3$, then we will have $\|x-y\|<\epsilon / 2$ by plugging it into \cref{ineq:mapg-ball-ball-levelset}.
        This suggests that $S_{\delta} \subset B_{\epsilon / 2} (x) \subset B_\epsilon (x)$ and completes our proof.
    \end{proof}

    By \cref{app-lemma:MAPG-local-step}, we can find some $\delta>0$ and $0<\epsilon'<\epsilon$, such that $S_\delta \subset B_{\epsilon'}(x)\subset B_\epsilon (x)$. Then for any $\theta\in p^{-1}(S_\delta)$, we have
    
    \begin{enumerate}    
        \item $p(\theta+s_\theta) \in \{y\in \Delta(|\mathcal A|)^2: L(y)\le L(x) + \delta\}$, because $\mathcal L(\theta + s_\theta)$ is not increasing (\cref{app-lemma:MAPG-onestep-L-monotone});
        \item $p(\theta+s_\theta)\in B_\epsilon (x)$ by letting $\alpha$ small enough such that $\alpha <\frac 1 {2G^2} (\epsilon - \epsilon' )$, where $G$ is an upper bound of $\|\nabla p(\theta)\|$ (\cref{assump:mapg-param-boundedgrad}).
    \end{enumerate}

    The second bullet holds because

    \begin{align*}
        \|p(\theta+ s_\theta) - p(\theta)\| & \le G \|s_\theta\| \tag{\cref{assump:mapg-param-boundedgrad} and mean-value theorem}\\
        &= \alpha G \|\nabla p(\theta)\|\\
        &\le \alpha G^2 \tag{\cref{assump:mapg-param-boundedgrad}}\\
        &=(\epsilon-\epsilon')/2
    \end{align*}
    
    These give us $\forall \theta\in p^{-1}(S_\delta):\theta + s_\theta \in p^{-1}(S_\delta)$.
    Again by the local $H$-smoothness and \cref{app-lemma:smooth-gd-converge}, we are able to claim that $\mathcal L(\theta)$ converges to a suboptimal value when $\theta$ is initialized in $S_\delta$.

\end{proof}
{In the above proof, we conduct asymptotic analysis.} Specifically, we demonstrate a parameter space region S, lacking the global optimum. Here, a gradient step $\theta \leftarrow \theta-\alpha \nabla \mathcal{L}(\theta)$ still lies in S (\cref{app-lemma:MAPG-local-step}), and reduces the loss (\cref{app-lemma:MAPG-onestep-L-monotone}), which makes it converge in a sub-optimal local optimum by the monotonicity argument mirroring the classical analysis in SGD with a smooth function (\cref{app-lemma:smooth-gd-converge}, also cf. \citet{bottou2018optimization}).


\subsection{The Suboptimality Theorems for Value-decomposition Algorithms (VD)}
\label{app:proofs-vd}


We are going to present three suboptimality theorems here for VD progressively.
To start with, we recall the definition of VD (\cref{eq:value-decomp,eq:vd-loss,eq:igm}).

\begin{align*}
    \text{Value-decomposition: }&Q(s, \bm a;\Theta)  = f_{\text{mix}}(Q_1(s,\cdot), \cdots,Q_n(s,\cdot),s ,\bm a;\Theta)\tag{\ref{eq:value-decomp}}\\
    \text{Loss function: }&\mathcal L(\Theta) = \frac 1 2\mathbb E_{s,\bm a}\left[
     Q(s, \bm a;\Theta) - (\mathcal T Q_{\Theta})(s,\bm a)
    \right]^2 \tag{\ref{eq:vd-loss}}\\
    \text{IGM condition: }&\forall s:\quad \mathop{\arg\max}_{\bm{a}\in\bm{\mathcal{A}}} Q(s, \bm{a}) \supseteq \bigtimes_{i=1}^n\mathop{\arg\max}_{a_i\in\mathcal{A}} Q_i(s, a_i)\tag{\ref{eq:igm}}
\end{align*}

To reduce unnecessary difficulties in understanding, we explicitly describe the joint Q-value as a function $f_{\text{mix}}$ of individual Q-values in \cref{eq:value-decomp}, which is the origin of the name ``value-decomposition".
However, one may notice that, since the local Q-functions $Q_1,\cdots,Q_n$ are also parameterized by $\Theta$, then it is unnecessary for the joint Q-function to take local Q-values as inputs.
That is, \cref{eq:value-decomp} is redundant.
The only thing we need is to have $n+1$ functions satisfying \cref{eq:igm}, where one is the joint Q-function $Q(s,\bm a;\Theta)$, and the other $n$ are the local Q-functions $Q_1(s,a_1;\Theta),\cdots,Q_n(s,a_n;\Theta)$, all parameterized by $\Theta$.

So that our definition of VD goes beyond the original form of ``value-decomposition".

To facilitate our analysis, we first introduce two kinds of completeness here.

\begin{definition}[Completeness of $Q$-function Class]
\label{def:complete-igm-complete}
Consider a value-decomposition algorithm $D$ (\cref{eq:igm,eq:vd-loss}).
We denote $\mathcal S,\mathcal A,n$ as the state space, individual action space and the number of agents here.

$D$ is said to have a \textbf{complete} Q-function class if for any $ f\in \mathbb R^{\mathcal S\times \mathcal A}$, there is a $ \Theta\in \text{dom}(\Theta)$, such that $\forall s\in \mathcal S,\bm a\in\mathcal A^n :Q(s,\bm a;\Theta)=f(s,\bm a)$.

Further more, $D$ is said to have an \textbf{IGM-complete} Q-function class if for any $ f\in \mathbb R^{\mathcal S\times \mathcal A}$, and for any deterministic joint policy $\bm \pi:\mathcal S\rightarrow \mathcal A^n$ such that $\forall s\in \mathcal S:\bm \pi(s)\in\argmax f(s,\cdot)$, there is a $ \Theta\in \text{dom}(\Theta)$, such that 

\begin{enumerate}
    \item $\forall s\in \mathcal S,\bm a\in\mathcal A^n :Q(s,\bm a;\Theta) = f(s,\bm a)$.
    \item $\forall i\in[n],s\in \mathcal S:  \argmax Q_i(s,\cdot ;\Theta)=\{\pi_i(s) \}$, where $\bm\pi =(\pi_1,\cdots,\pi_n)$.
\end{enumerate}

\end{definition}

In a nutshell, the \textbf{completeness} says we can factorize any joint Q-function in $\mathbb R^{\mathcal S\times \mathcal A}$ into local Q-functions $Q_1,\cdots, Q_n$ by some parameter $\Theta$, while the \textbf{IGM-completeness} says that when factorizing some joint Q-function with multiple greedy policies (i.e., $\exists s:|\argmax Q(s,\cdot)|>1$), we can select any one of them such that after factorization, the greedy policies of local Q-functions are consistent to the selected greedy policy.\footnote{The IGM-completeness is a mild and natural restriction of the completeness when the IGM condition is satisfied. QPLEX\citep{Wang2021QPLEXDD} satisfies the IGM-completeness.}

Then we abuse notations similar to the case of MA-PG (\cref{app:proofs-mapg}).

For all $i\in \{1,\cdots, n\}$, we denote $Q_i(\Theta) \in \mathbb R^{\mathcal S\times \mathcal A}$, where $Q_i(\Theta) (s,a) = Q_i(s,a;\Theta)$.

Similarly, denoting $Q(\Theta)\in \mathbb R^{\mathcal S\times \mathcal A^n}$, where $Q(\Theta)(s,\bm a) = Q(s,\bm a;\Theta)$.

We also use $L(f) = \frac 1 2\mathbb E_{s,\bm a}\left[ f(s,\bm a) - (\mathcal T f)(s,\bm a) \right]^2$, where $f\in \mathbb R^{\mathcal S\times \mathcal A^n}$ and $\mathcal T$ is the Bellman operator. $L$ is essentially the same as $\mathcal L$ except its domain.

Before we prove \cref{thm:vd-fails}, we introduce two more versions of the suboptimality theorems for VD (\cref{thm:igm-complete-vd-fails,thm:complete-compact-vd-fails}).


\begin{theorem}
\label{thm:igm-complete-vd-fails}
There are tasks such that, when the parameter $\Theta$ is initialized in certain set $S$, any VD with an \textbf{IGM-complete} function class converges to a suboptimal policy.
\end{theorem}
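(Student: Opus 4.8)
## Proof Proposal for Theorem~\ref{thm:igm-complete-vd-fails}

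The plan is to mimic the structure of the proof of \cref{thm:ma-pg-fails} (construct a small hard task, exhibit a suboptimal local minimum of the loss, then prove a trapping-region argument for gradient descent), but adapted to the value-decomposition setting where the object being searched over is the joint $Q$-function $Q(\Theta)\in\R^{\mathcal S\times\mathcal A^n}$ constrained to keep the IGM condition (\cref{eq:igm}) with respect to the local $Q$-functions $Q_i(\Theta)$. First I would construct a one-step cooperative matrix game --- e.g. a variant of \cref{matgame1} --- whose unique optimal joint action is some diagonal entry $(k,k)$ but which also has at least one strictly suboptimal action profile $(j,j)$ that is a \emph{relative maximum within the product structure} (no unilateral deviation from $(j,j)$ improves the payoff). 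Because it is one-step, the Bellman operator $\mathcal T$ here is just $Q\mapsto r$, so $L(f)=\frac12\E_{s,\bm a}[f(s,\bm a)-r(s,\bm a)]^2$ and the unconstrained minimizer is $Q=r$, attained at loss $0$.

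The key point is that along the GD trajectory the IGM condition must hold \emph{throughout}, because it is baked into the parameterization via \cref{eq:value-decomp}: at any reachable $\Theta$, the local greedy actions $\bigtimes_i\argmax_{a_i}Q_i(s,a_i;\Theta)$ must be a subset of $\argmax_{\bm a}Q(s,\bm a;\Theta)$. I would exploit this to show that the ``frontier'' in the complete function class $\R^{\mathcal S\times\mathcal A^n}$ that GD can traverse while respecting IGM is not all of $\R^{\mathcal S\times\mathcal A^n}$: informally, to move from a neighborhood of a joint $Q$ whose argmax is $(j,j)$ to one whose argmax is $(k,k)$, one must pass through $Q$'s where the joint argmax either disagrees with every product of local argmaxes (forbidden by IGM) or where the local argmaxes themselves must have already flipped --- and flipping a local argmax from $j$ to $k$ requires passing through a region where $Q$ represents the off-diagonal profile $(j,k)$ or $(k,j)$ as greedy, which the matrix penalizes heavily, hence raises $L$. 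So the sublevel set of $L$ containing the ``$(j,j)$-consistent'' $Q$'s is separated, \emph{within the IGM-feasible set}, from the global optimum. Concretely I would: (i) identify a parameter $\Theta_j$ with $Q(\Theta_j)=r$ modified so that $(j,j)$ is the unique local/joint greedy profile --- its existence is exactly what \textbf{IGM-completeness} (\cref{def:complete-igm-complete}) guarantees, since it lets us pick the greedy policy $\bm\pi\equiv(j,j)$ when factorizing; wait, that $Q$ has loss $0$, so instead I pick a task where no single $Q$ simultaneously represents $r$ and has a suboptimal greedy profile, forcing a genuine tradeoff. More carefully: I would design the reward so that the \emph{IGM-feasible minimizer} of $L$ restricted to ``joint argmax $=(j,j)$'' is strictly positive and is a local minimum of $\mathcal L$ over $\text{dom}(\Theta)$; (ii) show this local minimum is suboptimal (its induced greedy policy picks $(j,j)$, not $(k,k)$); (iii) invoke \cref{app-lemma:local-mini-continuous-local} to pull the local minimality back to $\Theta$-space via continuity of $\Theta\mapsto Q(\Theta)$; (iv) run the same $H$-smoothness + trapping-region machinery as in \cref{app-lemma:MAPG-onestep-L-monotone,app-lemma:MAPG-local-step,app-lemma:smooth-gd-converge} to conclude GD initialized in a neighborhood $S$ stays trapped and converges to this suboptimal value.

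The main obstacle --- and where I expect the argument to need the most care --- is step (ii)/(i): precisely characterizing the IGM-feasible set near a suboptimal profile and proving that it genuinely \emph{isolates} a positive-loss configuration, i.e. that one cannot continuously deform $Q$ within the IGM-feasible set toward $Q=r$ without increasing $L$ first. This is the analogue of the ``you can't raise $P((0,0))$ without raising $P((0,1))$ or $P((1,0))$'' obstruction in the MA-PG proof, but it is subtler because the joint $Q$ has full freedom on the off-diagonal entries; the binding constraint is not on $Q$ directly but on the \emph{relative ordering} of local $Q_i$ values, which the IGM condition couples to the joint argmax. I would handle this by choosing the matrix game so that the off-diagonal penalties are so large that any $Q$ realizing $(k,k)$ or $(j,k)$ as a local greedy choice (a necessary intermediate step to flip agent $i$'s local preference) is forced, by the structure of $\mathcal T Q = r$, to have loss bounded below by a positive constant --- making the relevant sublevel set of $L$ a union of connected components, with the $(j,j)$-component not containing the optimum. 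Then a suitable neighborhood $S$ in parameter space maps into that bad component, and smoothness of $\Theta\mapsto Q(\Theta)$ plus \cref{app-lemma:smooth-gd-monotone} keeps the iterates there. A secondary technical nuisance is that the Bellman operator is only piecewise-smooth in $Q$ (because of the $\max$), so I would either stick to the one-step case where $\mathcal T Q=r$ is constant (trivially smooth), or restrict attention to a neighborhood on which the greedy action is locally constant so that $\mathcal T$ is affine and the $H$-smoothness argument goes through verbatim.
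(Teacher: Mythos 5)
Your core construction coincides with the paper's: a one-step matrix game, the minimizer $f_{\bm a^*}$ of $L$ over the set $\mathcal R(\bm a^*)=\{f:\bm a^*\in\argmax f\}$ of joint $Q$'s whose argmax contains a fixed suboptimal profile, IGM-completeness to realize $f_{\bm a^*}$ as $Q(\Theta_{\bm a^*})$ with local greedy actions equal to $\bm a^*$, and continuity to pull local minimality back to $\Theta$-space via \cref{app-lemma:local-mini-continuous-local}. Where you diverge is in how you justify that $\Theta_{\bm a^*}$ is a local minimum. You reach for a global separation argument (heavy off-diagonal penalties forcing the relevant sublevel set of $L$ to split into connected components, plus a trapping-region/monotone-descent stage as in the MA-PG proof). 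The paper's argument is purely local and needs none of this: \cref{def:complete-igm-complete} guarantees the decomposition can be chosen so that each $\argmax Q_i(\cdot;\Theta_{\bm a^*})$ is a \emph{strict singleton} $\{\bm a^*(i)\}$; strict singletons are stable under small perturbations of $\Theta$, and then \cref{eq:igm} forces $\bm a^*\in\argmax Q(\cdot;\Theta)$ — i.e.\ $Q(\Theta)\in\mathcal R(\bm a^*)$ — throughout a whole neighborhood, so $\mathcal L(\Theta)\ge L(f_{\bm a^*})$ there and $\Theta_{\bm a^*}$ is a local minimum for an \emph{arbitrary} payoff tensor with distinct entries (it also proves, via \cref{app-lemma:restrict-argmin-unique}, that the restricted minimizer is unique by convexity of $\mathcal R(\bm a^*)$ and strict convexity of the one-step loss). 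Moreover, the theorem only asks for ``a certain set $S$'', and the paper simply takes $S$ to be the local minima themselves, where the gradient vanishes and GD is stationary — so your entire stage (iv) (smoothness, bounded gradients of $\Theta\mapsto Q(\Theta)$, and the $S_\delta$-trapping machinery) is not needed and would import assumptions the paper never makes for VD. Your route could likely be completed and would buy a positive-volume basin of attraction, but at the cost of extra hypotheses and a considerably more delicate connectivity argument; the paper's local-stability-of-strict-argmaxes observation is the cleaner key step your proposal circles around without quite isolating.
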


\begin{proof}
    Our proof contains 4 parts.
    
    We will first construct an $n$-player matrix game, which is a $1$-step MMDP.
    Second, we will find a series of functions $f^{(j)}\in \mathbb R^{\mathcal S\times \mathcal A^n}$, such that for all $j$, $f^{(j)}$ is a local optimum of $L$ in a restricted space $R$.
    Third, we decompose $f^{(j)}$ by the value decomposition scheme to find a $\Theta^{(j)}$ such that $Q(\Theta^{(j)}) = f^{(j)}$.
    Finally, we prove that $\Theta^{(j)}$ is a local minimum of $\mathcal L$ by \cref{app-lemma:local-mini-continuous-local} after showing that the function $Q:\text{dom}(\Theta)\rightarrow \mathbb R^{\mathcal S\times \mathcal A^n}$ is a locally a function with range in $R$ (i.e., $Q:\text{dom}(\Theta)\rightarrow R$ near $\Theta^{(j)}$).

    We arbitrarily choose the one-step payoff function (the $n$-dimensional matrix in the $n$-player matrix game) $T\in \mathbb R^{|\mathcal A|^n}$ with different values on all entries.
    Since this is a $1$-step game, we omit the state in our formulation.
    We also have $\mathcal T f = T$ for any value function $f\in \mathbb R^{|\mathcal A|^n}$ since the task lasts only for $1$ step. 
    And the loss function can be simplified as follows.

    $$\mathcal L(\Theta) = \frac 1 2\mathbb E_{\bm a\sim \mathcal D}\left[Q(\bm a;\Theta) - T(\bm a)\right]^2$$

    for some distribution $\mathcal D$ fully supported on $ \mathcal A^n$. 
    
    Let $\bm a^*$ be any joint action in $\mathcal A^n$.
    Denote the restricted value function space $\mathcal R(\bm a^*) = \{ f\in \mathbb R^{|\mathcal A|^n}:\bm a^* \in \argmax  f(\cdot)\}$ as the set of matrices with $\bm a^*$ as its greedy action (may not be unique).

    We present the following lemma.

    \begin{lemma}
        \label{app-lemma:restrict-argmin-unique}
        The set $\argmin_{f\in \mathcal R(\bm a^*)} L(f)$ contains exactly one element.
    \end{lemma}

    \begin{proof}
        We first note that $\mathcal R(\bm a^*)$ is convex in that for any $f_1,f_2\in \mathcal R(\bm a^*)$ and $\beta \in [0,1]$, we have

        \begin{align*}
            \forall \bm a:\quad(\beta f_1 + (1-\beta)f_2)(\bm a^*) &= \beta f_1(\bm a^*) + (1-\beta ) f_2(\bm a^*)\\
            &\ge \beta f_1(\bm a) + (1-\beta ) f_2(\bm a)\\
            & = (\beta f_1 + (1-\beta) f_2)(\bm a)
        \end{align*}

        so that $(\beta f_1 + (1-\beta)f_2) \in  \mathcal R(\bm a^*)$.

        Denote $S = \argmin_{f\in \mathcal R(\bm a^*)} L(f)$.
        It is obvious that $S$ is non-empty.
        Suppose $f_1,f_2$ are two different element in $S$, we investigate the loss of $f = (f_1+f_2)/2$, which is also in $\mathcal R(\bm a^*)$ by the convexity.

        Since $f_1\neq f_2$, there exists some $\tilde{\bm a}$ such that $f_1(\tilde{\bm a}) \neq f_2(\tilde{\bm a })$.
        Denote $l(x,c) = \frac 1 2(x-c)^2$, which is a strict convex function for any fixed $c$, we have

        \begin{align*}
            L(f)&= \frac 1 2\mathbb E_{\bm a\in\mathcal D}[f({\bm a}) - T({\bm a})]^2\\
            & =  \sum_{\bm a}\mathcal D(\bm a) l(f({\bm a}),T({\bm a}) )\\
            & < \sum_{\bm a}\mathcal D(\bm a) \left( \frac 1 2 l(f_1({\bm a}),T({\bm a}))  + \frac 1 2 l(f_2({\bm a}),T({\bm a}) )\right)  \\
            & = \frac 1 2(L(f_1) + L(f_2))\\
            &=L(f_1)
        \end{align*}

        where the inequality is by Jensen's inequality, the strict inequivalence holds since $f_1(\tilde {\bm a }) \neq f_2(\tilde {\bm a})$.
        This contradicts the fact that $f_1$ minimizes $L$ in $\mathcal R(\bm a^*)$.
    \end{proof}

    By \cref{app-lemma:restrict-argmin-unique}, the minimizer is unique.
    Therefore we denote $f_{\bm a^*}$ be the only element in $\argmin_{f\in \mathcal R(\bm a^*)} L(f)$.
    
    It is obvious that $f_{\bm a^*}$ is not the globally optimal solution for $L$ in $\mathbb R^{|\mathcal A|^n}$ unless $\bm a^* \in \argmax T(\cdot)$ since the globally optimal solution is exactly $T$ itself.

    Now by IGM-completeness, we can find a $\Theta_{\bm a^*}$ such that $Q(\Theta_{\bm a^*}) = f_{\bm a^*}$, and $\argmax Q_i(\Theta_{\bm a^*})(\cdot) = \{\bm a^*(i)\}$, for $i=1,\cdots,n$.

    Since $\argmax Q_i(\Theta_{\bm a^*})(\cdot) = \{\bm a^*(i)\}$, we know that the largest action value in $Q_i(\Theta_{\bm a^*})(\cdot)$ is strictly larger than the second largest action value.
    Therefore by the continuity of $Q_i$, there is a small neighborhood $O\ni \Theta_{\bm a^*}$, such that $\forall \Theta\in O: \argmax Q_i(\Theta)(\cdot) = \{\bm a^*(i)\}$ for all $i\in \{1,\cdots,n\}$, and the $\bm a^*\in \argmax Q(\Theta)$ by IGM (\cref{eq:igm}).

    Therefore, we have $Q(\Theta) \in \mathcal R(\bm a^*)$ for all $\Theta \in O$.
    And consequently $\mathcal L(\Theta)  = L(Q(\Theta))\ge L(f_{\bm a^*}) = \mathcal L(\Theta_{\bm a^*})$, which means we find a local minimum $\Theta_{\bm a^*}$ for $\mathcal L$.

    In this way, let $S=\{\Theta_{\bm a^*}:\bm a^*\not\in \argmax T(\cdot)\}$.
    When initialized in $S$, gradient descent will get stuck since the gradient is zero at all local minima, resulting in a suboptimal policy.
\end{proof}


Comparing \cref{thm:vd-fails} and \cref{thm:igm-complete-vd-fails}, we shall see that the statement for VD with a complete class and that for VD with an IGM-complete class only differ in the order of the logical predicates.
That is, for any VD with a complete class, we can construct tasks to make it fail, while we can construct tasks make all VD fail uniformly when an IGM-complete class is equipped.

This difference might not be essential, in that if we add some regularity constraint for VD with a complete class (e.g., some compactness assumptions), we can still find tasks uniformly make VD fail (\cref{thm:complete-compact-vd-fails}).

Based on the principle of conducting intuitions, we are not going deep.


\begin{theorem}
    \label{thm:complete-compact-vd-fails}
    There are tasks such that, when the parameter $\Theta$ is initialized in certain set $S$, any VD with an \textbf{complete} function class as well as a $Q$ such that its inversion keeps boundedness (i.e., for all bounded $X$, $Q^{-1}(X)$ is also bounded) converges to a suboptimal policy.
\end{theorem}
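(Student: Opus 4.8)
The plan is to reuse the $n$-agent one-step matrix game from the proof of \cref{thm:igm-complete-vd-fails}, but to exploit the extra boundedness hypothesis on $Q^{-1}$ to convert the unavoidable ``boundary tie'' between the suboptimal target action and the true optimum into a genuine loss barrier that discrete gradient descent cannot step over. Concretely, I would choose the one-step payoff $T$ (all entries distinct, fully supported sampling distribution $\mathcal D$) so that it has a unique greedy joint action $\bm a^{\mathrm{opt}}$, a designated suboptimal action $\bm a^*$ that differs from $\bm a^{\mathrm{opt}}$ in \emph{every} coordinate (possible since $n\ge 2$), and so that every joint action differing from $\bm a^{\mathrm{opt}}$ in exactly one coordinate is extremely heavily penalized while $T(\bm a^*)$ stays moderate --- the same shape as \cref{matgame1}. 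Let $G^{\circ}=\{\Theta:\argmax Q_i(\Theta)=\{\bm a^{\mathrm{opt}}_i\}\text{ for all }i\}$ denote the (open, by continuity; nonempty, by completeness and \cref{eq:igm}) set of parameters whose induced policy is the optimal one.

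The heart of the argument is a loss-barrier estimate on $\partial G^{\circ}$. A point $\Theta\in\partial G^{\circ}$ still has $\bm a^{\mathrm{opt}}_i\in\argmax Q_i(\Theta)$ for every $i$, but some $Q_{i_0}(\Theta)$ additionally maximizes at a second action $a'$; by IGM, both $\bm a^{\mathrm{opt}}$ and the joint action $\bm c$ obtained from $\bm a^{\mathrm{opt}}$ by changing its $i_0$-th coordinate to $a'$ lie in $\argmax Q(\Theta)$, so $Q(\Theta)\in\mathcal R(\bm a^{\mathrm{opt}})\cap\mathcal R(\bm c)$. Since $\mathcal D$ is fully supported, $L$ is strictly convex with unique unconstrained minimizer $T$, and $T\notin\mathcal R(\bm c)$; an argument like \cref{app-lemma:restrict-argmin-unique} then shows that $c_1:=\min_{\bm c}\min_{\mathcal R(\bm a^{\mathrm{opt}})\cap\mathcal R(\bm c)}L$ is strictly positive, the outer minimum over the finitely many one-coordinate perturbations $\bm c$ of $\bm a^{\mathrm{opt}}$. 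Crucially $\bm a^*$ is \emph{not} among these $\bm c$ (it differs from $\bm a^{\mathrm{opt}}$ in more than one coordinate), and by pushing the one-coordinate penalties far enough one simultaneously keeps $L(f_{\bm a^*})$ bounded, so one may also arrange $c_1>L(f_{\bm a^*})$, where $f_{\bm a^*}$ is the minimizer of $L$ over $\mathcal R(\bm a^*)$ from \cref{app-lemma:restrict-argmin-unique}. Hence $\mathcal L(\Theta)\ge c_1$ on all of $\partial G^{\circ}$.

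Now set $S=\{\Theta:\mathcal L(\Theta)<c_1\text{ and the policy of }\Theta\text{ is }\bm a^*\}$; this is open, and nonempty because completeness lets us realize a small perturbation of $f_{\bm a^*}$ whose unique greedy action is $\bm a^*$ (push $f_{\bm a^*}$ down at $\bm a^{\mathrm{opt}}$), which by \cref{eq:igm} forces every $Q_i$ to prefer $\bm a^*_i$ and keeps the loss just above $L(f_{\bm a^*})<c_1$. For $\Theta_0\in S$, \cref{app-lemma:smooth-gd-monotone} (assuming the parameterization is smooth, as in the earlier analysis) keeps the whole trajectory in $\{\mathcal L\le\mathcal L(\Theta_0)\}=Q^{-1}\big(\{L\le\mathcal L(\Theta_0)\}\big)$; the inner set is bounded since $\mathcal D$ is fully supported, so by hypothesis the sublevel set is compact and thus lies a positive distance $d>0$ from the closed set $\partial G^{\circ}$. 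Taking the learning rate small enough that each step is shorter than $d/2$, no step can touch $\partial G^{\circ}$, so since $\Theta_0$ itself lies outside $\overline{G^{\circ}}$ (its policy excludes $\bm a^{\mathrm{opt}}$ coordinatewise), induction keeps every $\Theta_t$ outside $\overline{G^{\circ}}$: the induced policy never contains $\bm a^{\mathrm{opt}}$. Boundedness of the trajectory plus \cref{app-lemma:smooth-gd-converge} then give convergence to a stationary point, whose policy is necessarily suboptimal.

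The hard part is exactly this combination of the barrier with discretization. Unlike the IGM-complete case, mere completeness does not let us make $f_{\bm a^*}$ a \emph{strict} local minimum in parameter space --- the local $Q_i$'s need not strictly prefer $\bm a^*$, so $Q(\Theta)$ can slide off $\mathcal R(\bm a^*)$ toward the optimum --- and a purely loss-based argument is hopeless because in a one-step game small loss already forces a near-optimal policy. What rescues the proof is precisely the boundedness-of-$Q^{-1}$ assumption: it upgrades the relevant sublevel set to a compact set, which is the only thing that produces a strictly positive gap to $\partial G^{\circ}$ and hence stops the discrete iterates from ever hopping over the barrier into the optimal-policy region (the task being fixed while $S$ is allowed to depend on the parameterization is what makes the statement uniform over all such VD, as the surrounding discussion anticipates).
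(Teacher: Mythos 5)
Your argument is sound in its essentials, but it is a genuinely different proof from the paper's. The paper fixes a \emph{diagonal} payoff tensor, and uses the boundedness-of-$Q^{-1}$ hypothesis to extract, via Bolzano--Weierstrass, a limit $\Theta_{\bm a^*}$ of the parameters realizing the strictly-perturbed functions $f_m$; the diagonal structure (\cref{app-lemma:fa-diag-gt0-nondiag-eq0}) plus IGM then forces the local argmaxes at $\Theta_{\bm a^*}$ to be \emph{strict}, so $\Theta_{\bm a^*}$ is an exact local minimum with zero gradient and GD initialized there never moves, for any learning rate and with no regularity beyond differentiability. You instead use the hypothesis to make the loss sublevel set compact, combine this with a quantitative barrier $c_1>L(f_{\bm a^*})$ on $\partial G^{\circ}$ (obtained by inflating the penalties on one-coordinate neighbours of the optimum, which drives $c_1\to\infty$ while leaving $L(f_{\bm a^*})$ fixed), and conclude that small discrete steps starting in an \emph{open} set of suboptimal initializations can never cross into the optimal-policy region. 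What your route buys is a stronger statement -- a set $S$ with nonempty interior rather than finitely many exact stationary points -- and it cleanly isolates where IGM enters (ties in a local $Q_{i_0}$ force two joint maximizers, hence a loss floor on the boundary). What it costs is extra hypotheses and bookkeeping that the paper's route avoids.

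Three points you should tighten if you write this up. First, \cref{app-lemma:smooth-gd-monotone} and the step-length bound require $\nabla\mathcal L$ to be Lipschitz with bounded norm at least on a neighbourhood of the sublevel set; the paper states such assumptions only for MA-PG (\cref{assump:mapg-param-smooth,assump:mapg-param-boundedgrad}), so you must import analogues for the VD parameterization, whereas the paper's stationary-point construction needs only differentiability. Second, to keep the learning rate independent of $\Theta_0$, define $S$ with a fixed loss threshold strictly between $L(f_{\bm a^*})$ and $c_1$ so that a single compact sublevel set, hence a single distance $d$ and gradient bound, serves all initializations in $S$. Third, \cref{app-lemma:smooth-gd-converge} gives only that the loss converges and gradients vanish, not that $\{\Theta_t\}$ converges; the honest conclusion is that every iterate and every limit point lies in the exterior of $\overline{G^{\circ}}$ with loss below $c_1$, which (again via IGM) rules out $\bm a^{\mathrm{opt}}$ coordinatewise and hence yields a suboptimal greedy policy throughout -- sufficient for the theorem, but phrase it that way rather than as convergence to a single stationary point.
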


\begin{proof}
    We continue to use notations defined in the proof of \cref{thm:igm-complete-vd-fails}.
    
    The argument in the proof of \cref{thm:igm-complete-vd-fails} fails here since the factorization $f_{\bm a^*} = Q(\Theta_{\bm a^*})$ cannot be guaranteed to have $\{\bm a^*(i)\} = \argmax Q_i(\cdot)$ for $i=1,\cdots,n$ when $f_{\bm a^*}$ has multiple greedy actions (i.e., $|\argmax f_{\bm a^*}(\cdot)| > 1$).

    To solve the issue, we construct a special matrix game, where
    
    $$T(\bm a) = \left\{\begin{array}{ll}
        i, & a_1=a_2=\cdots=a_n=i \\
        0, & \text{otherwise}
    \end{array}\right.$$

    Consider all $\bm a^*$ in form of $(k,\cdots,k)$ for $k=1,\cdots, |\mathcal A|-1$.

    Let $f_m(\bm a) = \left\{\begin{array}{ll}
        f_{\bm a^*}(\bm a)-\frac 1 m, & \bm a^*\neq \bm a\in\argmax f_{\bm a^*}(\cdot) \\ 
        f_{\bm a^*}(\bm a),&\text{otherwise}
    \end{array}\right.$ for $m\in \mathbb N$.

    Then $\argmax f_m(\cdot) = \{\bm  a^*\}$ for all $m\in \mathbb N$.

    By the completeness of Q-function class, we are able to find some $\Theta_m$, such that $f_m =  Q(\Theta_m) $.
    By the uniqueness of greedy policy of $f_m$, we have for all $a_i\in \mathcal A$ other than $k$: $Q_i(k;\Theta_m)>Q_i(a_i;\Theta_m)$ for $i=1,\cdots,n$.

    Consider the set $R = Q^{-1}\left(\overline B_{1}(f_{\bm a^*})\right)$.\footnote{Here $\overline B_1(f_{\bm a^*})$ is the closed Ball centered at $f_{\bm a^*}$ with radius $1$ (w.r.t. the $L^\infty$ distance $\|\cdot\|_\infty$).} $R$ is closed since $Q$ is continuous. And $R$ is also bounded, since $Q^{-1}$ keeps boundedness. 

    In this way, $R$ is compact. By Bolzano Weierstrass Theorem, we can find a convergent subsequence $\{\Theta_{m_j}\}_{j=1}^{\infty}$.
    Take the limit, we have $\Theta_{m_j}\rightarrow \Theta_{\bm a^*}\in R$.
    By the continuity, we have $Q(\Theta_{\bm a^*}) = \lim\limits_{m\rightarrow \infty} f_m = f_{\bm a^*}$, and $\forall a_i\in \mathcal A:Q_i(k;\Theta_{\bm a^*})\ge Q_i(a_i;\Theta_{\bm a^*})$ for $i=1,\cdots,n$, which means $k\in \argmax Q_i(\Theta_{\bm a^*})$

    According to the argument of \cref{thm:igm-complete-vd-fails}, the only thing remains for us is to prove $\{k\} = \argmax Q_i(\Theta_{\bm a^*})$ for all $i\in\{1,\cdots,n\}$.

    To prove it, we need the following lemma.

    \begin{lemma}
        \label{app-lemma:fa-diag-gt0-nondiag-eq0}
        For the $n$-dimensional matrix $T$ defined above, any $k\in\{1,\cdots,|\mathcal A|\}$ and $\bm a^* = (k,\cdots,k)$, we have

        \begin{enumerate}
            \item $f_{\bm a ^*}(\bm a) > 0$ for all $\bm a$ on the diagonal (i.e., $\bm a = (t,\cdots, t)$ for some $t \in \{1,\cdots,|\mathcal A|\}$).
            \item $f_{\bm a^* }(\bm b) = 0$ for all $\bm b$ not on the diagonal (i.e., $\bm b \neq (t,\cdots, t)$ for any $t\in \{1,\cdots,|\mathcal A|\}$).
        \end{enumerate}
    \end{lemma}

    \begin{proof}
        First, we can claim $f_{\bm a ^*} (\bm a^*) \ge  1$.
        Since if $f_{\bm a^*}(\bm a^*) < 1$, we can let $g(\bm u) = \left\{\begin{array}{ll}
            1 ,& \bm u = \bm a^* \\
            f_{\bm a^*} (\bm u) ,& \text{otherwise} 
        \end{array}\right.$, and in this way we have

        \begin{enumerate}
            \item $g\in \mathcal R(\bm a^*)$ in that $g(\bm a^*) = 1\ge f_{\bm a ^*}(\bm a^*) \ge f_{\bm a^*}(\bm u) = g(\bm u)$ for all $\bm u\neq \bm a^*$.

            \item $L(g) < L(f_{\bm a^*})$ in that
            \begin{enumerate}
                \item  $L(g) - L(f_{\bm a^*}) = \frac 1 2 \mathcal D(\bm a^*) [(1-T(\bm a^*))^2 - (f_{\bm a^*}(\bm a^*)-T(\bm a ^*))^2]$.
                \item $f_{\bm a^*}(\bm a^*) < 1\le T(\bm a^*)$.
            \end{enumerate}
        \end{enumerate}

        which contradicts the definition of $f_{\bm a^*}$.

        Secondly, we prove the original lemma again by contradiction similarly.

        Suppose we have $f_{\bm a^*}(\bm a) \le 0$ for some $\bm a$ on the diagonal, we can let $g(\bm u) = \left\{\begin{array}{ll}
            1 ,& \bm u = \bm a \\
            f_{\bm a^*} (\bm u) ,& \text{otherwise} 
        \end{array}\right.$. This also results in $g\in \mathcal R(\bm a^*)$ and $L(g)< L(f_{\bm a^*})$, which contradicts the definition of $f_{\bm a^*}$.

        Suppose we have $f_{\bm a ^* }(\bm b) > 0$ for some $\bm b$ not on the diagonal, we can let $g(\bm u) = \left\{\begin{array}{ll}
            0 ,& \bm u = \bm b \\
            f_{\bm a^*} (\bm u) ,& \text{otherwise} 
        \end{array}\right.$. This also results in $g\in \mathcal R(\bm a^*)$ and $L(g)< L(f_{\bm a^*})$, which contradicts the definition of $f_{\bm a^*}$.

    \end{proof}

    Then we prove it by contradiction.
    Suppose that there are some $i$, and $a_i\neq k$, such that $Q_i(k;\Theta_{\bm a^*}) = Q_i(a_i;\Theta_{\bm a^*})$.
    Let $\bm b = (\underbrace{k,\cdots,k}_{i-1 \text{ copies}},a_i,k,\cdots,k)$ differing with $\bm a^*$ only at the $i$-th action.
    Then we have $  Q(\bm b;\Theta_{\bm a^*}) = f_{\bm a^*}(\bm b) = 0$ by \cref{app-lemma:fa-diag-gt0-nondiag-eq0}, since $\bm b$ does not belong to the diagonal.
    On the other hand, $ Q(\bm b;\Theta_{\bm a^*})= \max Q(\cdot;\Theta_{\bm a^*}) = \max f_{\bm a^*}(\cdot) > 0$ by the IGM assumption, which leads to a contradiction.

    Collecting all $\Theta_{\bm a^*}$ into a set $S$ completes our proof.
\end{proof}


Now we are able to prove \cref{thm:vd-fails}.

\begin{reptheorem}
    {thm:vd-fails}
    For any VD (\cref{eq:vd-loss,eq:igm}) with a \textbf{complete} function class, there are tasks such that, when the parameter $\Theta$ is initialized in certain set $S$, it converges to a suboptimal policy.
\end{reptheorem}

\begin{proof}
    We continue to use notations in the proofs of \cref{thm:igm-complete-vd-fails,thm:complete-compact-vd-fails}.
    And we further assume the distribution $\mathcal D$ in the loss function to be uniform WLOG, since treating an arbitrary $\mathcal D$ with full support only brings complexity in form for our analysis.

    Note that in the proof of \cref{thm:complete-compact-vd-fails,thm:igm-complete-vd-fails}, it is essential for us to find $\bm a^*$ and $\Theta_{\bm a^*}$ subject to

    \begin{align}
    \label{eq:high-level-local-minima}
    \text{High level local minima:}\quad&Q(\Theta_{\bm a^*}) = f_{\bm a^*}\\
    \label{eq:decomposition-unique-and-consistent}
     \text{Greedy action consistency:}\quad&\{ \bm a^*\}=\bigtimes_{i=1}^n \argmax Q_i(\Theta_{\bm a^*})
    \end{align}

    In the proof of \cref{thm:igm-complete-vd-fails}, we fix $\bm a^*$ at first and then make it by the assumption of IGM-completeness.
    While in the proof of \cref{thm:complete-compact-vd-fails}, we again fix $\bm a^*$ at first and then make it by the B-W theorem based on the compactness of $R$.
    Nevertheless, we have neither of them here.

    However, the proof of \cref{thm:complete-compact-vd-fails} gives us an important implication:

    \begin{fact}
        \label{app-fact:diagonal-decomp-unique-argmax}
        For any \textbf{positive diagonal tensor} $f\in \mathcal R^{|\mathcal A|^n}$, i.e.,
        \begin{enumerate}
            \item $f$ is positive on the diagonal, i.e. $f(\bm a) > 0$ for all $\bm a \in\{ (k,\cdots, k): k = 1,\cdots,|\mathcal A|\}$.
            \item $f$ is zero on the non-diagonal, i.e. $f(\bm b) = 0$ for all $\bm b\not \in\{ (k,\cdots,k):k = 1,\cdots,|\mathcal A|\}$.
        \end{enumerate}

        we have $|\argmax Q_i(\Theta)| = 1$ for $i  =1,\cdots,n$, where $\Theta$ is the parameter that decomposes $f$ (i.e., $Q(\Theta) = f$).
    \end{fact}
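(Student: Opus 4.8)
The plan is to argue by contradiction, using only the IGM condition (\cref{eq:igm}) together with the rigid combinatorial structure of a positive diagonal tensor; no smoothness or gradient-descent machinery is needed. I implicitly take $n\ge 2$, which holds in the multi-agent setting and is genuinely needed here (for $n=1$ the ``diagonal'' is all of $\mathcal A$, so $f$ would only have to be positive everywhere and the claim would be false).

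First I would record the structure of $\argmax f$: since $f$ is strictly positive on the diagonal $\{(k,\cdots,k):k\in\mathcal A\}$ and identically $0$ elsewhere, its maximum value is positive and is attained only at diagonal points, so $\argmax_{\bm a\in\mathcal A^n} f(\bm a)$ is a nonempty subset of the diagonal. Next, because VD builds in the IGM condition for every admissible parameter, \cref{eq:igm} applies at the particular $\Theta$ with $Q(\Theta)=f$; since this is a one-step game it reads $\argmax_{\bm a} Q(\Theta)(\bm a)\supseteq \argmax Q_1(\Theta)\times\cdots\times\argmax Q_n(\Theta)$, and the left-hand side of this inclusion equals $\argmax f$.

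Now I would set up the contradiction. Suppose some agent $i$ has two distinct actions $a_i\neq a_i'$ in $\argmax Q_i(\Theta)$. For each $j\neq i$ pick some $a_j^*\in\argmax Q_j(\Theta)$ (nonempty since $\mathcal A$ is finite), and form the joint actions $\bm b$ and $\bm b'$ that agree with $(a_j^*)_{j\neq i}$ outside coordinate $i$ and carry $a_i$, resp.\ $a_i'$, in coordinate $i$. Both $\bm b$ and $\bm b'$ lie in $\argmax Q_1(\Theta)\times\cdots\times\argmax Q_n(\Theta)$, hence in $\argmax f$ by the previous step, hence both are diagonal; so all coordinates of $\bm b$ coincide and all coordinates of $\bm b'$ coincide. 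Fixing any $j\neq i$ (possible as $n\ge 2$), the diagonal condition on $\bm b$ forces $a_i=a_j^*$ and that on $\bm b'$ forces $a_i'=a_j^*$, whence $a_i=a_i'$, a contradiction. Thus $|\argmax Q_i(\Theta)|=1$ for every $i$. I do not anticipate a genuine obstacle: the whole argument is a few lines once the direction of IGM is used correctly --- the \emph{product} of the local greedy sets is \emph{contained} in the joint greedy set --- with the only delicate points being the standing hypothesis $n\ge 2$, the inclusion of $\argmax f$ in the diagonal, and the fact that IGM is part of the definition of VD and so applies at the $\Theta$ in question.
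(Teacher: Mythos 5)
Your proof is correct and is essentially the paper's own argument (which it delegates to the end of the proof of \cref{thm:complete-compact-vd-fails}): use IGM to place the product of the local greedy sets inside $\argmax f$, observe that $\argmax f$ lies on the diagonal, and conclude that a product set contained in the diagonal must be a singleton when $n\ge 2$. The only cosmetic difference is that the paper phrases the contradiction via a single off-diagonal joint action $\bm b$ with $f(\bm b)=0<\max f$, whereas you force two joint actions onto the diagonal and equate coordinates; your version is slightly more self-contained and correctly flags the implicit $n\ge 2$ hypothesis.
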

    
    \begin{proof}
        This is essentially the argument at the end of the proof of \cref{thm:complete-compact-vd-fails}. We omit the complete proof here for simplicity.
    \end{proof}

    \cref{app-fact:diagonal-decomp-unique-argmax} tells us that factorizing a \textbf{positive diagonal tensor} results in a unique local greedy action for all $i\in[n]$.

    Now we are going to construct some positive diagonal tensor $T$ with a proper structure.
    We first define a recursive sequence $\{h_t\}_{t\ge 1}$ as follows:

    \begin{itemize}
        \item $h_1=0$.
        \item $\forall t>1: h_t = (t-1)h_{t-1} - \sum\limits _{i<t-1} h_i - 1$.
    \end{itemize}

    Let $C\in \mathbb R^{|\mathcal A|}$ and $C(t) = h_{|\mathcal A| - t + 1} -  h_{|\mathcal A|} + 1$ for $t = 1,\cdots, |\mathcal A|$, we fill the diagonal of the $n$-dimensional tensor $T$ by $C(1),\cdots,C(|\mathcal A|)$, denoted by $T = \text{diag}_n\{C\}$.
    We have the following lemma for the structure of $f_{\bm a^*}$ when $\bm a^*$ is some joint action on the diagonal.

    \begin{lemma}
        \label{app-lemma:fa-star-structure}
        Suppose $T = \text{diag}_n\{C\}$ and $\bm a^* = (l,\cdots, l)$ for some $l\in\{1,\cdots, |\mathcal A|\}$, we have $f_{\bm a^*} = \text{diag}_n \{D\}$, where

        \begin{enumerate}
            \item $D(k) = C(k)$ for $k = 1,\cdots, l-1$.
            \item $D(k) = \frac 1 {|\mathcal A| - l + 1} \sum _{i\ge l} C(i)$ for $k = l,\cdots,|\mathcal A|$.
        \end{enumerate}

        Moreover, let $P$ be any permutation over $\{1,\cdots,|\mathcal A|\}$, then if $T = \text{diag}_n\{P\circ C\}$, we have $f_{\bm a ^*}=\text{diag}_n\{P\circ D\}$.\footnote{$P\circ A$ means the permutation $P$ acting on the vector $A$, which results in a new vector $B$ such that $B(k) = A(P(k))$.}
    \end{lemma}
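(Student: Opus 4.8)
\emph{Proof sketch.} Since (under the standing assumption of the proof of \cref{thm:vd-fails}) $\mathcal D$ is uniform and the game is one-step, minimising $L$ over $\mathcal R(\bm a^*)$ is the same as minimising $\sum_{\bm a}(f(\bm a)-T(\bm a))^2$, so $f_{\bm a^*}$ is just the Euclidean projection of $T$ onto the closed convex set $\mathcal R(\bm a^*)=\{f:f(\bm a^*)\ge f(\bm a)\ \forall \bm a\}$, and is in particular the \emph{unique} minimiser. The plan is to compute this projection. First I would reduce to the diagonal: $T$ vanishes off the diagonal, so the exchange argument used for \cref{app-lemma:fa-diag-gt0-nondiag-eq0} carries over verbatim and forces $f_{\bm a^*}$ to vanish off the diagonal. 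Writing $N=|\mathcal A|$ and $f_{\bm a^*}=\mathrm{diag}_n\{D\}$, and noting that the constraints coming from off-diagonal $\bm b$ only say $D(l)\ge 0$, the problem collapses to the one-dimensional constrained least squares
\[
D=\argmin\Bigl\{\;\textstyle\sum_{k=1}^N (D(k)-C(k))^2\;:\;D(l)\ge D(k)\ \ \forall k\Bigr\}.
\]

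Next I would record that $\{h_t\}$ is strictly decreasing (a short induction from $h_2<h_1$ and $h_t-h_{t-1}=(t-1)(h_{t-1}-h_{t-2})$ for $t\ge 3$), so $C(t)=h_{N-t+1}-h_N+1$ is strictly increasing in $t$ and $C(1)=1>0$. With this I propose the candidate $D$ of the statement — $D(k)=C(k)$ for $k<l$ and $D(k)=\bar C_l:=\frac{1}{N-l+1}\sum_{i\ge l}C(i)$ for $k\ge l$ — and verify the KKT conditions of the projection above. Primal feasibility is immediate since $\bar C_l\ge C(l)>C(l-1)\ge\cdots\ge C(1)>0$. For the multipliers, stationarity in a coordinate $k\ne l$ forces $\lambda_k=2(C(k)-D(k))$, hence $\lambda_k=0$ for $k<l$ and $\lambda_k=2(C(k)-\bar C_l)$ for $k>l$; stationarity in the coordinate $D(l)$ then reads $\sum_{k\ne l}\lambda_k=2(\bar C_l-C(l))$, which holds identically by the definition of $\bar C_l$. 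So the only nontrivial requirement left is dual feasibility for $k>l$, i.e. $C(k)\ge\bar C_l$.

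This is exactly where the particular recursion defining $\{h_t\}$ enters. Setting $m=N-l+1$ and $p=N-k+1$ (so $1\le p\le m-1$), the inequality $C(k)\ge\bar C_l$ is equivalent to $h_p\ge\frac1m\sum_{j=1}^m h_j$; by monotonicity of $h$ it suffices to take $p=m-1$, and then
\[
m\,h_{m-1}-\sum_{j=1}^{m-1}h_j=(m-1)h_{m-1}-\sum_{j=1}^{m-2}h_j=h_m+1,
\]
the last step being the defining recursion $h_m=(m-1)h_{m-1}-\sum_{i<m-1}h_i-1$, so $h_{m-1}=\frac1m\sum_{j=1}^m h_j+\frac1m>\frac1m\sum_{j=1}^m h_j$. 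Since the objective is strictly convex over the convex feasible set, the verified KKT point is \emph{the} minimiser, giving $f_{\bm a^*}=\mathrm{diag}_n\{D\}$ as claimed. The ``moreover'' then follows from permutation-equivariance of Euclidean projection: relabelling the actions by $P$ is an isometry of $\R^{|\mathcal A|^n}$ that sends $\mathrm{diag}_n\{P\circ C\}$ to $\mathrm{diag}_n\{C\}$ and $\mathcal R(\bm a^*)$ to the cone for the relabelled target, so the minimiser of the permuted instance is obtained from the base case by the inverse relabelling, which is $\mathrm{diag}_n\{P\circ D\}$.

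The hard part is the inequality $C(k)\ge\bar C_l$ for $k>l$: for a generic increasing $C$ the projection would average only a proper sub-block of the tail $C(l),\dots,C(N)$ rather than all of it, and the lemma would be false. The sequence $\{h_t\}$ is reverse-engineered precisely so that this dual-feasibility condition holds, the ``$+1$'' in the recursion supplying the strict slack. The only other point needing care is organising the diagonal-reduction exchange argument so as to kill all off-diagonal entries simultaneously, which is routine given \cref{app-lemma:fa-diag-gt0-nondiag-eq0}.
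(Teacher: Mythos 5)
Your proof is correct, and it reaches the same destination through a noticeably different optimization argument. Both you and the paper first reduce to diagonal tensors via the exchange argument of \cref{app-lemma:fa-diag-gt0-nondiag-eq0} and both ultimately rest on the same crucial consequence of the recursion, namely $C(l+1)>\frac{1}{|\mathcal A|-l+1}\sum_{i\ge l}C(i)$ (your $h_{m-1}=\frac1m\sum_{j\le m}h_j+\frac1m$ is exactly the paper's computation that $\mathbf{Mean}\{C(l),\dots,C(|\mathcal A|)\}-C(l+1)=-\frac1m$). Where you diverge is in how the constrained least-squares on the diagonal is solved: the paper argues directly about the \emph{shape} of the minimizer (each $D(k)$ equals $\min\{C(k),v\}$ for $v=D(l)$), collapses the problem to a single variable $\phi(v)$, and locates the minimum by a piecewise-monotonicity analysis of $\phi$; you instead treat $f_{\bm a^*}$ as the Euclidean projection of $T$ onto the convex cone $\{D(l)\ge D(k)\}$ and verify the KKT system for the proposed $D$, with the recursion supplying dual feasibility. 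Your route is the more standard and arguably cleaner one — stationarity at $D(l)$ holds identically by the definition of the mean, so the entire burden is visibly concentrated on the single inequality $C(k)\ge\bar C_l$ for $k>l$, which makes transparent your (accurate) remark that $\{h_t\}$ is reverse-engineered precisely to force the averaged block to be the whole tail. The paper's route avoids invoking KKT machinery but requires the slightly fussier induction over the intervals $(C(j),C(j+1)]$. Your treatment of the ``moreover'' clause by permutation-equivariance of the projection is also sound and is exactly the argument the paper waves at with ``the discussion is essentially the same.''
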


    \begin{proof}
        It is easy to see that $C(k)\ge 1$ for all $k$.
        Therefore by \cref{app-lemma:fa-diag-gt0-nondiag-eq0}, $f_{\bm a^*}$ should be a diagonal tensor.

        Let $D(l) = v$, by the definition of $f_{\bm a^*}$, we have $D(k)\le v$ for all $k$.
        Suppose $C(k)<v$ for some $k\neq l$, then we must have $D(k) = C(k)$ since this would result in a zero loss on the entry; and for all $k\neq l$ such that $C(k)\ge v$, we must have $D(k) = v$ since this would result in a minimum loss under the constraint $D(k)\le v$.
        
        Let $M(v)\in\mathcal R^{|\mathcal A|}$ be a vector such that $M(v)(k)=\left\{\begin{array}{ll}
            C(k), & C(k)<v \text{ and } k\neq l \\
            v, & \text{otherwise}
        \end{array}\right.$

        Then minimizing $L$ in $\mathcal R(\bm a^*)$ is equivalent to minimizing $\phi(v) = L(M(v))$.

        Now we show that the global minimum $ v^*$ of $L$ is $\frac 1 {|\mathcal A| - l + 1} \sum _{i\ge l} C(i)$.

        First, it is obvious that $v^*\ge C(l)$.

        Secondly, note that

        \begin{align*}
            \left[\frac{1}{|\mathcal A|-l+1}\sum _{i\ge l } C(i)\right] - C(l+1)& =  \left[\frac{1}{|\mathcal A|-l+1} \sum_{i\ge l }(h_{|\mathcal A| - i+1}-h_{|\mathcal A|} +1) \right]- (h_{|\mathcal A| - l}-h_{|\mathcal A|} +1 )\\
            & =  \left[\frac{1}{|\mathcal A|-l+1} \sum_{i\le |\mathcal A|-l+1 }h_{i}\right]- h_{|\mathcal A| - l}\\
            & = \frac{1}{|\mathcal A|-l+1} \left[(|\mathcal A|-l)h_{|\mathcal A|-l}-\sum_{i\le |\mathcal A|-l-1} h_i-1  + \sum_{i\le |\mathcal A|-l }h_{i}\right]- h_{|\mathcal A| - l}\\
            & = \frac{1}{|\mathcal A|-l+1} \left[(|\mathcal A|-l + 1)h_{|\mathcal A|-l}-1\right]- h_{|\mathcal A| - l}\\
            & = -\frac{1}{|\mathcal A| - l + 1}
        \end{align*}

        That is, $\mathbf{Mean}\{C(l),\cdots,C(|\mathcal A|)\}< C(l+1)$.
        As a consequence, $\mathbf{Mean}\{C(l),C(l+2),\cdots, C(|\mathcal A|)\}<C(l+1)$.
        By induction, it is easy to show that $\mathbf{Mean}\{C(l),C(l'),\cdots,C(|\mathcal A|)\}<C(l+1)$ for all $l'>l$.

        Third, note that for all $v\ge C(l)$,

        $$\phi(v) = \frac 1 { |\mathcal A|^n}\left( (C(l)-v)^2 + \sum_{i\ge l'}(C(i)- v)^2\right)\quad\text{where }l'=\min\{j:C(j)\ge v\}\footnote{The term $\frac 1 {|\mathcal A|^n}$ comes from the uniform distribution $\mathcal D$.}$$

        Also, recall that for any $m$ and $x_1,\cdots, x_m$, the function $\xi(x)\stackrel{\Delta} = \sum_{k=1}^m(x_k-x)^2$ monotonically decreases on $(-\infty, \overline x)$ and monotonically increases on $(\overline x, +\infty)$, where $\overline x = \mathbf{Mean}\{x_1,\cdots,x_m\}$.\footnote{It can be easily proved by taking the derivative.}

        Thus, we have

        \begin{enumerate}
            \item $\phi$ is continuous.
            \item On $[C(l),\mathbf{Mean}\{C(l),\cdots,C(|\mathcal A|)\})$, $\phi(v) =\frac 1 {|\mathcal A|^n}\sum_{i\ge l}(C(i)- v)^2$ monotonically decreases.
            \item On $(\mathbf{Mean}\{C(l),\cdots,C(|\mathcal A|)\}, C(l+1)]$, $\phi(v) =\frac 1 {|\mathcal A|^n}\sum_{i\ge l}(C(i)- v)^2$ monotonically increases.
            \item On $(C(l+1), C(l+2)]$, $\phi(v) =\frac 1 {|\mathcal A|^n}\left((C(l)-v)^2 + \sum_{i\ge l+2}(C(i)- v)^2\right)$ monotonically increases.
            \item Based on 1, 3, 4 and by induction, on $(\mathbf{Mean}\{C(l),\cdots,C(|\mathcal A|)\}, + \infty)$, $\phi(v)$ monotonically increases.
        \end{enumerate}

        which means $v^* = \mathbf{Mean}\{C(l),\cdots,C(|\mathcal A|)\}$.

        This completes the proof of the first part of the lemma.
        For the second part when a permutation $P$ is involved, the discussion is essentially the same, we omit it here.

    \end{proof}

    \cref{app-lemma:fa-star-structure} shows that when we choose $\bm a^* = (l,\cdots,l)$, the resulting diagonal of $f_{\bm a^*}$ contains two parts:

    \begin{enumerate}
        \item The top $l-1$ smallest entries are ``revealed" as they exactly are.
        \item Other entries are ``masked" by the same constant $v = \frac 1 {|\mathcal A|-l+1}\sum_{i\ge l}C(i)$.
    \end{enumerate}

    This motivates us to construct the diagonal of $T$ recursively.

    \begin{algorithm}[htb]
      \caption{Construction of $T$ and Local Minima}
      \label{algdef:construct-T-local-minima}
      
    \begin{algorithmic}[1]
      \STATE {\bfseries Output:} $T\in \mathbb R^{|\mathcal A|^n}$ and a set $S$ of parameters.
      \STATE Initialize $E \in \mathbb R^{|\mathcal A|}$ with all zeros.\quad\COMMENT{The diagonal of $T$ to be constructed.}
    
      \FOR{$l = 1$ \textbf{to} $|\mathcal A|$}
        \STATE $D\gets E$.\quad\COMMENT{Reveal $l-1$ known entries of the diagonal.}
        \STATE Fill in all zero entries of $D$ by $v = \frac{1}{|\mathcal A| - l + 1}\sum_{i\ge l} C(i)$.\quad\COMMENT{Mask all unknown entries.}
        \STATE $f\gets \text{diag}_n\{D\}$.
        \STATE Calculate $\Theta_l$ such that $Q(\Theta_l) = f$.\quad\COMMENT{Decomposition, by the completeness assumption.}
        \STATE $S = S\cup \{\Theta_l\}$.\quad\COMMENT{This would be a local minimum at last.}
        \STATE $j \gets \argmax Q_1(\Theta_l)$.\quad\COMMENT{By \cref{app-fact:diagonal-decomp-unique-argmax}, $\argmax Q_1(\Theta_l)=\cdots = \argmax Q_{n}(\Theta_l)$, $|\argmax Q_1(\Theta_l)| = 1$. }
        \STATE $E(j) \gets C(l)$.\quad\COMMENT{Determine the index of $C(l)$.}
      \ENDFOR
      \STATE $T\gets  \text{diag}_n\{E\}$.
      \STATE {\bfseries Return} $T,S$.
    \end{algorithmic}
    \end{algorithm}

    It is easy to check that \cref{eq:high-level-local-minima,eq:decomposition-unique-and-consistent} are satisfied by the $T$ and $S$ output by \cref{algdef:construct-T-local-minima}.

    This completes our proof.
\end{proof}

\subsection{Some Discussion on QPLEX}

QPLEX\citep{Wang2021QPLEXDD} is a value-decomposition algorithm with an IGM-complete (\cref{def:complete-igm-complete}) function class.
According to \cref{thm:vd-fails}, its loss function $\mathcal L$ has numerous local minima.

In the empirical design of the algorithm, we notice that QPLEX has used some engineering tricks like "stop gradient" to modify the gradient of non-optimal points helping the algorithm to jump out of local optima.
But these tricks lack of theoretical guarantee, we can still construct cases where QPLEX is not able to reach the global optimum, such as the following Matrix Game (Table\ref{matgame2}).

\begin{table}[htb]
    \centering
    \begin{tabular}{|l|l|}
    \hline
    -20   & 10  \\ \hline
    10 & 9   \\ \hline

    \end{tabular}
    \caption{Matrix Game 2, $m=2$}
    \label{matgame2}
\end{table}

This Matrix Game has two global optima $(0,1)$ and $(1,0)$, and one suboptimal solution $(1,1)$ with high reward.
QPLEX will likely to initialize to the suboptimal solution $(1,1)$, and after that, it gets confused since the manually modified gradient does not tell it the right direction.
The learned joint $Q$ vibrates around the following matrix:

$$\left(\begin{array}{cc}-20&29/3-\epsilon\\29/3-\epsilon &29/3\end{array}\right)$$

which can be proved to be a local optimum of the loss function $\mathcal L$ according to the proof of \cref{thm:vd-fails}.

\section{Omitted Definitions and Proofs in Section  \ref{sec:transformation}}
\label{app:transformation}

\subsection{Omitted proof of Theorem \ref{thm:convert-policy-keeps-value}}

Before we provide the proof of \cref{thm:convert-policy-keeps-value}, we first introduce the policy conversion between a policy $\pi$ on $\Gamma(\mathcal M)$ and a coordination policy $\pi^c$ on $\mathcal M$. $\mathcal M$ and $\Gamma(\mathcal M)$ refer to an MMDP and its sequential transformation (\cref{algdef:transformation}).

We define the coordination policy at first.

\begin{definition}[coordination policy]
\label{def:coordination-policy}
A coordination policy $ \pi^c$ is $n$ individual policies for sequential decision making. That is $\pi^c = (\pi_1,\cdots,\pi_n)$, where $\pi_i(a_i|s,a_{<i})$ is a policy of agent $i$ conditioned on the state $s$ and ``previously inferred" actions $a_{<i} = \langle a_1,\dots, a_{i-1}\rangle$.
\end{definition}


Then there is a natural correspondence between the policy $\pi$ on $\Gamma(\mathcal M)$ and the corresponding coordination policy $\pi^c$ on $\mathcal M$.
We present the bi-directional conversion as follows.

\paragraph{from $\pi$ to $\pi^c$} Suppose we have a policy $\pi(a_{k+1}|(s,a_1,\cdots,a_k))$ on $\Gamma(\mathcal M)$.
Denote $\pi_k(a_{k}|s,a_1,\cdots,a_{k-1}) = \pi(a_{k}|(s,a_1,\cdots,a_{k-1}))$.
Then $\pi^c = (\pi_1,\cdots,\pi_n)$ is the corresponding coordination policy on $\mathcal M$.

\paragraph{from $\pi^c$ to $\pi$} Suppose we have a coordination policy $\pi^c = (\pi_1,\cdots,\pi_n)$ on $\mathcal M$.
We can directly have the policy $\pi$ on $\Gamma(\mathcal M)$, where $\pi(a_{k}|(s,a_1,\cdots,a_{k-1})) = \pi_k(a_{k}|s,a_1,\cdots,a_{k-1})$.

Now, we rephrase \cref{thm:convert-policy-keeps-value} and give a proof.

\begin{reptheorem}{thm:convert-policy-keeps-value}
For any policy $\pi$ on $\Gamma\left(\mathcal M\right)$ and its corresponding coordination policy $\pi^c$ on $\mathcal M$, we have $\mathcal J_{\mathcal M}(\pi^c) = \gamma^{(1-n)/n}\mathcal  J_{\Gamma\left(\mathcal M\right)}(\pi)$.
\end{reptheorem}

\begin{proof}

We use the notations defined in \cref{algdef:transformation}.

For some policy $\eta$, the notation $r(s,\eta)$ refers to $\mathbb E_{a\sim \eta(\cdot|s)}[r(s,a)]$.
And the notation $P(s,\eta)$ refers to the distribution of $s'$, where $a\sim \eta(\cdot|s), s'\sim P(\cdot|s,a)$.

\begin{align*}
    \mathcal J_{\mathcal M}(\pi^c)& = 
        \mathbb E\left[
            \sum_{t=0}^\infty\gamma^tr(s_t,\pi^c)
            \middle|
            s_{t+1}\sim P(s_t,\pi^c)
        \right]\\
    & = \mathbb E\left[
        \sum_{t=0}^\infty   {\gamma'}^{nt}
            r'\left(\left(s_t,a_{<n}^{(t)}\right) ,\pi \right)
        \middle| 
        s_{t+1}\sim P(s_t,\pi^c),
        a_l^{(t)}\sim \pi\left(\cdot \middle| \left(s_t,a_{<l}^{(t)}\right)\right)
    \right]\\
    &=\mathbb E\left[
        \sum_{t=0}^\infty \sum_{k=0}^{n-1}{\gamma'}^{nt+k-n+1} 
            r'\left(\left(s_t,a_{<k}^{(t)}\right), \pi \right)
        \middle|
        s_{t+1}\sim P(s_t,\pi^c),
        a_l^{(t)}\sim 
            \pi\left(\cdot\middle|\left(s_t,a_{<l}^{(t)}\right)\right)\right]\\
    &=\mathbb E\left[
        {\gamma'}^{1-n}\sum_{t'=0}^\infty
            {\gamma'}^{t'} r'(s_{t'},\pi)
        \middle|
            s_{t'+1}\sim  P'(s_{t'},\pi)
        \right]\\
    &=\mathcal \gamma^{\frac{1-n}{n}}J_{\Gamma\mathcal (\mathcal M)}(\pi)
\end{align*}

\end{proof}

\subsection{Omitted Proof of Theorem \ref{thm:TAD-keeps-optimality}}

Before we prove \cref{thm:TAD-keeps-optimality}, we first define our distillation process on MMDP in \cref{algdef:distillation}.

\begin{algorithm}[htb]
  \caption{The Distillation Stage in the TAD framework}
  \label{algdef:distillation}
  
\begin{algorithmic}[1]
  \STATE {\bfseries Input:} An MMDP $\mathcal M$ of $n$ agents, a coordination policy $\pi^c=(\pi_1,\cdots, \pi_n)$ on $\mathcal M$.
  \STATE {\bfseries Output:} Decentralized policies $\eta_1,\cdots,\eta_n$.

  \STATE\COMMENT{Calculate the greedy policy for each agent.}
  \FOR{$k = 1$ \textbf{to} $n$}
     \FORALL{$s\in \mathcal S, a_1,\cdots,a_{k-1}\in \mathcal A$}
     \STATE $a^*\gets \argmax_{a_k} \pi_k(a_k|s,a_1,\cdots,a_{k-1}) $\quad\COMMENT{If there are multiple, select arbitrarily.}
     \STATE $\mu_k(s,a_1,\cdots,a_{k-1})=a^*$.\quad\COMMENT{Convert $\pi^c$ to a deterministic coordination policy.}
     \ENDFOR
  \ENDFOR

  \FORALL{$s\in \mathcal S$}
    \FOR{$k=1$ \textbf{to} $n$}
        \STATE $\eta_k(s) = \mu_k(s,\eta_1(s),\cdots,\eta_{k-1}(s))$.\quad\COMMENT{The deterministic coordination policy is natually decentralized.}
    \ENDFOR
  \ENDFOR
  
  \STATE {\bfseries Return} $\eta_1,\cdots,\eta_n$.
\end{algorithmic}
\end{algorithm}

Now we present the proof of \cref{thm:TAD-keeps-optimality}.

\begin{reptheorem}
    {thm:TAD-keeps-optimality}

    For any SARL algorithms $A$ with optimality guarantee on MDP, TAD-$A$, the MARL algorithm using TAD framework and based on $A$, will have optimality guarantee on any MMDP.
\end{reptheorem}

\begin{proof}
    For arbitrary MMDP $\mathcal M$, since $A$ has an optimality guarantee on MDP, the policy on $\Gamma(\mathcal M)$ learned by $A$ maximizes the expected return on $\Gamma(\mathcal M)$. Then by \cref{thm:convert-policy-keeps-value}, the coordination policy $\pi^c$ obtained in the transformation stage (see \cref{alg:TAD-framework}) maximizes the expected return on $\mathcal M$.

    In the distillation stage (\cref{algdef:distillation}), $\pi^c$ is first converted into a deterministic coordination policy $\mu^c = (\mu_1,\cdots,\mu_k)$.
    Since $\pi^c$ is optimal on $\mathcal M$, $\mu^c$ should also be optimal.

    Finally, line 8-10 in \cref{algdef:distillation} describe the way we decentralize a deterministic coordination policy $\mu^c$.
    It is obvious that $\mu^c = \bigtimes_{i=1}^n\eta_i$.
    Therefore the decentralized policies $\eta_1,\cdots,\eta_n$ are also optimal.
\end{proof}

\begin{theorem}
    \label{thm:TAD-PPO-optimality}
    Suppose that assumptions
4.1, 4.3, and 4.4 in \citet{ppooptimal} hold. TAD-PPO has optimality guarantee on finite multi-agent MDPs.
\end{theorem}

\begin{proof}
     By Theorem 4.10 in \citet{ppooptimal}, under the assumptions
4.1, 4.3, and 4.4 in \citet{ppooptimal}, A properly implemented PPO (even with a neural network), achieves global optimality \citep{ppooptimal}.  TAD-PPO converges to the global optimum by our \cref{thm:TAD-PPO-optimality}.
\end{proof}


\subsection{The Complexity of the Transformed Model}
\label{subsec:complexity}

By the TAD framework, we are able to convert any MMDP to an MDP and run SARL algorithms on the MDP to solve the MMDP. One natural question is, will such framework bring additional hardness of the task?

We will exclude the influence of the distillation stage and consider the sequential transformation only.
That is, we consider the sequential transformation framework (the TAD framework without distillation).

First of all, the sequential transformation framework is reversible in that for any multi-agent algorithm $A$, and any MDP $\widetilde{\mathcal M}$, we can always compress $n$ steps on $\widetilde{\mathcal M}$ into one step to obtain $\Gamma^{-1}(\widetilde{\mathcal M})$ and then use the corresponding multi-agent algorithm $A$ to solve it.

Then, we introduce the concept of minimax sample complexity.
Roughly speaking, the minimax sample complexity is the sample complexity of the ``best" algorithm over the ``hardest" task.
Since the sequential transformation framework is reversible with only negligible additional cost in time and space, we are able to claim the sequential transformation framework does not increase the minimax sample complexity\footnote{One should keep in mind that every $n$ samples on MDP correspond to exactly one sample on MMDP. Particularly, the number of bits we need to record every $n$ samples on MDP are exactly what we need to record one sample on MMDP.}.

Nevertheless, for a concrete algorithm $A$ (e.g. Q-learning), the sample complexity is not necessary to be the same after such transformation.

We take Q-learning as an example and first investigate the size of state-action space before and after the transformation.
It easy to see that the size of the state-action space of $\mathcal M$ is $|\mathcal S||\mathcal A|^n$, and that of its $\Gamma\left(\mathcal M\right)$ is $|\mathcal A|\sum_{i=0}^{n-1}|\mathcal S||\mathcal A|^i = |\mathcal S||\mathcal A|\frac{1-|\mathcal A|^n}{1-|\mathcal A|} \le 2|\mathcal S||\mathcal A|^N$.
This implies that the sequential transform does not increase the complexity in the state-action space.

However, if we take a closer look here of the sample complexity, we will find that the exact sample complexity bound of Q-learning is $O^*\left(\frac{|\mathcal S||\mathcal A|}{(1-\gamma)^4\epsilon^2}\right)$ (\citep{li2021qlearning}), which depends on not only the size of state-action space, but also on the magnitude of $\frac 1 {1-\gamma}$.
This implies that the sample complexity may increase for certain algorithms since $\Gamma\left(\mathcal M\right)$ has a longer horizon ($n$ times).

\section{Experimental Details}\label{app:experimental_details}

In this section, we provide more experimental results supplementary to those presented in Section \ref{exp}. We also discuss the details of the experimental settings of both our matrix game and the StarCraft II micromanagement (SMAC) benchmark.



\subsection{Details of Multi-task Matrix Game}
\label{matrices}

In multi-task matrix game, the return of the optimal strategy corresponding to each matrix is 10, which means the sum rewards of the global optimal strategy is 100. Two agents are initialized to one matrix uniformly at random, and the ID of a current matrix is observable to both of them. They need to cooperate to select the entry with the maximum reward for the current matrix, after that, the game ends. Each matrix contains $5\times 5=25$ entries, which means $\mathcal A = \{0,1,2,3,4\}$ for each agent.

\begin{table}[!htb]
    \caption{Multi-task Matrix Game}
    \label{tab:msmatgame}
    \begin{minipage}{.5\linewidth}
      \caption*{Matrix 1}
      \centering
        \begin{tabular}{|c|c|c|c|c|c|}

        \hline
        \diagbox{$a_2$}{$a_1$}    & \(\mathcal{A}^{(1)}\) & \(\mathcal{A}^{(2)}\) & \(\mathcal{A}^{(3)}\) & \(\mathcal{A}^{(4)}\) & \(\mathcal{A}^{(5)}\) \\ \hline
        \(\mathcal{A}^{(1)}\) & \textbf{10}           & -10                   & -10                   & -10                   & -10                   \\ \hline
        \(\mathcal{A}^{(2)}\) & -10                   & 9                     & 0                     & 0                     & 0                     \\ \hline
        \(\mathcal{A}^{(3)}\) & -10                   & 0                     & 9                     & 0                     & 0                     \\ \hline
        \(\mathcal{A}^{(4)}\) & -10                   & 0                     & 0                     & 9                     & 0                     \\ \hline
        \(\mathcal{A}^{(5)}\) & -10                   & 0                     & 0                     & 0                     & 9                     \\ \hline
        \end{tabular}
    \end{minipage}%
    \begin{minipage}{.5\linewidth}
      \centering
        \caption*{Matrix 2}
        \begin{tabular}{|c|c|c|c|c|c|}
        \hline
        \diagbox{$a_2$}{$a_1$}                   & \(\mathcal{A}^{(1)}\) & \(\mathcal{A}^{(2)}\) & \(\mathcal{A}^{(3)}\) & \(\mathcal{A}^{(4)}\) & \(\mathcal{A}^{(5)}\) \\ \hline
        \(\mathcal{A}^{(1)}\) & \textbf{10}           & -10                   & \textbf{10}                    & -10                   & \textbf{10}           \\ \hline
        \(\mathcal{A}^{(2)}\) & -10                   & \textbf{10}           & -10                   & \textbf{10}           & -10                   \\ \hline
        \(\mathcal{A}^{(3)}\) & \textbf{10}           & -10                   & \textbf{10}           & -10                   & \textbf{10}           \\ \hline
        \(\mathcal{A}^{(4)}\) & -10                   & \textbf{10}           & -10                   & \textbf{10}           & -10                   \\ \hline
        \(\mathcal{A}^{(5)}\) & \textbf{10}           & -10                   & \textbf{10}           & -10                   & \textbf{10}           \\ \hline
        \end{tabular}
    \end{minipage} 
    \begin{minipage}{.5\linewidth}
      \caption*{Matrix 3}
      \centering
        \begin{tabular}{|c|c|c|c|c|c|}
        \hline
        \diagbox{$a_2$}{$a_1$}                  & \(\mathcal{A}^{(1)}\) & \(\mathcal{A}^{(2)}\) & \(\mathcal{A}^{(3)}\) & \(\mathcal{A}^{(4)}\) & \(\mathcal{A}^{(5)}\) \\ \hline
        \(\mathcal{A}^{(1)}\) & -20                   & -20                   & -20                   & -20                   & \textbf{10}           \\ \hline
        \(\mathcal{A}^{(2)}\) & -20                   & -20                   & -20                   & \textbf{10}           & 9                     \\ \hline
        \(\mathcal{A}^{(3)}\) & -20                   & -20                   & \textbf{10}           & 9                     & 9                     \\ \hline
        \(\mathcal{A}^{(4)}\) & -20                   & \textbf{10}           & 9                     & 9                     & 9                     \\ \hline
        \(\mathcal{A}^{(5)}\) & \textbf{10}           & 9                     & 9                     & 9                     & 9                     \\ \hline
        \end{tabular}
    \end{minipage}%
    \begin{minipage}{.5\linewidth}
      \centering
        \caption*{Matrix 4}
        \begin{tabular}{|c|c|c|c|c|c|}
        \hline
        \diagbox{$a_2$}{$a_1$}                  & \(\mathcal{A}^{(1)}\) & \(\mathcal{A}^{(2)}\) & \(\mathcal{A}^{(3)}\) & \(\mathcal{A}^{(4)}\) & \(\mathcal{A}^{(5)}\) \\ \hline
        \(\mathcal{A}^{(1)}\) & -20                   & -20                   & -20                   & -20                   & \textbf{10}           \\ \hline
        \(\mathcal{A}^{(2)}\) & -20                   & -20                   & -20                   & \textbf{10}           & 9                     \\ \hline
        \(\mathcal{A}^{(3)}\) & -20                   & -20                   & \textbf{10}           & 9                     & 8                     \\ \hline
        \(\mathcal{A}^{(4)}\) & -20                   & \textbf{10}           & 9                     & 8                     & 7                     \\ \hline
        \(\mathcal{A}^{(5)}\) & \textbf{10}           & 9                     & 8                     & 7                     & 6                     \\ \hline
        \end{tabular}
    \end{minipage} 
        \begin{minipage}{.5\linewidth}
      \caption*{Matrix 5}
      \centering
        \begin{tabular}{|c|c|c|c|c|c|}
        \hline
        \diagbox{$a_2$}{$a_1$}                  & \(\mathcal{A}^{(1)}\) & \(\mathcal{A}^{(2)}\) & \(\mathcal{A}^{(3)}\) & \(\mathcal{A}^{(4)}\) & \(\mathcal{A}^{(5)}\) \\ \hline
        \(\mathcal{A}^{(1)}\) & -20                   & -15                   & -10                   & -5                    & 6                     \\ \hline
        \(\mathcal{A}^{(2)}\) & -20                   & -15                   & -10                   & 7                     & 5                     \\ \hline
        \(\mathcal{A}^{(3)}\) & -20                   & -15                   & 8                     & 6                     & 4                     \\ \hline
        \(\mathcal{A}^{(4)}\) & -20                   & 9                     & 7                     & 5                     & 3                     \\ \hline
        \(\mathcal{A}^{(5)}\) & \textbf{10}           & 8                     & 6                     & 4                     & 2                     \\ \hline
        \end{tabular}
    \end{minipage}%
    \begin{minipage}{.5\linewidth}
      \centering
        \caption*{Matrix 6}
        \begin{tabular}{|c|c|c|c|c|c|}
        \hline
        \diagbox{$a_2$}{$a_1$}                  & \(\mathcal{A}^{(1)}\) & \(\mathcal{A}^{(2)}\) & \(\mathcal{A}^{(3)}\) & \(\mathcal{A}^{(4)}\) & \(\mathcal{A}^{(5)}\) \\ \hline
        \(\mathcal{A}^{(1)}\) &0.8       &-16.0   &-5.0      &-10.9   &-3.7                      \\ \hline
        \(\mathcal{A}^{(2)}\) &-9.2      &-4.2      &7.3       &9.6       &-3.0                      \\ \hline
        \(\mathcal{A}^{(3)}\) & -20.0      &-18.1   &0.2       &-4.3      &9.0                       \\ \hline
        \(\mathcal{A}^{(4)}\) & -14.9      &-2.0      &-17.7   &-17.6   &-0.8                      \\ \hline
        \(\mathcal{A}^{(5)}\) & 3.8      &\textbf{10}      &7.5       &9.2       &-10.7                     \\ \hline
        \end{tabular}
    \end{minipage} 
        \begin{minipage}{.5\linewidth}
      \caption*{Matrix 7}
      \centering
        \begin{tabular}{|c|c|c|c|c|c|}
        \hline
        \diagbox{$a_2$}{$a_1$}                  & \(\mathcal{A}^{(1)}\) & \(\mathcal{A}^{(2)}\) & \(\mathcal{A}^{(3)}\) & \(\mathcal{A}^{(4)}\) & \(\mathcal{A}^{(5)}\) \\ \hline
        \(\mathcal{A}^{(1)}\) & -14.4   & -15.8   & 1.5   & -5.4  & \textbf{10}                      \\ \hline
        \(\mathcal{A}^{(2)}\) & -13.2   & 5.8     & -8.7  & -2.2  & -18.2                      \\ \hline
        \(\mathcal{A}^{(3)}\) & -5.9    & -19.0   & -0.7  & -2.0  & -19.5                      \\ \hline
        \(\mathcal{A}^{(4)}\) & 0.8     & 4.7     & -14.8 & 2.5   & -4.1                     \\ \hline
        \(\mathcal{A}^{(5)}\) & -11.3   &-8.2     &-20.0  &-17.3  &-17.6                     \\ \hline
        \end{tabular}
    \end{minipage}%
    \begin{minipage}{.5\linewidth}
      \centering
        \caption*{Matrix 8}
        \begin{tabular}{|c|c|c|c|c|c|}
        \hline
        \diagbox{$a_2$}{$a_1$}                  & \(\mathcal{A}^{(1)}\) & \(\mathcal{A}^{(2)}\) & \(\mathcal{A}^{(3)}\) & \(\mathcal{A}^{(4)}\) & \(\mathcal{A}^{(5)}\) \\ \hline
        \(\mathcal{A}^{(1)}\) &-1.4   & -19.2   & 7.2   & -5.5  & 7.4                       \\ \hline
        \(\mathcal{A}^{(2)}\) &-18.5  & -20.0   & -14.4   & -17.6   & -5.1                      \\ \hline
        \(\mathcal{A}^{(3)}\) & 3.6   &5.5   & \textbf{10}  &-13.3   &-4.9                       \\ \hline
        \(\mathcal{A}^{(4)}\) & 9.8   &-12.3   &0.6   &-16.5   &-13.0                      \\ \hline
        \(\mathcal{A}^{(5)}\) & -11.8  &-20.0  &-2.4 &7.1  &-2.3                      \\ \hline
        \end{tabular}
    \end{minipage} 
        \begin{minipage}{.5\linewidth}
      \caption*{Matrix 9}
      \centering
        \begin{tabular}{|c|c|c|c|c|c|}
        \hline
        \diagbox{$a_2$}{$a_1$}                  & \(\mathcal{A}^{(1)}\) & \(\mathcal{A}^{(2)}\) & \(\mathcal{A}^{(3)}\) & \(\mathcal{A}^{(4)}\) & \(\mathcal{A}^{(5)}\) \\ \hline
        \(\mathcal{A}^{(1)}\) & -4.5    &-5.2     &-8.4   &-8.9   &5.5                       \\ \hline
        \(\mathcal{A}^{(2)}\) & -12.4   &-9.5     &8.8    &5.4    &4.4                       \\ \hline
        \(\mathcal{A}^{(3)}\) & -4.6    &1.3      &5.5    &7.3    &-6.8                       \\ \hline
        \(\mathcal{A}^{(4)}\) & 9.0     &-18.7    &-18.2  &-13.7  &-8.2                     \\ \hline
        \(\mathcal{A}^{(5)}\) & 2.2     &-9.1     &\textbf{10}    &7.1    &-20.0                      \\ \hline
        \end{tabular}
    \end{minipage}%
    \begin{minipage}{.5\linewidth}
      \centering
        \caption*{Matrix 10}
        \begin{tabular}{|c|c|c|c|c|c|}
        \hline
        \diagbox{$a_2$}{$a_1$}                  & \(\mathcal{A}^{(1)}\) & \(\mathcal{A}^{(2)}\) & \(\mathcal{A}^{(3)}\) & \(\mathcal{A}^{(4)}\) & \(\mathcal{A}^{(5)}\) \\ \hline
        \(\mathcal{A}^{(1)}\) &-8.4   &-1.8  &-20.0   &7.3   &-3.0                       \\ \hline
        \(\mathcal{A}^{(2)}\) &-8.7   &1.7   &4.8   &2.0   &-7.8                      \\ \hline
        \(\mathcal{A}^{(3)}\) & -13.3 &-3.2  &0.7   &-1.8  &-10.7                        \\ \hline
        \(\mathcal{A}^{(4)}\) & 9.8   &-12.3   &0.6   &-16.5   &-13.0                      \\ \hline
        \(\mathcal{A}^{(5)}\) & 1.8   &2.9   &-1.1  &\textbf{10}  &8.2                    \\ \hline
        \end{tabular}
    \end{minipage} 
\end{table}

All 10 payoff matrices are listed in Table \ref{tab:msmatgame}. The optimal strategies' payoff of all matrices is 10. Matrices $1-5$ are hand-crafted in order to create some hard NEs. Matrices $6-10$ are drawn uniformly at random. It's worth noting that a random $5\times 5$ matrix has $25/9\approx 2.77$ different NEs in expectation. And in our opinion, the existence of suboptimal NEs is the main reason why existing algorithms fail.

\newpage

\subsection{Benchmarking on StarCraft II Micromanagement Tasks}
\label{sec:smac_all}

\begin{figure}[t]
  \centering
    \includegraphics[width=.85\linewidth]{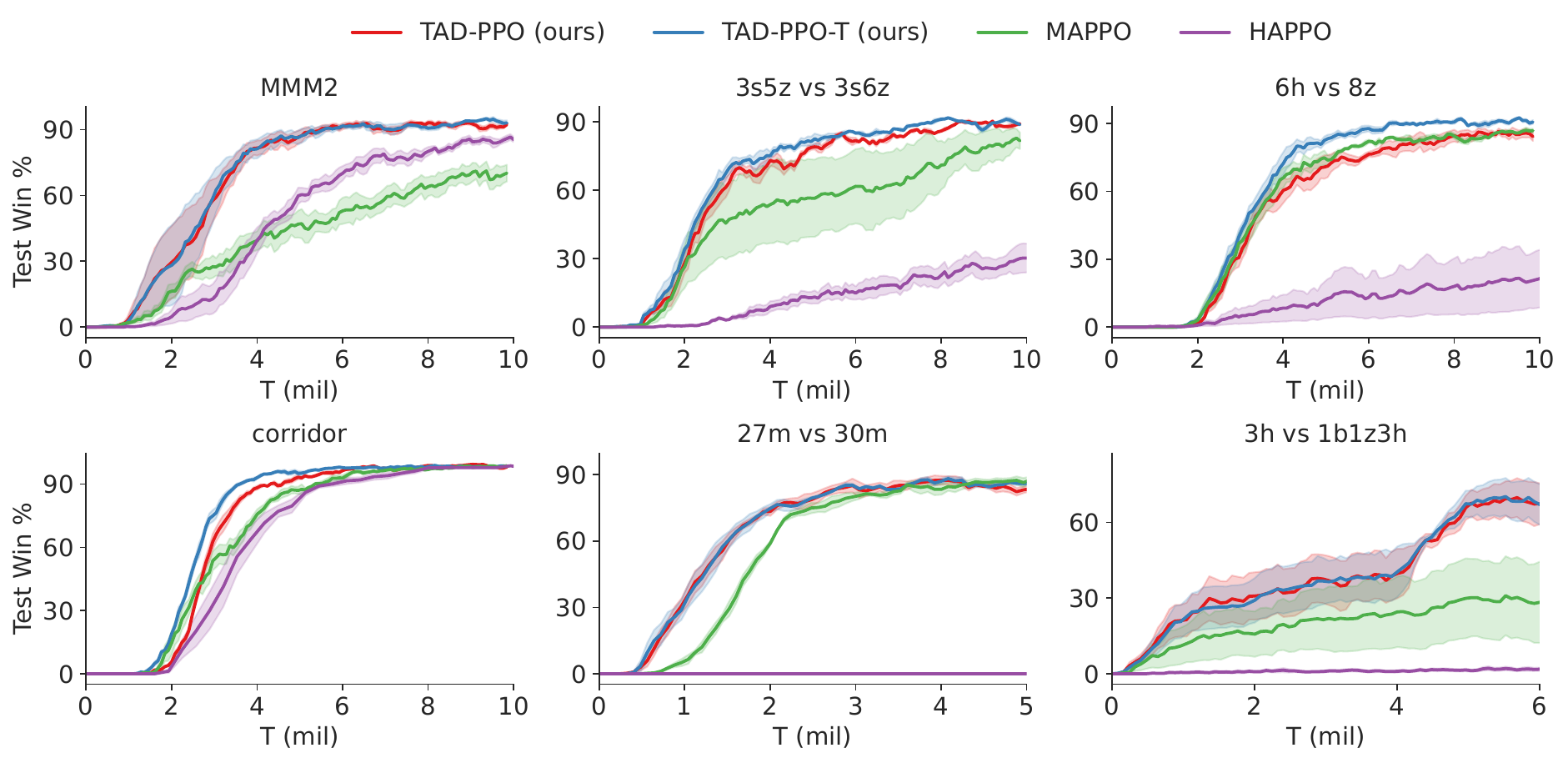}
  \caption{Comparisons between our approach and policy-based state-of-the-art MARL baselines on all \textbf{super hard} maps.}
  \label{fig:superhard}
\end{figure}

\begin{figure}[t]
  \centering
    \includegraphics[width=.6\linewidth]{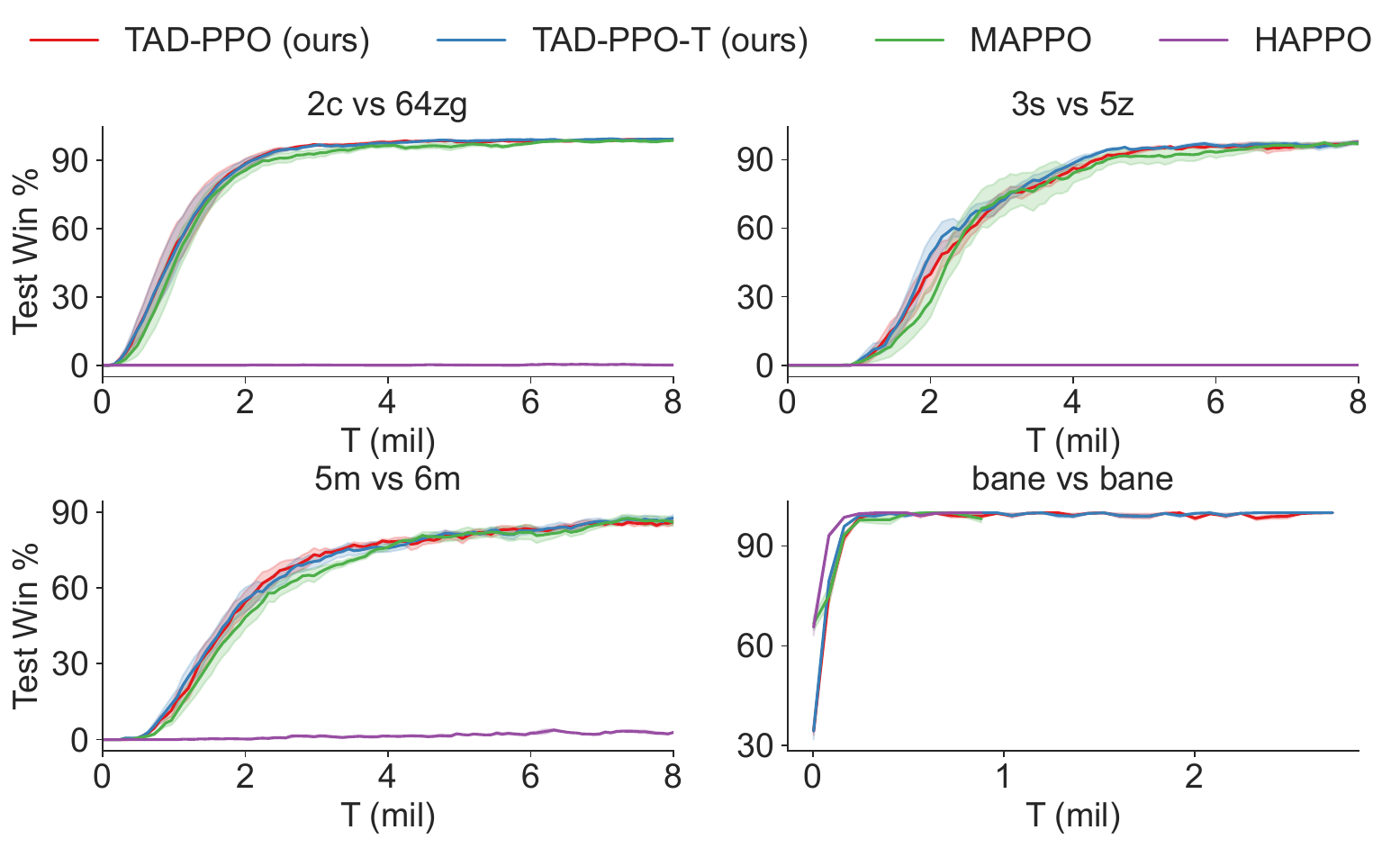}
  \caption{Comparisons between our approach and policy-based state-of-the-art MARL baselines on all \textbf{hard} maps.}
  \label{fig:hard}
\end{figure}

In section \ref{smac}, we have compared and discussed the advantage of our approach against baselines on several representative maps. Here we further compare our reproach against baselines on all maps. The SMAC benchmark contains 14 maps that have been classified as easy, hard, and super hard. In this paper, we design one more map $\mathtt{3h\_vs\_1b1z3h}$, whose difficulty is comparable with official super hard maps.

In Figure. \ref{fig:superhard}, we compare the performance of our approach with baseline algorithms on all super hard maps. We can see that TAD-PPO outperforms all the baselines, especially on $\mathtt{3s5z\_vs\_3s6z}$, $\mathtt{MMM2}$, and $\mathtt{3h\_vs\_1b1z3h}$. These results demonstrate that TAD-PPO can handle challenging tasks more efficiently with theoretical guarantees of global optimality. HAPPO performs poorly on 4 out of 6 super hard maps, demonstrating its limitation on complex tasks.

\begin{figure}[!htbp]
  \centering
    \includegraphics[width=.85\linewidth]{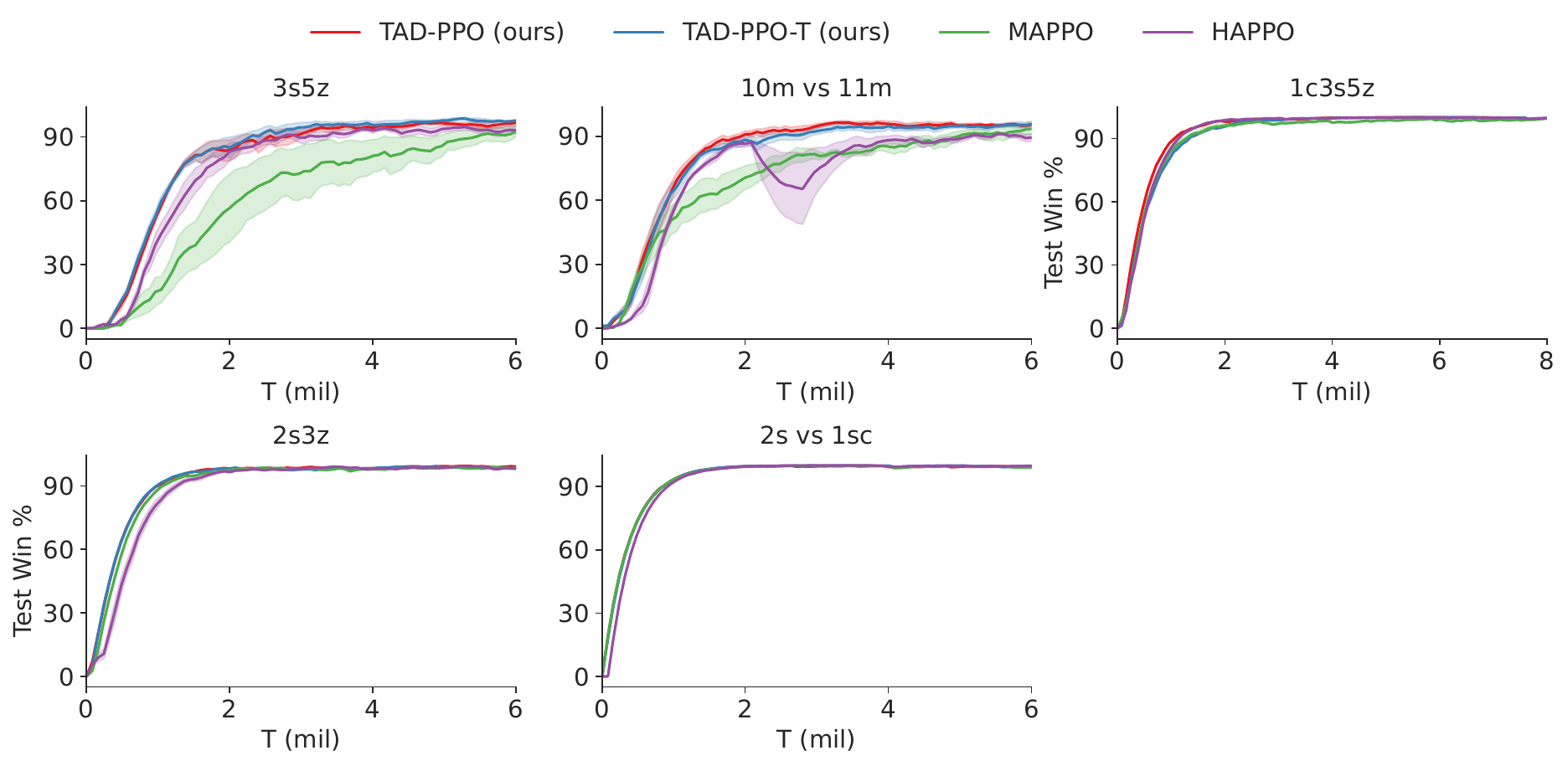}
  \caption{Comparisons between our approach and policy-based state-of-the-art MARL baselines on all \textbf{easy} maps.}
  \label{fig:easy}
\end{figure}


Our approach achieves similar convergence performance with MAPPO in hard maps as shown in Figure.~\ref{fig:hard}, because there is no obvious local optimal point in hard maps. In easy maps, our approach achieves faster convergence rate compared with MAPPO in three out of five maps as shown in Figure.~\ref{fig:easy}.

\subsection{Hyper-parameters}

Our code is implemented based on MAPPO. We share the same structure and hyper-parameters with MAPPO~\citep{yu2021surprising} except improvement we mentioned in Section \ref{sec:TAD-PPO-structure} to instantiate our transformation framework, which contains two extra hyper-parameters: the number of our multi-head attention modules' heads ($N_h$) and the dimension ($D$) of WQs, WKs, and Vs. In all easy and hard maps of the SMAC benchmark, we set $N_h=4$ and $D=4$. In super hard maps of the SMAC benchmark, we adapt two sets of hyper-parameters to deal with various diversity level of teammate strategies. In $\mathtt{MMM2}$, $\mathtt{corridor}$, and $\mathtt{3h\_vs\_1b1z3h}$, we set $N_h=4$ and $D=8$. In other three super hard maps, we set $N_h=3$ and $D=4$. In all GRF academic scenarios used in this paper, we set $N_h=4$ and $D=6$.

\subsection{Running time of TAD-PPO} \label{app:tadppo_running_time}

The sequential update does take longer time than the concurrent update in the training phase, while in the testing phase, our algorithm doesn't take extra time since a decentralized policy is already calculated by distillation.

Specifically, in the training phase, our framework takes $n$ times the time to do action inference, where $n$ is the number of agents. However, it's worth mentioning that, the time of action inference is only part of the time doing a whole training iteration, which also includes environment simulation and policy training. On the whole, the training time cost of our framework is 0.91 times more than MAPPO in SMAC environment $\mathtt{3m}$ (3 agents), and 1.76 times more in SMAC environment $\mathtt{10m\ vs\ 11m}$ (10 agents). This result embodies a trade-off between training time and training performance. Specific time of each part is shown Table \ref{table:time}.

\begin{table}[t]

\caption{Comparison between TAD-PPO and MAPPO on training time}
\centering
\fontsize{7.5pt}{5pt}
\begin{tabular}{|c|c|c|c|c|c|}

\hline
\textbf{Env: } & \textbf{Action}  & \textbf{Env}  & \textbf{Env interaction:}  & \textbf{Policy} & \textbf{The whole training phase:} \\
 \textbf{3m (1M)} & \textbf{inference:} $t_I$ & \textbf{simulation:} $t_S$ & $t_I+t_S$& \textbf{Training:} $t_P$  & $t_I+t_S+t_P$ \\
\hline
 TAD-PPO & 1694.6s  & 915.4s  & 2610s & 233.7s & 2843.7s\\

 MAPPO & 434.9s  & 904.5s & 1339.4s  & 151.9s  & 1491.3s \\

 Ratio & 3.9 & 1.01 & 1.95 & 1.54 & 1.91\\
 \hline
 \hline
\textbf{Env:} & \textbf{Action}  & \textbf{Env}  & \textbf{Env interaction:}  & \textbf{Policy}  & \textbf{The whole training phase:} \\
 \textbf{10m vs 11m (1M)} & \textbf{inference:} $t_I$ & \textbf{simulation:} $t_S$ & $t_I+t_S$& \textbf{Training:} $t_P$  & $t_I+t_S+t_P$  \\
\hline
 TAD-PPO & 5011.7s & 2065.5s & 7077.2s   & 794.1s & 7871.1s  \\

 MAPPO  & 501.1s  & 2063.2s  & 2564.3s & 292.1s & 2856.4s  \\

 Ratio & 10.0 & 1.0  & 2.76  & 2.72 & 2.76 \\ \hline

\end{tabular}
\label{table:time}
\end{table}

\subsection{TAD-DQN results}

\begin{figure}[htb]
  \centering
    \includegraphics[width=13cm]{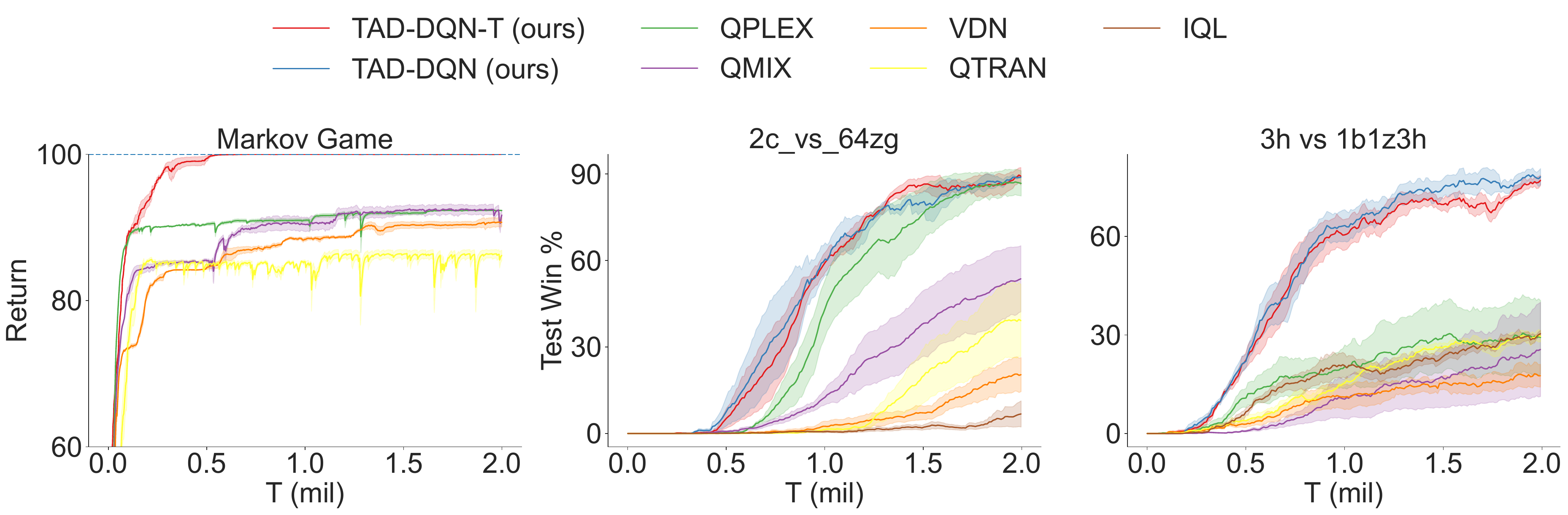}
  \caption{Comparisons between TAD-DQN and off-policy baselines on Markov Game, a \textbf{hard} and a \textbf{superhard} SMAC maps.}
  \label{fig:TDQN-small}
\end{figure}

We also implement the off-policy method TAD-DQN based on the single agent algorithm DQN \citep{dqn} for completeness. In Figure \ref{fig:TDQN-small} We evaluate DQN on a Markov Game (Matrix games with random transition), a \textbf{hard} and a \textbf{superhard} SMAC maps to show that our framework is also compatible with off-policy methods.

\subsection{Learned behaviour of the sequential framework}

We visualize the policy learned by our approach and compare it with MAPPO in MMM2. Based on the comparison, we notice an interesting phenomenon. The joint strategy trained by MAPPO is usually conservative, only moving in a small area, and only two agents are left in the end. On the contrary, the joint strategy trained by our approach is more aggressive. Our agents pull back and forth frequently based on opponents' movement in a large-scale range while ensuring effective fire focus. This phenomenon provides further evidence that our sequential transformation framework enables each agent to fully understand the team strategy for more efficient coordination.

\subsection{Benchmarking with more baselines}
Recently there are some MARL algorithms taking the dependencies among agents into explicit consideration. DFOP \citep{wang2022dfop} factorizes the joint policy to individual policies by learning an independent policy term and a correction term; ACE \citep{li2023ace} considers the bidirectional action dependency. We conduct further experiments with these baselines. Both of the two methods are value-decomposition methods. Therefore it is not fair to compare these methods in terms of either sample efficiency or final performance as the experiments conducted in the main paper, since TAD-PPO is a multi-agent policy gradient algorithm. To enable fair comparison, we compare TAD-PPO with these methods in terms of time efficiency for training. The results are presented in \cref{fig:DFOP-ACE-time-cmp}.

 \begin{figure}[htb]
  \centering
    \includegraphics[width=13cm]{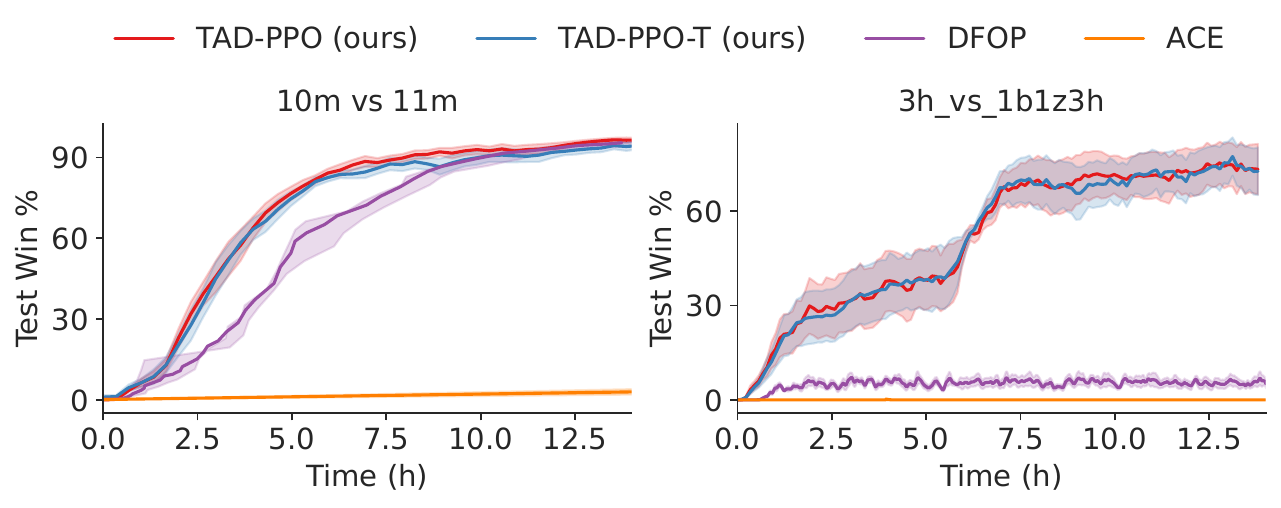}
  \caption{Comparison on time efficiency between TAD-PPO and VD baselines that also takes dependency into consideration on SMAC maps. TAD-PPO converges to a strongly performant solution within 14 hours of training on wall clock, while DFOP and ACE may need a significantly longer time to achieve a well-behaved policy.}
  \label{fig:DFOP-ACE-time-cmp}
\end{figure}

\subsection{Benchmarking on SMAC v2 benchmark} \label{app:exp_smacv2}

SMAC v2 has recently been proposed as an updated version of SMAC benchmark for the evaluation of MARL algorithms. It compensates the lack of stochasticity in its previous version. In this section, we compare the performance of TAD-PPO with one of the state-of-the-art MA-PG representative methods MAPPO, on this new testbed for MARL. TAD-PPO could achieve comparable or even outperformance by a large margin in this stochastic setting compared with MAPPO (\cref{fig:exp_smacv2}).

 \begin{figure}[htb]
  \centering
    \includegraphics[width=13cm]{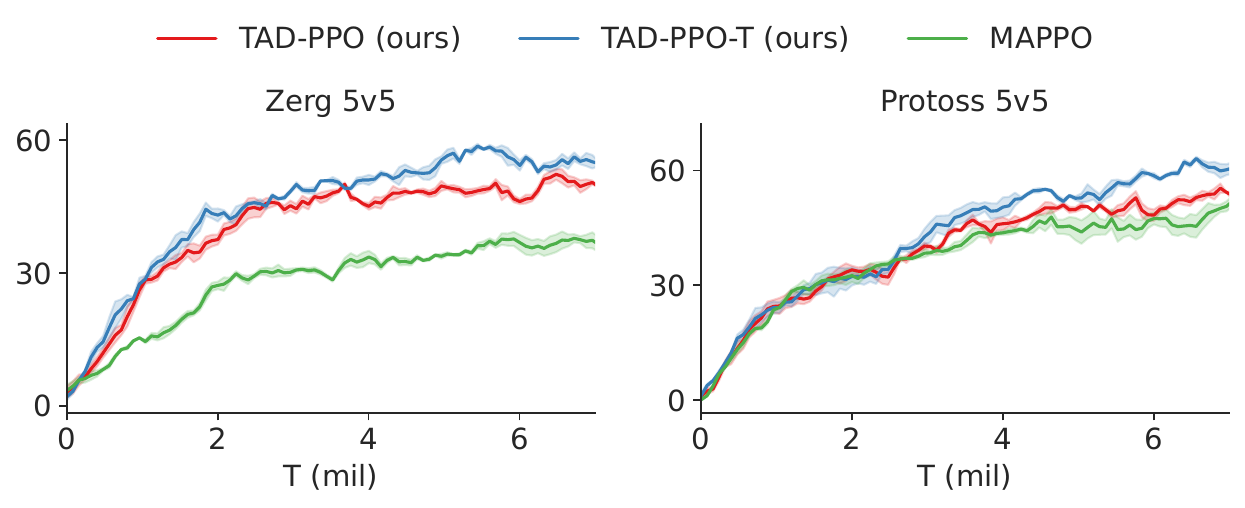}
  \caption{Comparison between TAD-PPO and MAPPO on SMAC v2 maps. SMAC v2 compensates the lack of stochasticity of SMAC benchmark. We present two representative results on SMAC v2. TAD-PPO outperforms MAPPO by a large margin on $\mathtt{Zerg 5v5}$, and it achieves dominant or comparable results on $\mathtt{Protoss 5v5}$ map.  }
  \label{fig:exp_smacv2}
\end{figure}

\section{Extended Didactic Example in Multi-Step Scenario} \label{app:hallway_didactic}
In this section, we present another didactic example of task \texttt{Hallway} (\cref{fig:hallway}). \texttt{Hallway} is a representative of multi-agent coordination tasks in multi-step scenarios compared with the single-step matrix game. This example aims to demonstrate that the state-of-the-art VD and MA-PG methods can be trapped at a suboptimal solution while TAD-PPO with the proposed TAD framework can get out of the suboptimal point and converge to the global optimal point.

\begin{figure}[H]
    \centering
    \begin{minipage}{0.48\textwidth}
      \begin{tikzpicture}
            \draw (0,0) rectangle (2,2);
            \foreach \i in {0,...,2} {
                \draw (\i+2,0) rectangle (\i+1+2,1);
                \draw (\i+2,1) rectangle (\i+1+2,2);
            }
            \draw (5,0) rectangle (7,2);
            
            \node at (4.5,2.5) {$s_0$};
            \node at (1,2.5) {$g_1$};
            \node at (6,2.5) {$g_2$};
            
            \node [fill=blue, opacity=0.5] at (1,1) {+8};
            \node [fill=yellow, opacity=0.5] at (6,1) {+5};
            
            \node  at (4.5,1.3) {\small A1};
            \node  at (4.5,0.3) {\small A2};
            
            \node at (3.5,-0.5) {Actions: Move Left / Move Right};
            \node at (3.5,-1) {Cost: -1 per move};
            \node at (3.5,-1.5) {Initializing at: A1, A2};
        \end{tikzpicture}
        \caption{Illustration of the Hallway Task. The task requires the coordination of two agents to reach the goal positions simultaneously. $g_1$ is a global optimal point with a cumulative reward of +5; $g_2$ is a suboptimal point with a cumulative reward of +4 (considering the movement cost).}
        \label{fig:hallway}
    \end{minipage}
    \hfill
    \begin{minipage}{0.48\textwidth}
        \includegraphics[width=\linewidth]{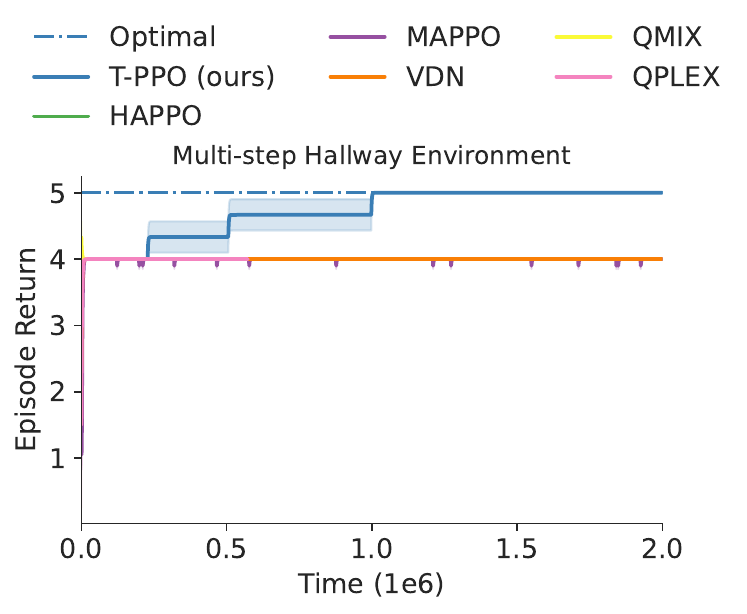}
        \caption{Performance comparisons of TAD-PPO with state-of-the-art VD and MA-PG baselines.}
        \label{fig:exp_hallway}
    \end{minipage}
    \label{fig:two_images}
\end{figure}

In task \texttt{Hallway}, two agents are initialized at $s_0$. At each time step, agents can observe the positions of both agents (full observation) and choose an action to move left or right at a movement cost of -1. There are two terminal states $g_1$ and $g_2$ in the hallway. Agents will win if they simultaneously reach the same terminal state (either $g_1$ or $g_2$). Specifically, agents get a shared reward of +8 if they both reach $g_1$ at the same time and +5 if both reach $g_2$. Otherwise, if one agent arrives at the terminal state earlier than the other, the team will not get the terminal reward as a result of miscoordination. Any of these cases will result in an episode termination. The length of the hallway is set to 5 with an episode limit of 100 to prevent agents from moving back and forth endlessly in one episode. TAD-PPO optimizes to the global optimum with an optimal return of 5 in this task while other VD algorithms and MA-PG algorithms get stuck in the local optimal solution with an inferior return of 4 (\cref{fig:exp_hallway}). Evaluation
results are averaged over three random seeds.

    
    
    
    

\section{Limitations and Discussions}
\label{app:limitations_and_discussions}
As promising avenues for future research, we envision the attention from community could be directed towards tackling the followings.

\paragraph{Increased inference time for optimality.} The transformed single-agent MDP trades a portion of action inference time for optimality, requiring the sequential inference for the actions of agents during training. Detailed training time analysis can be found at \cref{app:tadppo_running_time}. 

\paragraph{Action space-horizon tradeoff in the transformed MDP.} TAD framework shifts the complexity from the action space of multiple agents to time horizon, which theoretically keeps the hardness of problem unchanged. (Details and an analysis on the sample complexity of the transformed MDP can be found at \cref{subsec:complexity}.) In this regard, empirical design efforts in MARL could also shift from handling large action space to handling long time horizon in the transformed MDP.

\paragraph{Taking the task structure of multi-agent into design consideration.} TAD is a general framework that does not exploit the task structure of multiple agents. It would be benefitial for future work to do so. For example, TAD's transformation stage essentially does temporal credit assignment. \citet{wang2021towards} find that the credit assignment is crucial in multi-agent setting.  Future empirical work on TAD could exploit the multi-agent structure for specific design of multi-agent credit assignment. 

\paragraph{Ordering of agents.}  TAD currently requires ordering fixed during the learning process. Although different ordering of agents does not change the optimality guarantee, the design of ordering in empirical algorithms could have an impact on training efficiency.



\end{document}